\documentclass[journal,onecolumn,11pt]{IEEEtran}

\usepackage{amsmath}
\usepackage{amssymb}
\usepackage{mathtools} 
\usepackage{soul}
\usepackage{enumitem}
\usepackage[normalem]{ulem}
\usepackage{bm,dsfont}
\usepackage{xfrac}
\usepackage[hidelinks]{hyperref} 
\usepackage{multicol}
\usepackage{cancel}
\usepackage{makecell}
\usepackage{colortbl} 
\usepackage{float}
\usepackage{subcaption}
\usepackage{tikz}
\usetikzlibrary{arrows,automata}
\usepackage{color,soul}
\usepackage{graphicx}
\usepackage{comment}
\usepackage{colordvi}
\usepackage{bm} 


\usepackage{amsthm}
\newtheorem{theorem}{Theorem}
\newtheorem{definition}{Definition}
\newtheorem{proposition}{Proposition}
\newtheorem{lemma}{Lemma}
\newtheorem{corollary}{Corollary}

\providecommand{\customgenericname}{}
\newcommand{\newcustomtheorem}[2]{%
  \newenvironment{#1}[1]
  {%
   \renewcommand\customgenericname{#2}%
   \renewcommand\theinnercustomgeneric{##1}%
   \innercustomgeneric
  }
  {\endinnercustomgeneric}
}

\newcustomtheorem{customthm}{Theorem}
\newcustomtheorem{customlemma}{Lemma}
\newcustomtheorem{customprop}{Proposition}

\newtheoremstyle{abcd}
  {}
  {}
  {\itshape}
  {}
  {\bfseries}
  {.}
  {.5em}
  {}
\theoremstyle{abcd}

\newcommand{\ie}{\emph{i.e., }}

\hyphenation{op-tical net-works semi-conduc-tor}

\begin{document}
%
\title{Constrained Correlated Equilibria}
%
%
%

\author{Omar~Boufous, Rachid~El-Azouzi, Mikaël Touati, Eitan~Altman,~and Mustapha~Bouhtou
\thanks{O. Boufous is with Orange Labs, Châtillon, France and University of Avignon, Avignon, France. E-mail:{\tt~omar.boufous@alumni.univ-avignon.fr}.}
\thanks{R. El-Azouzi is with CERI/LIA, University of Avignon, Avignon, France. E-mail: {\tt~rachid.elazouzi@univ-avignon.fr}.}
\thanks{E. Altman is with the INRIA, Sophia Antipolis, France. E-mail: {\tt~eitan.altman@inria.fr}.}
\thanks{M. Touati and M. Bouhtou are with Orange Labs, Châtillon, France. E-mails:{\tt~ \{mikael.touati, mustapha.bouhtou\}@orange.com}.}
}

\maketitle

\begin{abstract}
\textcolor{black}{This paper introduces constrained correlated equilibrium, a solution concept combining correlation and \textcolor{black}{coupled} constraints in finite non-cooperative games.}
\textcolor{black}{In the general case of an arbitrary correlation device and coupled constraints in the extended game, we study the conditions for equilibrium. }
 In the particular case of constraints induced by a feasible set of probability distributions over action profiles, \textcolor{black}{we first} show that canonical correlation devices are sufficient to characterize the set of constrained correlated equilibrium distributions and provide conditions of their existence.
\textcolor{black}{Second, it is shown that} constrained correlated equilibria of the mixed extension of the game do not \textcolor{black}{lead to} additional equilibrium distributions. 
\textcolor{black}{Third,} we show that the constrained correlated equilibrium distributions may not belong to the polytope of correlated equilibrium distributions. \textcolor{black}{Finally, we illustrate these results through numerical examples.}
\end{abstract}

\begin{IEEEkeywords}
    \begin{center}
        Game theory, Solution concept, Correlated equilibrium, Coupled constraints, Generalized games.
    \end{center}
\end{IEEEkeywords}

%
\IEEEpeerreviewmaketitle

%
%
%
%

\setcounter{footnote}{0} 

\section{Introduction}\label{sec:introduction}




\setcounter{footnote}{0}  

Correlated equilibria \cite{AUMANN1974} have been introduced as a generalization of Nash equilibria with appealing game-theoretic and Bayesian foundations \cite{aumann1987correlated}.
A correlated equilibrium of a non-cooperative game is \textcolor{black}{a pure} Nash equilibrium of an extension, called extended game,  by a randomizing structure (also known as a correlation device \cite{forges2020correlated}).
In the extended game, a Nash equilibrium induces a probability distribution over action profiles, called correlated equilibrium distribution.
One of the most important results on correlated equilibria shows that canonical devices (those randomizing over action profiles) are necessary and sufficient to characterize the polytope of correlated equilibrium distributions \cite{aumann1987correlated}.
Another major result of this influential paper shows that correlated equilibrium distributions emerge from the Bayesian rationality of the players, thus providing Bayesian decision-theoretic foundations to the concept.
 However, if for theoretical or practical reasons, some strategies (in the original game or its extended version) are forbidden or unfeasible, the standard definition of correlated equilibrium is no longer adequate as it may characterize equilibrium strategies that include unfeasible profiles or even exclude certain profiles that one would intuitively be considered as equilibria.
In fact,  there are extended games with coupled constraints such that  
 a strategy profile is not a correlated equilibrium (\ie not a pure Nash equilibrium of the extended game), but for every player, any unilateral deviation from this profile is either not profitable or leads to an unfeasible profile.
This naturally raises the question of whether or not it can be considered as a constrained correlated equilibrium and how to define a relevant solution concept for correlation with constraints in non-cooperative games.

\textcolor{black}{
No game-theoretic solution has yet been proposed to answer these questions despite the long-standing interest in generalized Nash equilibria \cite{facchinei2007generalized}, correlated equilibria and the potential of coordination in engineering systems or applications with coupled constraints (as an example, \cite{kulkarni2017games} considers coordination among the decision-takers as a key development of energy systems). One of the motivations for studying correlated equilibria with coupled constraints is the importance of correlated equilibrium in incorporating some form of coordination between players. Indeed, correlated equilibria extend the set of Nash equilibria and brings Pareto improvements over Nash equilibria, but the existence of other equilibria also exacerbates the coordination problem.  Incorporating coupled constraints can enable efficient coordination between players and guarantee a certain target level of  social welfare. } This paper aims at addressing  this problem and proposes a first approach to bridge the gap between correlation and coupled constraints in games. 
 

The main contributions of this paper are as follows. 
We define a model of game with correlations and constraints and introduce a solution concept called constrained correlated equilibrium.  
We show several properties of this concept including relations to (unconstrained) correlated equilibrium.  
Furthermore, in the case of constraints on probability distributions over action profiles, we show that canonical devices are sufficient to characterize the set of constrained correlated equilibrium distributions of the game and that constrained correlated equilibria exist if the feasible set of distributions is compact and convex. 
Considering mixed strategies in the extended game,  we show that pure strategies are sufficient to generate all constrained correlated equilibrium distributions, thus showing that an additional independent randomization by the players is not necessary.   
Finally, we show numerical examples to illustrate the results of this paper with a particular focus on constraints guaranteeing some level of social welfare.

The paper is organized as follows. Section \ref{sec:relatedWork} presents the related work. Section \ref{sec:background} defines the model and gives a background on (unconstrained) correlated equilibria. 
Section \ref{sec:constrainedCorrelatedEquilibrium} defines the concept of constrained correlated equilibrium and shows some properties.
Section \ref{sec:constrainedProbabilities} considers constraints on probability distributions over action profiles, shows conditions of existence of constrained correlated equilibria in this case and studies the constrained correlated equilibrium distributions of the mixed extension of the game.
Section \ref{sec:illustration} shows numerical experiments.
Section \ref{sec:conclusions} concludes.

\section{Related Work}\label{sec:relatedWork}


Correlated equilibria have been defined by Aumann in \cite{AUMANN1974} and \cite{aumann1987correlated} showing among other important results that the set of correlated equilibrium distributions is a non-empty convex polytope as well as the connection between the concept and Bayesian rationality.
A second proof of existence and a generalization to infinite games have been proposed in \cite{hart1989existence}.
\textcolor{black}{Von Stengel and Forges} \cite{von2008extensive} extended the concept to extensive games and other generalizations of the concept to games with communications between the players can be found in \cite{forges2020correlated}. 
Furthermore, other variants of the concept are presented in \cite{brandenburger1992correlated} which considers information processing errors by the players and \cite{grant2022delegation} which allows for ambiguity in the correlation device.
Refinements using additional rationality conditions have been defined in \cite{dhillon1996perfect,myerson1984acceptable}. 
Correlated equilibria in stochastic games have been considered in \cite{solan2002correlated,solan2001characterization} and applications to systems and engineering problems can be found in \cite{altman2006correlated,eyni2021regret,xiao2016resource,chen2020correlated}.
In \cite{moulin_strategically_1978}, Moulin \emph{et al.} introduced coarse correlated equilibria as a relaxation of the concept using weaker stability conditions. 
Computational properties of correlated equilibria have been studied in \cite{stein2011correlated,papadimitriou_computing_2008,jiang2015polynomial}. 
Finally, considering learning in games,
various learning dynamics have been shown to converge to correlated equilibrium distributions \cite{foster_calibrated_1997,hart_adaptive_2005,hart2000simple,stoltz_learning_2007}.

Regarding constraints in non-cooperative games and generalized Nash equilibria, \cite{debreu1952} defines the concept of social equilibrium laying the foundations for the model of abstract economy studied in \cite{arrow1954existence}\cite{mckenzie1954equilibrium} where each player's set of feasible strategies may depend on the strategies of the other players. 
In the pioneering paper \cite{rosen_existence_1965}, Rosen considers the existence and uniqueness of equilibria in n-player concave games with shared or coupled constraints. 
See \cite{facchinei2007generalized,dutang2013existence,fischer2014generalized} for surveys on this topic 
\textcolor{black}{ (also known as generalized Nash equilibrium problem) including discussions on constraints in games, existence results and historical overviews. 
Individual constraints and their connection to shared constraints have been recently studied in \cite{Braouezec2023}.
Finally, from an application perspective, games with constraints (also called generalized games) have been used to study environmental problems and pollution \cite{krawczyk2005coupled}\cite{ krawczyk2000relaxation}, energy systems and smart grids \cite{kulkarni2017games}, electricity markets \cite{contreras2004numerical}, power allocation or computation offloading in wireless systems \cite{pang2008distributed}\cite{nowak2018generalized} and congestion in communication networks \cite{alpcan2002game}.}


\section{Correlated and generalized Nash equilibria}\label{sec:background}




Consider a finite non-cooperative game in normal form $G = (\mathcal{N}, (\mathcal{A}_i)_{i \in \mathcal{N}}, (u_i)_{i \in \mathcal{N}})$ where $\mathcal{N}$ is the set of players, $\mathcal{A}_i$ is the set of actions of player $i$ and
 $u_i: \mathcal{A}=\times_{i\in\mathcal{N}}\mathcal{A}_i \rightarrow \mathbb{R}$ is player $i$'s utility function such that her utility for the action profile $\bm{a} \in \mathcal{A}$ is $u_{i}(\bm{a})$ also denoted $ u_i(a_i,\bm{a}_{-i})$.
 The mixed extension of $G$ is the game 
 $\Delta G = (\mathcal{N}, (\Delta(\mathcal{A}_i))_{i \in \mathcal{N}}, (u_i)_{i \in \mathcal{N}})$ where $\Delta(\mathcal{A}_i) = \{\bm{p}\in \mathbb{R}^{|\mathcal{A}_i|}_+ \mid \sum_{a_i\in\mathcal{A}_i}\bm{p}(a_i) = 1\}$ is the set of probability distributions on $\mathcal{A}_i$  and $u_i: \times_{i\in\mathcal{N}}\Delta(\mathcal{A}_i) \rightarrow \mathbb{R}$ 
 is player $i$'s utility function\footnote{
    For the sake of simplicity, in this article, we use $u_i$ as a symbol for the utility function of player $i$ in any game, which one is used should be clear from the context. 
}
such that for any $\bm{p} = (\bm{p}_1, \ldots, \bm{p}_n) \in \times_{i\in\mathcal{N}}\Delta(\mathcal{A}_i)$, 
 $u_{i}(\bm{p}_i,\bm{p}_{-i}) = \sum_{\bm{a}\in\mathcal{A}}\prod_{i\in\mathcal{N}}\bm{p}_i(a_i)u_i(a_i,\bm{a}_{-i})$.
The extension of $u_i$ to the domain $\Delta(\mathcal{A})$ is the function $u_i:\Delta(\mathcal{A})\rightarrow \mathbb{R}$ such that for any $\bm{p}\in\Delta(\mathcal{A})$, $u_i(\bm{p}) = \sum_{\bm{a}\in\mathcal{A}}\bm{p}(\bm{a})u_i(\bm{a})$.



\subsection{{Correlated equilibrium}}
\label{subsec:CorrelatedEquilibrium}
A correlation device \cite{forges2020correlated} is a triplet $d = (\Omega, (\mathcal{P}_i)_{i \in \mathcal{N}}, \bm{q})$ where $\Omega$ is a set of outcomes, $\mathcal{P}_i$ a partition of $\Omega$ for player $i$ and $\bm{q}$ a probability distribution over $\Omega$. 
In this work, we assume that $\Omega$ is finite.
The pair $(G,d)$ defines a finite non-cooperative game in normal form $G_d = (\mathcal{N},(\mathcal{S}_{i,d})_{i\in\mathcal{N}},(u_i)_{i\in\mathcal{N}})$, called extended game, such that a strategy for player $i$ in $G_d$ is a $\mathcal{P}_i$-measurable mapping $\alpha_{i}: \Omega \rightarrow \mathcal{A}_{i}$ (where the measurability is w.r.t. the $\sigma$-algebra induced by the partition $\mathcal{P}_i$) and
\begin{equation}
    \mathcal{S}_{i, d} = \{ \alpha_i : \Omega \rightarrow \mathcal{A}_i \mid  \alpha_i \text{ is } \mathcal{P}_i\text{-measurable} \}
\end{equation}
In $G_d$, a strategy profile $\bm{\alpha} = (\alpha_i)_{i \in \mathcal{N}}$ is called correlated strategy profile (equivalently, correlated strategy $n$-tuple \cite{aumann1987correlated}) and the set of correlated strategy profiles is $\mathcal{S}_{d} = \times_{i \in \mathcal{N}} \mathcal{S}_{i, d}$.
The utility function $u_i:\mathcal{S}_d\rightarrow \mathbb{R}$ is defined such that, for any $\bm{\alpha}\in\mathcal{S}_{d}$,
\begin{equation}
    u_i(\alpha_i,\bm{\alpha}_{-i}) 
    = 
    \sum\limits_{\omega\in\Omega}\bm{q}(\omega)u_i(\alpha_i(\omega),\bm{\alpha}_{-i}(\omega))
\end{equation}

\noindent
The probability distribution of a correlated strategy profile $\bm{\alpha}$, denoted $\bm{p}_{\bm{\alpha}}$ is such that for any $\bm{a} \in \mathcal{A}$,
\begin{equation}
    \bm{p}_{{\bm{\alpha}}}(\bm{a}) = \sum_{\omega \in \Omega} \bm{q}(\omega) \mathds{1}_{\bm{\alpha}(\omega) = \bm{a}} \label{eq:probabilityDistribution}
\end{equation}
Furthermore, we have,
\begin{equation}
    u_i(\alpha_i,\bm{\alpha}_{-i}) 
    =
    \sum_{\bm{a}\in\mathcal{A}}\bm{p}_{\bm{\alpha}}(\bm{a})u_i(\bm{a}) 
\end{equation}
A standard interpretation of $G_d$ assumes that outcomes of $\Omega$ are drawn randomly according to the probability distribution $\bm{q}$ and, given $\omega\in\Omega$, player $i$ implementing strategy $\alpha_i\in\mathcal{S}_{i,d}$ observes her partition element ${P}_i(\omega)$ in $\mathcal{P}_i$ and plays the corresponding action $\alpha_i(\omega)$ in $\mathcal{A}_i$.

\begin{definition}[Correlated equilibrium]\label{defn:CE}
    A correlated equilibrium of $G$ is a pair $(d, \bm{\alpha}^*)$ where $d$ is a correlation device and $\bm{\alpha}^*$
    is a Nash equilibrium of $G_d$.
\end{definition} 
By definition of a Nash equilibrium, a correlated strategy profile $\bm{\alpha}^*\in\mathcal{S}_d$ is a correlated equilibrium if, for any $i \in \mathcal{N}$, for any $\alpha^\prime_{i} \in \mathcal{S}_{i, d}$,
\begin{equation}\label{eq:exantecondition} 
    \sum_{\omega \in \Omega} \bm{q}(\omega) \left[u_{i}({\alpha}_i^*(\omega), \bm{\alpha}_{-i}^*(\omega)) - u_{i}(\alpha^\prime_i(\omega), \bm{\alpha}^*_{-i}(\omega))\right] \geq 0
\end{equation}
The probability distribution $\bm{p}_{\bm{\alpha}^*}$ of a correlated equilibrium $\bm{\alpha}^*$ is called a correlated equilibrium distribution.

The original formulation of correlated equilibrium can be found in \cite{AUMANN1974} 
and equivalent ones in \cite{aumann1987correlated, forges2020correlated,brandenburger1987rationalizability,fudenberg1991game}. It is worth noting the interest and originality of the formulation used in \cite{aumann1987correlated} making the correlation device implicit (partitions being defined as preimages of actions by strategies in the correlated equilibrium strategy profile).
The equilibrium condition of Definition \ref{defn:CE} can equivalently \cite{forges2020correlated} be written for any
$i \in \mathcal{N}$, for any $a_{i} \in \mathcal{A}_{i}$ and for any $\omega \in \Omega$,
\begin{align}\label{eq:expostconditon}
    \sum_{\omega^{\prime} \in P_{i}(\omega)} \bm{q}(\omega^{\prime}) \left[u_{i}({\alpha}_i^*(\omega^{\prime}), \bm{\alpha}_{-i}^*(\omega^{\prime}))) - u_{i}(a_i, \bm{\alpha}^*_{-i}(\omega^{\prime}))\right] \geq 0 
\end{align}
In other words, it is sufficient to check the stability conditions \eqref{eq:exantecondition} with deviations $\alpha_i^\prime$ such that for any $\omega \in \Omega$,
\begin{align}\label{eq:partialDeviation}
    \alpha_i^\prime(\omega^\prime) 
    = 
    \begin{cases}
        a_i, & \text{if } \omega^\prime \in P_i(\omega)\\
        \alpha_i^*(\omega^\prime),      & \text{otherwise}
    \end{cases}
\end{align}

In this paper, we consider the formulation defining a correlated equilibrium as a Nash equilibrium, in which the correlation device is explicitly specified to facilitate the definition of the concept of constrained correlated equilibria while using a generalized device including constraints and replacing Nash equilibria by generalized Nash equilibria. 


A correlation device $d = (\Omega,(\mathcal{P}_i)_{i\in\mathcal{N}},\bm{q})$ is called canonical \cite{forges2020correlated} if $\Omega = \mathcal{A}$ and for any $i \in \mathcal{N}$, $\mathcal{P}_{i} = \mathcal{P}^c_{i}$ where the partition $\mathcal{P}^c_{i}$ is generated by $\mathcal{A}_i$ (\ie for any $\bm{a} \in \Omega$ and $\bm{a}^\prime \in \Omega,\bm{a}^\prime \in P_i(\bm{a})$ iff $a_i = a^\prime_i$). Throughout the paper, we denote by $d_c = (\mathcal{A}, (\mathcal{P}^c_i)_{i\in\mathcal{N}}, \bm{q}_c)$ a canonical correlation device.

For any correlated equilibrium $(d,\bm{\alpha}^*)$, the pair $(d_c,\bm{id})$ is a canonical correlated equilibrium with  $\bm{q}_c = \bm{p}_{\bm{\alpha}^*}$ and $\bm{id}:\mathcal{A}\rightarrow \mathcal{A}$ is the identity function\footnote{The identity function $\bm{id}$ implies that the players' choice is to play the recommended action when the canonical correlation device is used. 
} \cite{aumann1987correlated}. 
Thus, it is necessary and sufficient to consider the set of canonical correlation devices to describe the set of correlated equilibrium distributions defined as the set of probability distributions $\bm{p} \in \Delta(\mathcal{A})$ such that for any player $i \in \mathcal{N}$, for any function $\beta_i : {\mathcal{A}_i} \rightarrow {\mathcal{A}_i}$,
\begin{align}\label{eq:correlatedEquilibriumDistributions}
    \sum_{\bm{a} \in \mathcal{A}} \bm{p}(\bm{a}) [u_i(\bm{a}) - u_i(\beta_i(a_i), \bm{a}_{-i})] \geq 0 
\end{align}
which is equivalent for any player $i \in \mathcal{N}$, for any $a_i, a^\prime_i \in \mathcal{A}_i$,
\begin{align}\label{eq:equivalentCorrelatedEquilibriumDistributions}
    \sum_{\bm{a}_{-i} \in \mathcal{A}_{-i}} \bm{p}(\bm{a}) [u_i(a_i, \bm{a}_{-i}) - u_i(a_i^\prime, \bm{a}_{-i})] \geq 0 
\end{align}



The set of inequalities given in \eqref{eq:equivalentCorrelatedEquilibriumDistributions} \textcolor{black}{defines} the convex polytope of correlated equilibrium distributions of $G$, denoted $\mathcal{D}$.
Furthermore, the equivalent conditional formulation of these inequalities is such that for any $i \in \mathcal{N}$, for any $a_i \in \mathcal{A}_i$ such that $\bm{p}(a_i) > 0$ and for any $a^\prime_i \in \mathcal{A}_i$,
\begin{equation}
    \sum_{\bm{a}_{-i} \in \mathcal{A}_{-i}} \bm{p}(\bm{a}_{-i} \mid a_i) [u_i(a_i, \bm{a}_{-i}) - u_i(a_i^\prime, \bm{a}_{-i})] \geq 0 \label{eq:conditionalCanonical}
\end{equation}
leads to the interpretation of a correlated equilibrium distribution as a probability distribution over action profiles that may be used to recommend actions to the players and such that no player benefits (in expectation taken over the other players' recommendations) by deviating from her recommendation. More formally, a profile $\bm{a}$ is selected with probability $\bm{p}(\bm{a})$ and each player $i$ is recommended her component $a_i$ of the action profile. A player has no incentive to unilaterally deviate and play $a_i^\prime$ when she is recommended to play $a_i$.

Conditions \eqref{eq:correlatedEquilibriumDistributions}, \eqref{eq:equivalentCorrelatedEquilibriumDistributions} and \eqref{eq:conditionalCanonical} are commonly used as equivalent definitions of correlated equilibrium distributions in the literature (as well as definitions of correlated equilibria by considering that correlated strategy profiles inducing the same distributions are equivalent). 
However, it is worth noting that a recent work \cite{BACH202012} discusses the equivalence between the canonical definitions and the original one.
\medskip

\noindent \textbf{Example.}
In this example, we consider the game called Chicken \cite{forges2020correlated}, denoted $G$, shown in Figure \ref{fig:chickenGame} such that each player can choose to play Aggressive (action '$A$') or Peaceful (action '$P$').
\begin{figure}[H]
    \centering
    \scalebox{1.2}{
    \begin{tabular}{c|c|c}
        & \multicolumn{1}{c|}{$P$}  & $A$  \\ \hline
        $P$ & \multicolumn{1}{c|}{$(8, 8)$} & $(3, 10)$  \\ \hline
        $A$ & \multicolumn{1}{c|}{$(10, 3)$} & $(0, 0)$
    \end{tabular}}
  \caption{Game of Chicken.}
  \label{fig:chickenGame}
\end{figure}
The game has two pure Nash equilibria, $({P},{A})$ with payoffs $(3,10)$, $({A},{P})$ with payoffs $(10,3)$ and a mixed Nash equilibrium $((\sfrac{3}{5}\cdot {P},\sfrac{2}{5}\cdot {A}), (\sfrac{3}{5} \cdot {P}, \sfrac{2}{5} \cdot {A}))$ with payoffs $(6, 6)$.
Assume the correlation device $d = (\Omega, (\mathcal{P}_i)_{i \in \mathcal{N}}, \bm{q})$ such that $\Omega = \{ H, M, L\}$, $\mathcal{P}_1 = \{ \{H\}, \{M, L\}\}$, $\mathcal{P}_2 = \{ \{H, M\}, \{L\}\}$ and $\bm{q}(H) = \bm{q}(M) = \bm{q}(L) = \sfrac{1}{3}$.
Following the interpretation in terms of observations of the partition elements, in $G_d$, Player 1 observes if the outcome is $H$ or $H^c = \{M, L\}$ and player 2 observes if it is $L$ or $L^c = \{H, M\}$.
Figure \ref{fig:extendedChickenGame} shows the strategies and utilities in $G_d$ where the notation $H\mapsto P$ means that the corresponding strategy maps $H$ to $P$. 
\begin{figure}[H]
    \centering
    \scalebox{0.9}{
    \begin{tabular}{c|c|c|c|c}
                & $\bm{s}_{2}^1$ : \makecell{${L\textcolor{white}{^c} \mapsto P}$\\${L^c\mapsto P}$}
                & $\bm{s}_{2}^2$ : \makecell{${L\textcolor{white}{^c} \mapsto A}$\\${L^c \mapsto A}$}
                & $\bm{s}_{2}^3$ : \makecell{${L\textcolor{white}{^c} \mapsto A}$\\${L^c \mapsto P}$}
                & $\bm{s}_{2}^4$ : \makecell{${L\textcolor{white}{^c} \mapsto P}$\\${L^c \mapsto A}$}\\ \hline
        $\bm{s}_{1}^1$ : \makecell{${H\textcolor{white}{^c} \mapsto P}$\\${H^c \mapsto P}$} &       8, 8            &           \cellcolor{green!15}3, 10            &   6.33, 8.67        &      4.67, 9.33       \\ \hline
        $\bm{s}_{1}^2 $ :\makecell{${H\textcolor{white}{^c} \mapsto A}$\\${H^c \mapsto A}$} &   \cellcolor{green!15}10, 3           &           0, 0           &       6.67, 2        &     3.33, 1 \\ \hline
        $\bm{s}_{1}^3 $ :\makecell{${H\textcolor{white}{^c} \mapsto A}$\\${H^c \mapsto P}$} &   8.67, 6.33        &       2, 6.67                &       \cellcolor{green!15}7, 7             &       3.67, 6        \\ \hline
        $\bm{s}_{1}^4 $ :\makecell{${H\textcolor{white}{^c} \mapsto P}$\\${H^c \mapsto A}$}  &   9.33, 4.67        &       1, 3.33               &     6, 3.67          &     4.33, 4.33        \\
    \end{tabular}}
  \caption{Extension $G_d$ of the game of Chicken by the correlation device $d$.}
  \label{fig:extendedChickenGame}
\end{figure}
$G_d$ has three (pure) Nash equilibria (shown in green in Figure  \ref{fig:extendedChickenGame}), $(s_1^2,s_2^1)$, $(s_1^1,s_2^2)$ and $(s_1^3,s_2^3)$.
By definition, these are correlated equilibria of $G$, each inducing a correlated equilibrium distribution. 
In the first two equilibria, each player plays a constant function inducing correlated equilibrium distributions such that $(P,A)$ or $(A,P)$ are played with probability one, each corresponding to the pure Nash equilibria $(P,A)$ and $(A,P)$ of $G$. 
\begin{figure}[]
  \begin{subfigure}[t]{0.5\textwidth}
    \centering
    \includegraphics[scale=0.087]{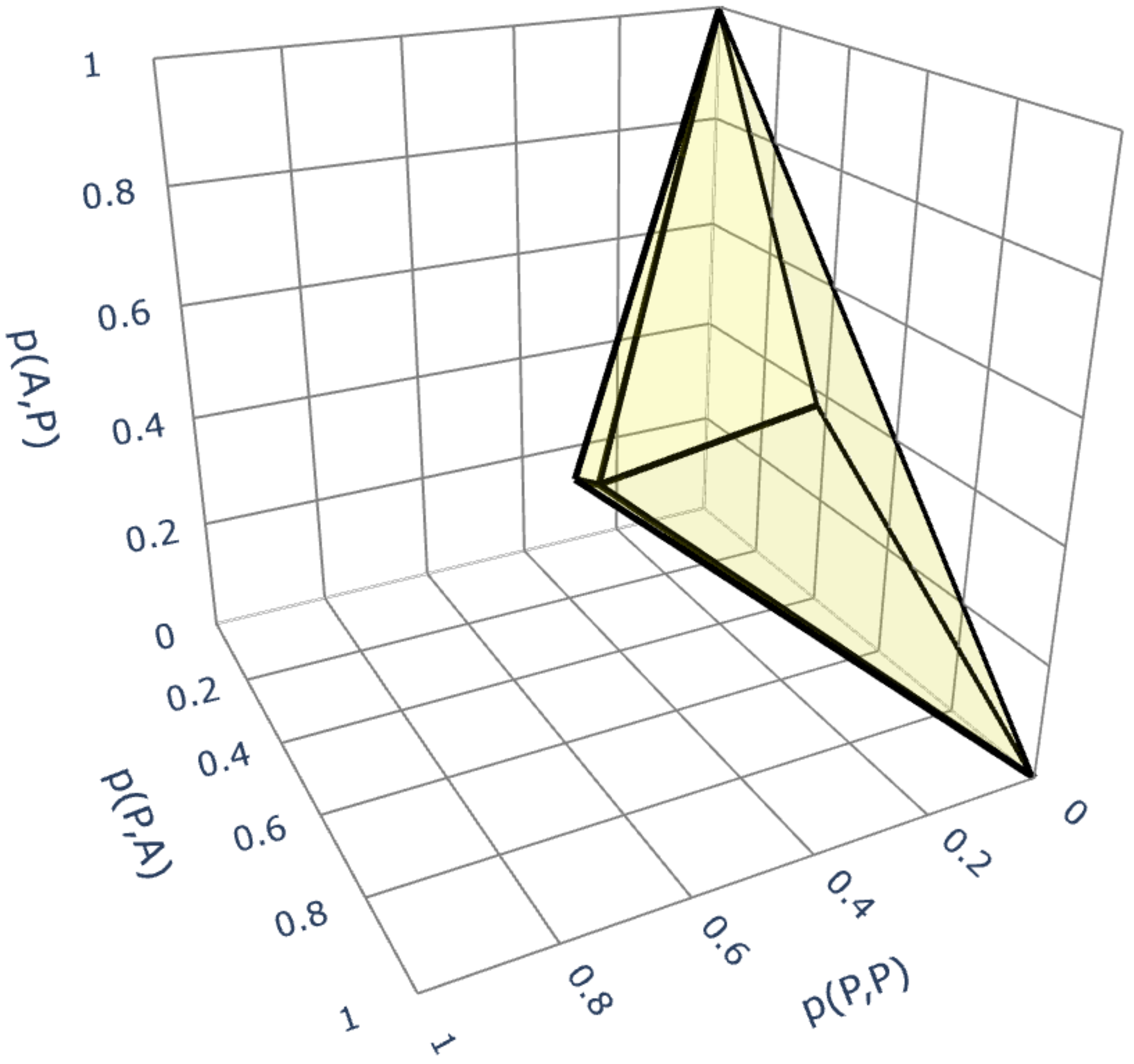}
    \caption{Probability distributions}
    \vspace{4ex}
  \end{subfigure}
  \begin{subfigure}[t]{0.5\textwidth}
    \centering
    \includegraphics[scale=0.34]{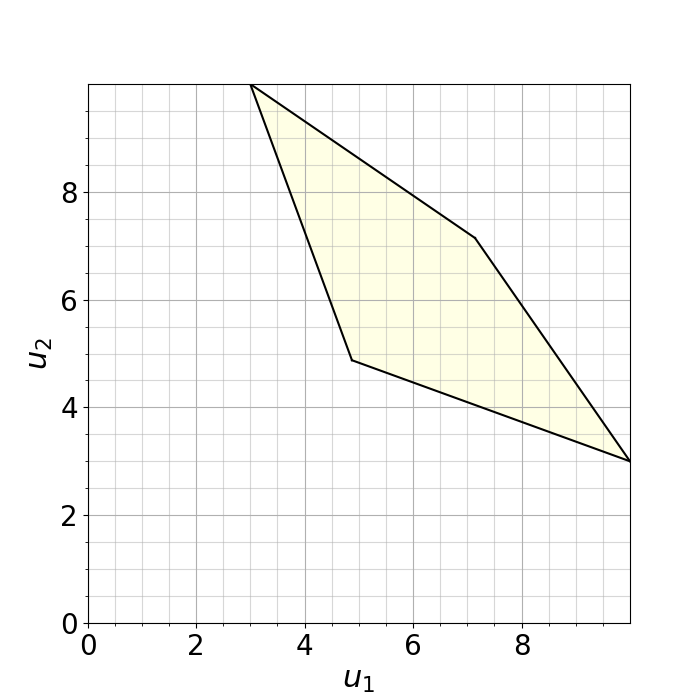}
    \caption{Utilities}
  \end{subfigure}
  \caption{(a) Polytope of correlated equilibrium distributions $\mathcal{D}$ of the game of Chicken and (b) corresponding set of pairs of utilities.}
  \label{fig:polytope}
\end{figure}
The set of correlated equilibrium distributions of $G$ is the convex polytope $\mathcal{D}$ defined by the following set of inequalities,
\begin{equation}\label{eq:polytopeofCEChicken}
    \begin{cases}
        &3\bm{p}(P,A)\geq 2\bm{p}(P,P)\\
        &2\bm{p}(A,P)\geq 3\bm{p}(P,A)\\ 
        &3\bm{p}(A,P)\geq 2\bm{p}(P,P)\\
        &2\bm{p}(P,A)\geq 3\bm{p}(A,A)\\
        &\bm{p}(A,A)+\bm{p}(P,A)+\bm{p}(A,P)+\bm{p}(P,P) = 1\\
        &\bm{p}(A,A)\geq 0, \bm{p}(P,A)\geq 0, \bm{p}(A,P)\geq, \bm{p}(P,P)\geq 0
    \end{cases}
\end{equation}
where $\bm{p}(X,Y)$ is a notation for the probability of the action profile $(X,Y)$.
Figure \ref{fig:polytope} (a) shows the set of correlated equilibrium distributions $\mathcal{D}$ and the corresponding set of utilities, as  shown in Figure \ref{fig:polytope} (b).
\textcolor{black}{Furthermore, the set of correlated equilibria $\mathcal{D}$ contains Nash equilibria that all lie on its boundary \cite{nau_geometry_2004}.
In terms of utilities, the set of correlated equilibrium utilities contains the convex hull of the set of Nash equilibrium utilities.}\\

\textcolor{black}{To conclude this section, note that correlation in games and correlated equilibria have been defined, analyzed and developed using  extended games introducing correlation as a strategic opportunity for the players (allowing them to map outcomes of the sample space to pure strategies of the original game).
A well-known result \cite{aumann1987correlated} showing that it is sufficient to focus on the so-called canonical devices has \textcolor{black}{drawn} attention to correlated equilibrium distributions, often called correlated equilibria in the literature.
In this paper, \textcolor{black}{due to the strategic aspect inherent to the concept of correlated equilibrium}, we adopt the original perspective (as done in works generalizing correlated equilibria such as \cite{brandenburger1992correlated, von2008extensive}) considering extended games and correlated equilibria as strategy profile.  
}

\subsection{{Generalized Nash equilibrium}}

A generalized game \cite{fischer2014generalized,dutang2013existence,facchinei2007generalized}, 
is a tuple 
$ 
    G' 
    = 
    (
        \mathcal{N},
        (\mathcal{S}_i)_{i\in \mathcal{N}},
        (\mathcal{K}_i)_{i\in \mathcal{N}},
        (u_i)_{i\in \mathcal{N}})
$
where 
$\mathcal{N}$, 
$\mathcal{S}_i$ and 
$u_i:\mathcal{S}\rightarrow\mathbb{R}$ are the standard components of a non-cooperative game in normal form and $\mathcal{K}_i: \mathcal{S}_{-i} \rightarrow 2^{\mathcal{S}_i}$ is a function, called constraint correspondence, such that for any $\bm{x}_{-i} \in \mathcal{S}_{-i}$, $\mathcal{K}_i(\bm{x}_{-i})\subseteq \mathcal{S}_i$ is the set of feasible strategies of player $i$.
The constraint correspondence defines player $i$'s set of feasible strategies for each profile $\bm{x}_{-i} \in \mathcal{S}_{-i}$ played by the other players.
If the strategy of player $i$ does not depend on the profile, then $\mathcal{K}_i(\bm{x}_{-i})=\mathcal{S}_i$. 

\noindent 
In this work, we assume that there is a subset of strategy profiles $\mathcal{R}\subseteq \mathcal{S}$, usually called coupled constraint set \cite{rosen_existence_1965}, such that,
\begin{equation}
    \mathcal{K}_i(\bm{x}_{-i})
    =
    \{x_i\in\mathcal{S}_i \mid (x_i,\bm{x}_{-i})\in\mathcal{R}\}
    \label{eq:constraintCorrespondance}
\end{equation}
Note that $\bm{x} \in \mathcal{R}$ if and only if for any $i\in\mathcal{N}$, $x_i \in \mathcal{K}_i(\bm{x}_{-i})$.
\begin{definition}[Generalized Nash equilibrium]\label{defn:GNE}
    A generalized Nash equilibrium of the generalized game $G'$ is a strategy profile $\bm{x}^{*}\in\mathcal{S}$ such that, for any $i \in \mathcal{N}$,
    \begin{equation}
        x_i^{*} \in 
        \underset{x_i \in \mathcal{K}_i(\bm{x}_{-i}^{*})}{\arg \max } u_i(x_i, \bm{x}_{-i}^{*}) 
    \end{equation}
\end{definition}
Equivalently, 
$\bm{x}^*$ is a generalized Nash equilibrium if and only if $\bm{x}^{*} \in \mathcal{R}$ and for any $i\in \mathcal{N}$, 
for any $x^\prime_i \in \mathcal{S}_i$ such that $(x^\prime_i, \bm{x}^*_{-i}) \in \mathcal{R}$,
    \begin{equation}\label{eq:GNE}
            u_i(x^*_i, \bm{x}_{-i}^{*}) \geq u_i(x^\prime_i, \bm{x}_{-i}^{*})
    \end{equation}



Studies on generalized games typically assume that $\mathcal{R}$ is a continuous or convex subset of a Euclidean space $\mathbb{R}^m$ for some $m \in \mathbb{N}$.
In this work, we consider only finite games such that for any player $i$, $\mathcal{S}_i$ is finite, implying the finiteness of $\mathcal{R}$.

\textcolor{black}{
It may be argued that when playing the game, players can choose strategies resulting in an unfeasible profile. 
This problem and related ones have been studied in the literature (see \cite{Braouezec2023} and references therein) and typical answers involve agreements, self-restrictions or external enforcement such as regulation. 
In this paper, we do not consider this problem and follow the standard perspective on generalized games, defining the model (including constraints), identifying a relevant solution concept (taking constraints into account) and computing or characterizing the strategy profiles satisfying the equilibrium conditions.  
}

\section{Constrained correlated equilibrium}\label{sec:constrainedCorrelatedEquilibrium}


\textcolor{black}{
This section defines the concept of constrained correlated equilibrium with respect to a correlation device $d$. 
Particularly, we introduce constraints in the extended game $G_d$ (see Section \ref{subsec:CorrelatedEquilibrium}) and study the equilibrium conditions as well as some consequences. 
These results will be used in Section \ref{sec:constrainedProbabilities} focusing on the equilibria of a collection of extended games with constraints induced by a feasible set of probability distributions over action profiles, which is particularly relevant with respect to systems and applications.}

Let $G=(\mathcal{N},(\mathcal{S}_{i})_{i\in\mathcal{N}}, (u_i)_{i\in\mathcal{N}})$ be a non-cooperative game in normal form and let $G_d = (\mathcal{N},(\mathcal{S}_{i,d})_{i\in\mathcal{N}}, (u_i)_{i\in\mathcal{N}})$ be its extension by a correlation device $d$. 
Furthermore, let $\mathcal{R}_d \subseteq \mathcal{S}_{d}$ be a coupled constraint set in $G_d$ inducing the generalized game
$ 
    G_d' 
    = 
    (
        \mathcal{N},
        (\mathcal{S}_{i,d})_{i\in \mathcal{N}},
        (\mathcal{K}_{i,d})_{i\in \mathcal{N}},
        (u_i)_{i\in \mathcal{N}})
$
as defined in Section \ref{sec:background}.
By definition of the constraint correspondences with coupled constraint, for any correlated strategy profile $\bm{\alpha}_{-i}\in\mathcal{S}_{-i, d}$, the set of feasible strategies of player $i$ is,
\begin{equation}\label{eq:definitionKi}
    \mathcal{K}_{i,d}(\bm{\alpha}_{-i}) 
    =
    \{\alpha_i \in \mathcal{S}_{i,d} \mid (\alpha_i , \bm{\alpha}_{-i}) \in \mathcal{R}_d \} 
\end{equation}
As in the case without constraints, in the extended game with coupled constraint $G_d'$, the probability distribution of a correlated strategy profile $\bm{\alpha}$ is denoted $\bm{p}_{\bm{\alpha}}$.
We define a constrained correlated equilibrium of $G$ as a generalized equilibrium of an extension $G_d$ with corresponding coupled constraint $\mathcal{R}_d$ such that,
\begin{definition}[Constrained Correlated Equilibrium]\label{defn:constrainedCE}
            A constrained correlated equilibrium of $G$ is a triplet $(d, \mathcal{R}_d, \bm{\alpha}^*)$ where $d$ is a correlation device, $\mathcal{R}_d$ is a coupled constraint set in $G_d$ and $\bm{\alpha}^*$ is a correlated strategy profile such that, $\bm{\alpha}^*\in \mathcal{R}_d$ and 
            for any $i\in\mathcal{N}$, 
            for any $\alpha_i^\prime \in \mathcal{S}_{i,d}$ such that $(\alpha_i^\prime, \bm{\alpha}^*_{-i}) \in \mathcal{R}_d$,
            \begin{align}\label{eq:constrainedCorrelatedEquilibriumConditions}
                \sum\limits_{\omega \in \Omega} \bm{q}\left(\omega\right) 
                \left[ 
                    u_{i}\left({\alpha}_i^*\left(\omega\right), \bm{\alpha}_{-i}^*\left(\omega\right)\right) 
                    - 
                    u_{i}\left(\alpha^\prime_{i}(\omega), \bm{\alpha}^*_{-i}(\omega)\right) 
                \right] \geq 0
            \end{align}
        \end{definition}

It can be easily checked that $(d, \mathcal{R}_d, \bm{\alpha}^*)$ is a constrained  correlated equilibrium  if only if $\bm{\alpha}^*$ is a generalized Nash equilibrium of $G^\prime_d$, \ie
for any $i \in \mathcal{N}$,
\begin{equation}
    \alpha_i^{*} \in 
    \underset{\alpha_i \in \mathcal{K}_{i,d}(\bm{\alpha}_{-i}^{*})}{\arg \max } u_i(\alpha_i, \bm{\alpha}_{-i}^{*}).
\end{equation}



The probability distribution of a  constrained correlated equilibrium with strategy profile $\bm{\alpha}^*$ is denoted $\bm{p}_{\bm{\alpha}^*}$ and called constrained correlated equilibrium distribution of $G$.
Furthermore, if for some device $d$, $\mathcal{R}_d = \mathcal{S}_d$, then, for any player $i$, $\mathcal{K}_{i,d}(\bm{\alpha}_{-i}) = \mathcal{S}_{i,d}$ and a constrained correlated equilibrium strategy profile (w.r.t. $d$) is a correlated equilibrium strategy profile.

In spite of its relevance, we do not use the term generalized correlated equilibrium as it is used to refer to the solution concept developed in \cite{brandenburger1992correlated} considering general information structures and in \cite{forgo2010generalization} to refer to a correlation scheme generalizing correlated equilibria in finite games.

The following example illustrates the concept of constrained correlated equilibrium.
\medskip

\noindent \textbf{Example.} 
Consider two drivers (players) arriving at an intersection. 
Each can either cross the intersection or wait. 
The utility for waiting is 0 (whatever the action played by the other player), if both cross a collision occurs, each player having utility $-1$ for this outcome and if only one player goes, she has utility $+1$.
Figure \ref{fig:intersectionGame} shows the corresponding traffic intersection game \cite{brandenburg2022combinatorics} \cite{owen2013game}.
This game has two pure Nash equilibria $(Wait, Go)$ and $(Go, Wait)$ and one mixed Nash equilibrium $((\sfrac{1}{2}\cdot Wait, \sfrac{1}{2}\cdot Go), (\sfrac{1}{2} \cdot Wait, \sfrac{1}{2} \cdot Go))$.
\begin{figure}[H]
    \centering
    \scalebox{1.2}{
        \begin{tabular}{c|c|c}
                 & \multicolumn{1}{c|}{$Wait$}  & $Go$  \\ \hline
            $Wait$ & \multicolumn{1}{c|}{$(0, 0)$} & $(0, 1)$  \\ \hline
            $Go$ & \multicolumn{1}{c|}{$(1, 0)$} & $(-1, -1)$
        \end{tabular}}
    \caption{Intersection game.}
    \label{fig:intersectionGame}
\end{figure}
Assume traffic lights sending a signal in $\{green,red\}$ to each player. 
For the sake of simplicity, these signals are denoted $g$ and $r$ and $(g,r)$ is the pair of signals such that the first player observes $green$ and the second one $red$. 
We assume traffic lights such that the pairs $(green,green)$ and $(red,red)$ do not occur.
Furthermore, assume that players must follow driving rules such that if the received signal is $green$ the player must cross, otherwise, she waits.\\
Following \cite{brandenburg2022combinatorics}, such traffic lights can be modelled by a correlation device $d = (\Omega, (\mathcal{P}_i)_{i \in \mathcal{N}}, \bm{q})$ 
such that the set of outcomes is \textcolor{black}{$\Omega = \{ (r,g), (g,r)\}$, $\mathcal{P}_1 = \{ \{(r,g)\}, \{(g,r) \}\}$, $\mathcal{P}_2 = \{ \{(r,g)\}, \{(g,r)\}\}$ and 
$\bm{q} = (\sfrac{1}{2}, \sfrac{1}{2})$.}
Furthermore, given the extension of the traffic intersection game by $d$\footnote{
    By definition, $\mathcal{S}_d = \mathcal{S}_{1,d}\times \mathcal{S}_{2,d}$ is the set of pairs of functions from $\Omega=\{(r,g),(g,r)\}$ to $\{Go,Wait\}$, each measurable w.r.t. the corresponding partition.
}, driving rules \textcolor{black}{can be modelled by} a coupled constraint set $\mathcal{R}_d$ such that,

\begin{equation}\label{eq:Specific_behavior_traffic_lights}
    \mathcal{R}_{d} 
    = 
    \left\{  
        (\alpha_1, \alpha_2) \in \mathcal{S}_{d} 
        \mid  
        \alpha_1(r,g) = \alpha_2(g,r) = Wait, \alpha_1(g,r) = \alpha_2(r,g) = Go 
    \right\}
\end{equation}
where $\alpha_i(X, Y)$ is a simplified notation for $\alpha_i((X, Y))$.
\textcolor{black}{This constraint set} being a singleton, the triplet $(d,\mathcal{R}_{d},\bm{\alpha}^*)$ such that $\bm{\alpha}^*(r,g) = (Wait,Go)$ and $\bm{\alpha}^*(g,r) = (Go,Wait)$ is the only constrained correlated equilibrium for $\mathcal{R}_{d}$ \textcolor{black}{($\bm{\alpha}^*$ is feasible and no unilateral deviation induces a feasible profile)}.
\textcolor{black}{
Now, assume that the objective of the constraint is not to enforce a given behavior but to avoid collisions (\ie models relaxed driving rules such that "drivers must self-organize w.r.t. signals but cannot both cross or stop").
Then, the coupled constraint set \eqref{eq:Specific_behavior_traffic_lights} is not necessary anymore and it is sufficient to require that the players choose a correlated strategy profile $\bm{\alpha}$ such that the induced distribution $\bm{p}_{\bm{\alpha}}$ satisfies $\bm{p}_{\bm{\alpha}}(Go,Go)=0$ and $\bm{p}_{\bm{\alpha}}(Wait,Wait)=0$,
}
\textcolor{black}{
\begin{equation}\label{eq:No_collision_behavior_traffic_lights}
    \mathcal{R}'_{d} 
    = 
    \left\{  
        \bm{\alpha}\in \mathcal{S}_{d} 
        \mid  
        \bm{p}_{\bm{\alpha}}(Go,Go)=\bm{p}_{\bm{\alpha}}(Wait,Wait)=0
    \right\}
\end{equation}
}
\textcolor{black}{This coupled constraints set contains a second constrained correlated equilibrium strategy profile $\bm{\delta}^*$ such that $\bm{\delta}^*(r,g) = (Go,Wait)$ and $\bm{\delta}^*(g,r) = (Wait,Go)$, \ie ("Go if red, Wait if green").} 
Both constrained equilibrium strategy profiles $\bm{\alpha}^*$ and $\bm{\delta}^*$ induce the same probability distribution such that $\bm{p}_{\bm{\alpha}^*}(Go,Wait) = \bm{p}_{\bm{\alpha}^*}(Wait,Go) = \sfrac{1}{2}$. We observe that $\bm{\alpha}^*$ is the strategy profile of the constrained correlated equilibria $(d,\mathcal{R}^\prime_{d},\bm{\alpha}^*)$ and $(d,\mathcal{R}_{d}, \bm{\alpha}^*)$ where $\mathcal{R}_{d} \subseteq \mathcal{R}^\prime_{d}$. 
The next proposition shows that this result holds for any finite non-cooperative game.



\begin{proposition}\label{prop:robustnessToRestrictions}
    Let $G$ be a finite non-cooperative game, $d$ a correlation device. 
    Furthermore, let $\mathcal{R}_d$ and $\mathcal{R}^\prime_d$ be coupled constraint sets such that $\mathcal{R}_d\subseteq \mathcal{R}_d^\prime$.
    If $(d,\mathcal{R}^\prime_d,\bm{\alpha}^*)$ is a constrained correlated equilibrium and $\bm{\alpha}^* \in \mathcal{R}_d$, then $(d,\mathcal{R}_d,\bm{\alpha}^*)$ is a constrained correlated equilibrium.
\end{proposition}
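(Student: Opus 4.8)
The plan is to unwind Definition~\ref{defn:constrainedCE} for both triplets and observe that passing from $\mathcal{R}_d^\prime$ to the smaller set $\mathcal{R}_d$ only shrinks the collection of deviations against which the stability inequalities must be checked, while feasibility of $\bm{\alpha}^*$ itself is supplied as a hypothesis. So the argument is essentially a monotonicity observation and requires no computation.

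Concretely, I would first record what must be verified: by Definition~\ref{defn:constrainedCE}, $(d,\mathcal{R}_d,\bm{\alpha}^*)$ is a constrained correlated equilibrium if and only if (i)~$\bm{\alpha}^*\in\mathcal{R}_d$ and (ii)~for every $i\in\mathcal{N}$ and every $\alpha_i^\prime\in\mathcal{S}_{i,d}$ with $(\alpha_i^\prime,\bm{\alpha}^*_{-i})\in\mathcal{R}_d$, the inequality \eqref{eq:constrainedCorrelatedEquilibriumConditions} holds. Condition~(i) is exactly the standing hypothesis $\bm{\alpha}^*\in\mathcal{R}_d$, so nothing is needed there.

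For condition~(ii), fix an arbitrary player $i\in\mathcal{N}$ and an arbitrary $\alpha_i^\prime\in\mathcal{S}_{i,d}$ such that $(\alpha_i^\prime,\bm{\alpha}^*_{-i})\in\mathcal{R}_d$. Since $\mathcal{R}_d\subseteq\mathcal{R}_d^\prime$, we get $(\alpha_i^\prime,\bm{\alpha}^*_{-i})\in\mathcal{R}_d^\prime$, i.e.\ $\alpha_i^\prime$ is also a feasible deviation for player $i$ in the generalized game associated with $\mathcal{R}_d^\prime$. Because $(d,\mathcal{R}_d^\prime,\bm{\alpha}^*)$ is assumed to be a constrained correlated equilibrium, inequality \eqref{eq:constrainedCorrelatedEquilibriumConditions} holds for this $i$ and this $\alpha_i^\prime$. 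As $i$ and $\alpha_i^\prime$ were arbitrary (subject to feasibility w.r.t.\ $\mathcal{R}_d$), condition~(ii) is established, and hence $(d,\mathcal{R}_d,\bm{\alpha}^*)$ is a constrained correlated equilibrium.

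There is no real obstacle here; the only point worth flagging is why the extra hypothesis $\bm{\alpha}^*\in\mathcal{R}_d$ cannot be dropped: shrinking the constraint set preserves the stability inequalities automatically, but it need not preserve feasibility of the profile itself, so that clause must be assumed separately. I would close by noting that this is exactly the phenomenon illustrated by the intersection-game example, where $\bm{\alpha}^*$ remains feasible and optimal for the tighter set $\mathcal{R}_d\subseteq\mathcal{R}_d^\prime$.
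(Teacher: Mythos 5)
Your proof is correct and is essentially the paper's argument: both reduce to the observation that any deviation feasible for $\mathcal{R}_d$ is feasible for $\mathcal{R}_d^\prime$ (since $\mathcal{R}_d\subseteq\mathcal{R}_d^\prime$), so the stability inequalities are inherited, with feasibility of $\bm{\alpha}^*$ supplied by hypothesis. The only difference is cosmetic: the paper phrases this as a proof by contradiction, whereas your direct formulation is cleaner.
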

\begin{proof}[\textbf{Proof of Proposition \ref{prop:robustnessToRestrictions}}]
We prove the result by contradiction. Assume $\mathcal{R}_d$ and $\mathcal{R}^\prime_{d}$ are two feasible sets of correlated strategies such that $\mathcal{R}_d \subseteq \mathcal{R}^\prime_d$ and $(d,\mathcal{R}^\prime_d,\bm{\alpha}^*)$ is a constrained correlated equilibrium. Suppose that $\bm{\alpha}^* \in \mathcal{R}_d$ and $(d,\mathcal{R}_d,\bm{\alpha}^*)$ is not a constrained correlated equilibrium.\\
Since $(d, \mathcal{R}^\prime_d, \bm{\alpha}^*)$ is a constrained correlated equilibrium, we have for any $i \in \mathcal{N}$, for any $\alpha_i^\prime \in \mathcal{K}^\prime_{i}(\bm{\alpha}^*_{-i})$,
    \begin{equation} \label{eq:stabilityCondition}
        \sum\limits_{\omega \in \Omega} \bm{q}\left(\omega\right) 
    \left[ 
        u_{i}\left({\alpha}_i^*\left(\omega\right), \bm{\alpha}_{-i}^*\left(\omega\right)\right) 
        - 
        u_{i}\left(\alpha^\prime_{i}(\omega), \bm{\alpha}^*_{-i}(\omega)\right) 
    \right] \geq 0
    \end{equation}
Let $\alpha_i^\prime \in \mathcal{K}_{i,d}(\bm{\alpha}^*_{-i})$. By definition of the constraint correspondence $\mathcal{K}_{i,d}$, $(\alpha^\prime_i, \bm{\alpha}^*_{-i}) \in \mathcal{R}_d$. Since, $\mathcal{R}_d \subseteq \mathcal{R}^\prime_d$, we also have $(\alpha_i^\prime, \bm{\alpha}^*_{-i}) \in \mathcal{R}^\prime_d$. Hence, $\alpha^\prime_i \in \mathcal{K}_{i,d}(\bm{\alpha}^*_{-i})$ implies $\alpha^\prime_i \in \mathcal{K}_{i,d}^\prime(\bm{\alpha}^*_{-i})$.
Equation \eqref{eq:stabilityCondition} then becomes for any $i \in \mathcal{N}$, for any $\alpha_i^\prime \in \mathcal{K}_{i,d}(\bm{\alpha}^*_{-i})$,
    \begin{equation}
    \sum\limits_{\omega \in \Omega} \bm{q}\left(\omega\right) 
    \left[ 
        u_{i}\left({\alpha}_i^*\left(\omega\right), \bm{\alpha}_{-i}^*\left(\omega\right)\right) 
        - 
        u_{i}\left(\alpha^\prime_{i}(\omega), \bm{\alpha}^*_{-i}(\omega)\right) 
    \right] \geq 0
    \end{equation}
Furthermore, by assumption, $\bm{\alpha}^* \in \mathcal{R}_d$. Thus, the triplet $(d, \mathcal{R}_d, \bm{\alpha}^*)$ is a constrained correlated equilibrium, which is a contradiction.
\end{proof}
\textcolor{black}{As a consequence, the following corollary shows that any correlated equilibrium satisfying the constraints is a constrained correlated equilibrium.}



\begin{corollary}
\label{prop:CEinCCE}
    Let $G$ be a finite non-cooperative game, $d$ a correlation device and $\mathcal{R}_d$ a coupled constraint set.
    Let $(d,\bm{\alpha}^*)$ be a correlated equilibrium. 
    If $\bm{\alpha}^* \in \mathcal{R}_d$, then $(d,\mathcal{R}_d,\bm{\alpha}^*)$ is a constrained correlated equilibrium.
\end{corollary}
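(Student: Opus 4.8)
The plan is to derive Corollary \ref{prop:CEinCCE} directly from Proposition \ref{prop:robustnessToRestrictions} by choosing the larger constraint set to be the trivial (unconstrained) one. Specifically, set $\mathcal{R}_d^\prime = \mathcal{S}_d$. Then trivially $\mathcal{R}_d \subseteq \mathcal{R}_d^\prime$ since $\mathcal{R}_d \subseteq \mathcal{S}_d$ by definition of a coupled constraint set.

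Next I would observe that, as noted in the text immediately after Definition \ref{defn:constrainedCE}, when $\mathcal{R}_d^\prime = \mathcal{S}_d$ we have $\mathcal{K}_{i,d}^\prime(\bm{\alpha}_{-i}) = \mathcal{S}_{i,d}$ for every player $i$, so the stability conditions \eqref{eq:constrainedCorrelatedEquilibriumConditions} for the triplet $(d,\mathcal{S}_d,\bm{\alpha}^*)$ are exactly the unconstrained correlated equilibrium conditions \eqref{eq:exantecondition} in Definition \ref{defn:CE}. Hence, since $(d,\bm{\alpha}^*)$ is a correlated equilibrium by hypothesis and $\bm{\alpha}^* \in \mathcal{S}_d = \mathcal{R}_d^\prime$, the triplet $(d,\mathcal{R}_d^\prime,\bm{\alpha}^*) = (d,\mathcal{S}_d,\bm{\alpha}^*)$ is a constrained correlated equilibrium.

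Finally, the hypothesis $\bm{\alpha}^* \in \mathcal{R}_d$ lets me apply Proposition \ref{prop:robustnessToRestrictions} with this choice of $\mathcal{R}_d$ and $\mathcal{R}_d^\prime$, which immediately yields that $(d,\mathcal{R}_d,\bm{\alpha}^*)$ is a constrained correlated equilibrium, as desired.

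There is essentially no obstacle here: the corollary is a specialization of the proposition, and the only thing to verify carefully is the identification of the $\mathcal{R}_d^\prime = \mathcal{S}_d$ case with ordinary correlated equilibrium, which the paper has already spelled out. Alternatively, one could give a self-contained one-line argument: a correlated equilibrium satisfies \eqref{eq:exantecondition} for \emph{all} $\alpha_i^\prime \in \mathcal{S}_{i,d}$, hence a fortiori for all $\alpha_i^\prime$ with $(\alpha_i^\prime,\bm{\alpha}_{-i}^*) \in \mathcal{R}_d$, and together with $\bm{\alpha}^* \in \mathcal{R}_d$ this is precisely Definition \ref{defn:constrainedCE}; I would mention this as the shortest route but present the reduction to Proposition \ref{prop:robustnessToRestrictions} as the main proof.
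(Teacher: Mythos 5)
Your proof is correct and follows exactly the paper's own route: the paper likewise applies Proposition \ref{prop:robustnessToRestrictions} with $\mathcal{R}_d^\prime = \mathcal{S}_d$, using the observation that a correlated equilibrium is a constrained correlated equilibrium for the trivial constraint set. Your write-up simply spells out the identification of that case more explicitly than the paper does.
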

\begin{proof}[\textbf{Proof of Corollary \ref{prop:CEinCCE}}]
Since $\mathcal{R}_d \subseteq  \mathcal{S}_{d}$, according to 
proposition~\ref{prop:robustnessToRestrictions}, the proof is immediate. 
\end{proof}
\textcolor{black}{
The following proposition is an alternative characterization of constrained correlated equilibria showing that for any pair $(d,\mathcal{R}_d)$, the correlated strategy profile $\bm{\alpha}^*$ is a constrained correlated equilibrium if, in $G_d$, any unilateral deviation either decreases the player's utility or induces an unfeasible correlated strategy profile.
}
\begin{proposition}\label{prop:alternativeDefinition}
    Let 
    ${G}=(\mathcal{N},(\mathcal{A}_{i})_{i \in \mathcal{N}},(u_{i})_{i \in \mathcal{N}})$ 
    be a finite non-cooperative game, 
    $d$ a correlation device and $\mathcal{R}_d$ a coupled constraint set.
    The triplet $(d,\mathcal{R}_d,\bm{\alpha}^*)$ is a constrained correlated equilibrium if and only if $\bm{\alpha}^* \in \mathcal{R}_d$ and for any 
    $i \in \mathcal{N}$, for any $\alpha^\prime_{i} \in \mathcal{S}_{i,d}$,
    \begin{equation}\label{defn:constrainedCEAlternative1}
        \sum 
        \limits_{\omega \in \Omega} 
        \bm{q}(\omega) 
        \left[ 
        u_{i}({\alpha}_i^*(\omega), \bm{\alpha}_{-i}^*(\omega)) - u_{i}(\alpha^\prime_{i}(\omega), \bm{\alpha}^*_{-i}(\omega)) 
        \right] 
        \geq 0  
        \; \text{ or } \;  (\alpha^\prime_{i}, \bm{\alpha}^*_{-i}) \notin \mathcal{R}_d
    \end{equation}
\end{proposition}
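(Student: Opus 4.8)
The plan is to prove this by establishing the logical equivalence between the defining condition of a constrained correlated equilibrium (Definition \ref{defn:constrainedCE}) and the disjunctive condition \eqref{defn:constrainedCEAlternative1}, for a fixed player $i$ and fixed deviation $\alpha_i'$. Both formulations share the requirement $\bm{\alpha}^* \in \mathcal{R}_d$, so it suffices to show that, assuming $\bm{\alpha}^* \in \mathcal{R}_d$, the statement "for every $i$ and every $\alpha_i'$ with $(\alpha_i', \bm{\alpha}^*_{-i}) \in \mathcal{R}_d$, inequality \eqref{eq:constrainedCorrelatedEquilibriumConditions} holds" is equivalent to "for every $i$ and every $\alpha_i' \in \mathcal{S}_{i,d}$, either inequality \eqref{eq:constrainedCorrelatedEquilibriumConditions} holds or $(\alpha_i', \bm{\alpha}^*_{-i}) \notin \mathcal{R}_d$".

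First I would prove the forward direction. Suppose $(d, \mathcal{R}_d, \bm{\alpha}^*)$ is a constrained correlated equilibrium, so $\bm{\alpha}^* \in \mathcal{R}_d$ and \eqref{eq:constrainedCorrelatedEquilibriumConditions} holds for all feasible deviations. Fix any $i \in \mathcal{N}$ and any $\alpha_i' \in \mathcal{S}_{i,d}$. Either $(\alpha_i', \bm{\alpha}^*_{-i}) \in \mathcal{R}_d$, in which case Definition \ref{defn:constrainedCE} directly gives the inequality in \eqref{defn:constrainedCEAlternative1}; or $(\alpha_i', \bm{\alpha}^*_{-i}) \notin \mathcal{R}_d$, in which case the second disjunct of \eqref{defn:constrainedCEAlternative1} holds. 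In either case \eqref{defn:constrainedCEAlternative1} is satisfied.

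Conversely, suppose $\bm{\alpha}^* \in \mathcal{R}_d$ and \eqref{defn:constrainedCEAlternative1} holds for every $i$ and every $\alpha_i' \in \mathcal{S}_{i,d}$. Fix $i \in \mathcal{N}$ and take any $\alpha_i' \in \mathcal{S}_{i,d}$ with $(\alpha_i', \bm{\alpha}^*_{-i}) \in \mathcal{R}_d$; equivalently $\alpha_i' \in \mathcal{K}_{i,d}(\bm{\alpha}^*_{-i})$. Since the second disjunct of \eqref{defn:constrainedCEAlternative1} fails for this $\alpha_i'$, the first disjunct must hold, i.e., inequality \eqref{eq:constrainedCorrelatedEquilibriumConditions} is satisfied. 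As $i$ and $\alpha_i'$ were arbitrary, the conditions of Definition \ref{defn:constrainedCE} are met, so $(d, \mathcal{R}_d, \bm{\alpha}^*)$ is a constrained correlated equilibrium.

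This argument is essentially a tautological restatement ($P \Rightarrow Q$ is equivalent to "$Q$ or $\neg P$"), so there is no substantive obstacle; the only care needed is to handle the membership $\bm{\alpha}^* \in \mathcal{R}_d$ symmetrically in both formulations and to note that quantifying over all $\alpha_i' \in \mathcal{S}_{i,d}$ in \eqref{defn:constrainedCEAlternative1} correctly recovers the restricted quantification over $\mathcal{K}_{i,d}(\bm{\alpha}^*_{-i})$ in Definition \ref{defn:constrainedCE} via the definition \eqref{eq:definitionKi} of the constraint correspondence. I would also remark, as the surrounding text suggests, that this reformulation is the constrained analogue of how \eqref{eq:partialDeviation} captures deviations in the unconstrained case, and it makes transparent the intuition that a profile is an equilibrium precisely when no unilateral move is both feasible and strictly improving.
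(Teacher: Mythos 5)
Your proof is correct and follows essentially the same route as the paper's: a case split on whether $(\alpha_i', \bm{\alpha}^*_{-i}) \in \mathcal{R}_d$ for the forward direction, and in the converse using that feasibility of the deviation rules out the second disjunct, forcing the inequality. Your explicit remark that the equivalence is the tautology "$P \Rightarrow Q$ iff $Q$ or $\neg P$" is a fair summary of what the paper's proof does implicitly.
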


\begin{proof}[\textbf{Proof of Proposition \ref{prop:alternativeDefinition}}]
    ($\Rightarrow$) 
    If $(d,\mathcal{R}_d,\bm{\alpha}^*)$ is a constrained correlated equilibrium then $\bm{\alpha}^* \in \mathcal{R}_d$ and for any $i \in \mathcal{N}$, 
    for any
    $\alpha^\prime_{i} \in \mathcal{K}_{i,d}(\bm{\alpha}^*_{-i})$,
    \begin{equation}
        \sum\limits_{\omega \in \Omega} \bm{q}(\omega)
        \left[u_{i}({\alpha}_i^*(\omega), \bm{\alpha}_{-i}^*(\omega)) - u_{i}(\alpha^\prime_{i}(\omega), \bm{\alpha}^*_{-i}(\omega)) \right] \geq 0
    \end{equation}
    Furthermore, if $\alpha_i^\prime \notin \mathcal{K}_{i,d}(\bm{\alpha}^*_{-i})$ then $(\alpha^\prime_i, \bm{\alpha}^*_{-i}) \notin \mathcal{R}_{d}$. 
    Thus, for any 
    $i \in \mathcal{N}$, 
    $\alpha^*_i\in\mathcal{K}_{i,d}(\bm{\alpha}^*_{-i})$ 
    and for any    
    $\alpha^\prime_{i} \in \mathcal{S}_{i,d}$,
    \begin{equation}
        \sum 
        \limits_{\omega \in \Omega} \bm{q}(\omega) 
        \left[ 
        u_{i}({\alpha}_i^*(\omega), \bm{\alpha}_{-i}^*(\omega)) - u_{i}(\alpha^\prime_{i}(\omega), \bm{\alpha}^*_{-i}(\omega)) 
        \right] 
        \geq 0  
        \; \text{ or }  \;  
        (\alpha^\prime_{i}, \bm{\alpha}^*_{-i}) \notin \mathcal{R}_d 
    \end{equation}
    \noindent ($\Leftarrow$) 
    Let $\bm{\alpha}^*\in\mathcal{S}_{i,d}$ such that $\bm{\alpha}^*\in \mathcal{R}_d$ 
    and for any
    $i \in \mathcal{N}$, 
    for any $\alpha^\prime_{i} \in \mathcal{S}_{i,d}$,
    \begin{equation}
        \sum 
        \limits_{\omega \in \Omega} \bm{q}(\omega) 
        \left[ 
        u_{i}({\alpha}_i^*(\omega), \bm{\alpha}_{-i}^*(\omega)) - u_{i}(\alpha^\prime_{i}(\omega), \bm{\alpha}^*_{-i}(\omega)) 
        \right] 
        \geq 0  
        \; \text{ or }  \;  
        (\alpha^\prime_{i}, \bm{\alpha}^*_{-i}) \notin \mathcal{R}_d \label{eq:1} 
    \end{equation}
    Furthermore, let $\alpha_i^\prime\in\mathcal{S}_{i,d}$ 
    such that
    $\alpha_i^\prime\in\mathcal{K}_{i,d}(\bm{\alpha}^*_{-i})$. 
    Then 
    $(\alpha_i^\prime, \bm{\alpha}^*_{-i}) \in \mathcal{R}_d$ implying,
    \begin{equation}
         \sum 
         \limits_{\omega \in \Omega} 
         \bm{q}(\omega) 
         \left[ 
         u_{i}({\alpha}_i^*(\omega), \bm{\alpha}_{-i}^*(\omega)) - u_{i}(\alpha^\prime_{i}(\omega), \bm{\alpha}^*_{-i}(\omega)) 
         \right] 
         \geq 0
    \end{equation}
    Therefore, $(d,\mathcal{R}_d,\bm{\alpha}^*)$ is a constrained correlated equilibrium.
\end{proof}

Remark that Definition \ref{defn:constrainedCE} and Proposition \ref{prop:alternativeDefinition} define the same set of strategy profiles but do not make use of the same set of strategies. 
The former uses feasible strategies in $\mathcal{R}_d$ while the latter uses all strategies in $\mathcal{S}_d$. 
This does not make a difference if knowing $\mathcal{S}_d$ but the second formulation cannot be used if knowing only the coupled constraint set or constraint correspondences. 
Despite of its interest, a detailed discussion on this problem is beyond the scope of the paper.

The following proposition gives sufficient conditions of equilibrium based on the optimality of the strategy at each outcome.
\begin{proposition}\label{prop:sufficientConditions}
    Let ${G}=(\mathcal{N},(\mathcal{A}_{i})_{i \in \mathcal{N}},(u_{i})_{i \in \mathcal{N}})$ be a finite non-cooperative game, 
    $d$ a correlation device and $\mathcal{R}_d$ a coupled constraint set. 
    If $\bm{\alpha}^* \in \mathcal{R}_d$ and for any $i \in \mathcal{N}$, for any $\alpha_i^\prime \in \mathcal{K}_{i,d}(\bm{\alpha}^*_{-i})$, for any $\omega \in \Omega$
    \begin{equation}\label{eq:perOutcomeStabilityCondition}
        \sum_{\omega^{\prime} \in P_{i}(\omega)} \bm{q}(\omega^{\prime}) \left[u_{i}({\alpha}_i^*(\omega), \bm{\alpha}_{-i}^*(\omega)) - u_{i}(\alpha_i^\prime(\omega^\prime), \bm{\alpha}^*_{-i}(\omega^{\prime}))\right] \geq 0
    \end{equation}
    then $(d,\mathcal{R}_d,\bm{\alpha}^*)$ is a constrained correlated equilibrium.
\end{proposition}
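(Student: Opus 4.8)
The plan is to verify directly the equilibrium inequality \eqref{eq:constrainedCorrelatedEquilibriumConditions} of Definition \ref{defn:constrainedCE}. Since $\bm{\alpha}^*\in\mathcal{R}_d$ is assumed, the only thing left is to show that for every $i\in\mathcal{N}$ and every feasible deviation $\alpha_i'\in\mathcal{K}_{i,d}(\bm{\alpha}^*_{-i})$,
\[
  \sum_{\omega\in\Omega}\bm{q}(\omega)\left[u_i(\alpha_i^*(\omega),\bm{\alpha}^*_{-i}(\omega))-u_i(\alpha_i'(\omega),\bm{\alpha}^*_{-i}(\omega))\right]\geq 0 .
\]
First I would fix such an $i$ and $\alpha_i'$ and split the sum over $\Omega$ along the cells of the partition $\mathcal{P}_i$, writing $\bm{q}(P):=\sum_{\omega\in P}\bm{q}(\omega)$ for a cell $P\in\mathcal{P}_i$. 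Because $\mathcal{P}_i$ partitions $\Omega$, it then suffices to establish, for each cell $P$, the cell-wise inequality $\sum_{\omega\in P}\bm{q}(\omega)\,u_i(\alpha_i^*(\omega),\bm{\alpha}^*_{-i}(\omega))\geq\sum_{\omega\in P}\bm{q}(\omega)\,u_i(\alpha_i'(\omega),\bm{\alpha}^*_{-i}(\omega))$ and sum over all cells.

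To obtain the cell-wise inequality I would fix a cell $P$ and apply the hypothesis \eqref{eq:perOutcomeStabilityCondition} at an outcome $\omega_\star\in P$ at which $\omega\mapsto u_i(\alpha_i^*(\omega),\bm{\alpha}^*_{-i}(\omega))$ attains its minimum over $P$ (such $\omega_\star$ exists since $\Omega$ is finite and cells are non-empty, and $P_i(\omega_\star)=P$). This gives $\bm{q}(P)\cdot\min_{\omega\in P}u_i(\alpha_i^*(\omega),\bm{\alpha}^*_{-i}(\omega))\geq\sum_{\omega\in P}\bm{q}(\omega)\,u_i(\alpha_i'(\omega),\bm{\alpha}^*_{-i}(\omega))$. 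On the other hand, a $\bm{q}$-weighted average over $P$ dominates its minimum, so $\sum_{\omega\in P}\bm{q}(\omega)\,u_i(\alpha_i^*(\omega),\bm{\alpha}^*_{-i}(\omega))\geq\bm{q}(P)\cdot\min_{\omega\in P}u_i(\alpha_i^*(\omega),\bm{\alpha}^*_{-i}(\omega))$. Chaining the two inequalities yields the cell-wise inequality; summing over $P\in\mathcal{P}_i$ recovers the displayed expectation inequality, which together with $\bm{\alpha}^*\in\mathcal{R}_d$ shows that $(d,\mathcal{R}_d,\bm{\alpha}^*)$ is a constrained correlated equilibrium.

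The one point requiring care is the mismatch between the hypothesis, whose leading term $u_i(\alpha_i^*(\omega),\bm{\alpha}^*_{-i}(\omega))$ is pinned at a single outcome $\omega$, and the target, which involves the full $\bm{q}$-weighted sum of $u_i(\alpha_i^*(\cdot),\bm{\alpha}^*_{-i}(\cdot))$ over the cell; this is precisely what the ``minimum over the cell'' step resolves, exploiting that \eqref{eq:perOutcomeStabilityCondition} is assumed at \emph{every} $\omega$, not merely at one representative per cell. (If \eqref{eq:perOutcomeStabilityCondition} is read with $\omega'$ in its leading term, as in \eqref{eq:expostconditon}, the argument collapses to the standard telescoping over one representative per cell, using $\mathcal{P}_i$-measurability of $\alpha_i^*$.) Everything remains well defined when $\bm{q}(P)=0$, in which case both sides of the cell-wise inequality vanish; and no measurability of $\alpha_i'$ is needed, since the hypothesis already controls $\sum_{\omega\in P}\bm{q}(\omega)\,u_i(\alpha_i'(\omega),\bm{\alpha}^*_{-i}(\omega))$ directly.
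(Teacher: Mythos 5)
Your proof is correct, and it is worth noting that it is more faithful to the proposition as literally stated than the paper's own argument. Both proofs decompose the expectation along the cells of $\mathcal{P}_i$, but they diverge on how they handle the leading term of \eqref{eq:perOutcomeStabilityCondition}, which is pinned at the single outcome $\omega$ rather than summed over $\omega'\in P_i(\omega)$. The paper's proof silently restates the hypothesis with $u_i(\alpha_i^*(\omega'),\bm{\alpha}_{-i}^*(\omega'))$ inside the sum and then performs the standard telescoping over one representative per cell (choosing a set $\Omega_i$ with $|\Omega_i\cap P_{i,k}|=1$); as written, that step proves the proposition for the $\omega'$-version of the hypothesis, not the $\omega$-version displayed in the statement. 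You instead keep the hypothesis exactly as stated and bridge the mismatch by invoking it at the cell minimizer of $\omega\mapsto u_i(\alpha_i^*(\omega),\bm{\alpha}_{-i}^*(\omega))$ and observing that the $\bm{q}$-weighted sum over the cell dominates $\bm{q}(P)$ times that minimum; chaining the two inequalities gives the cell-wise bound and summing over cells finishes the argument. Your parenthetical correctly identifies that under the $\omega'$-reading the argument collapses to the paper's representative-per-cell telescoping, and your handling of the $\bm{q}(P)=0$ case is fine. In short: same decomposition, but your minimum-over-cell step is the extra idea needed to prove the statement exactly as written, where the paper's proof quietly assumes the averaged form of the leading term.
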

\begin{proof}[\textbf{Proof of Proposition \ref{prop:sufficientConditions}}]
    Let $\bm{\alpha}^* \in \mathcal{R}_d$ be a correlated strategy profile such that for any 
    $i \in \mathcal{N}$,
    for any $\alpha_i^\prime \in \mathcal{K}_{i,d}(\bm{\alpha}^*_{-i})$,
    for any $\omega \in \Omega$,
    \begin{equation}
        \sum_{\omega^{\prime} \in P_{i}(\omega)} \bm{q}(\omega^{\prime}) \left[u_{i}({\alpha}_i^*(\omega^\prime), \bm{\alpha}_{-i}^*(\omega^\prime)) - u_{i}(\alpha_i^\prime(\omega^\prime), \bm{\alpha}^*_{-i}(\omega^{\prime}))\right] \geq 0
    \end{equation}
    Let $\mathcal{P}_i = \{ P_{i, 1}, ..., P_{i, n_i} \}$ be player $i$'s partition of $\Omega$ in $d$ and $\Omega_i$ be a set of outcomes such that for any $k\in\{1,\ldots,n_i\}$,  $|\Omega_i \cap  P_{i,k}| = 1$ ($\Omega_i$ contains one element per element of partition of $\Omega$). 
    We have, for any 
    $i \in \mathcal{N}$,
    for any $\alpha_i^\prime \in \mathcal{K}_{i,d}(\bm{\alpha}^*_{-i})$,    
    \begin{equation}
        \sum_{\omega \in \Omega_i} 
        \sum_{\omega^{\prime} \in P_{i}(\omega)} 
        \bm{q}(\omega^{\prime}) 
        \left[u_{i}({\alpha}_i^*(\omega^\prime), \bm{\alpha}_{-i}^*(\omega^\prime)) - u_{i}(\alpha_i^\prime(\omega^\prime), \bm{\alpha}^*_{-i}(\omega^{\prime}))\right] \geq 0
    \end{equation}
    Equivalently, for any player $i \in \mathcal{N}$, for any $\alpha_i^\prime \in \mathcal{K}_{i,d}(\bm{\alpha}^*_{-i})$ 
    \begin{equation}
        \sum_{\omega \in \Omega} 
        \bm{q}(\omega) 
        \left[u_{i}({\alpha}_i^*(\omega), \bm{\alpha}_{-i}^*(\omega)) - u_{i}(\alpha_i^\prime(\omega), \bm{\alpha}^*_{-i}(\omega))\right] \geq 0
    \end{equation}
    Thus, $(d,\mathcal{R}_d,\bm{\alpha}^*)$ is a constrained correlated equilibrium.
\end{proof}

In \cite{forges2020correlated} it is shown that the stability of a correlated equilibrium strategy is induced by a subset of deviations. 
Particularly, it is sufficient for a player to consider alternative strategies deviations that are constant across all elements of her partition except one (see Equation \eqref{eq:expostconditon}). A straightforward adaptation of these equations taking constraints into account is such that 
    for player $i \in \mathcal{N}$,
    for any outcome $\omega \in \Omega$,
    for any action $a_i  \in \mathcal{A}_i$, 
    for any strategy $\alpha^\prime_{i}\in\mathcal{K}_{i,d}(\bm{\alpha}_{-i}^*)$ such that $\alpha^\prime_i(\omega^\prime)= a_i$ if  $\omega^\prime \in P_i(\omega)$ and $\alpha_i^\prime(\omega^\prime) = \alpha_i^*(\omega^\prime)$ otherwise,
    \begin{equation}\label{eq:partialStabilityConditions}
            \sum\limits_{\omega^{\prime}\in P_{i}(\omega)} 
            \bm{q}(\omega^{\prime}) 
            \left[ 
                u_{i}({\alpha}_i^*(\omega^{\prime}), \bm{\alpha}_{-i}^*(\omega^{\prime})) - u_{i}(a_i, \bm{\alpha}^*_{-i}(\omega^{\prime}))
            \right] \geq 0   
    \end{equation}
    An interesting problem is whether or not the latter conditions characterize constrained correlated equilibrium strategies. The following example shows that considering such deviations that are feasible is not sufficient to characterize constrained correlated equilibrium strategies.
\medskip

\noindent \textbf{Example.} 
Consider the extension $G_d$ of the game of Chicken shown in Figure \ref{fig:extendedChickenGame}
and assume the coupled constraint set:
$$ \mathcal{R}_d = \{(s_1^1,s_2^1), (s_1^1,s_2^2), (s_1^2,s_2^2), (s_1^2,s_2^3),(s_1^2,s_2^4), (s_1^3,s_2^3), (s_1^3,s_2^4), (s_1^4,s_2^2), (s_1^4,s_2^4)\}$$
Figure \ref{fig:ConstrainedExtendedPayoff} shows the game.
For the pair $(d,\mathcal{R}_d)$, there are three constrained equilibrium strategy profiles (in green) satisfying the equilibrium conditions \eqref{eq:constrainedCorrelatedEquilibriumConditions}.
The strategy profile $(\bm{s}_{1}^3, \bm{s}_{2}^4)$ with utilities $(3.67, 6)$ satisfies \eqref{eq:partialStabilityConditions} for every feasible deviation but is not a constrained correlated equilibrium. 
In fact, the row player can improve her utility by deviating simultaneously over the two elements of her partition to play the feasible strategy $\bm{s}_{1}^4$. This shows that conditions  \eqref{eq:partialStabilityConditions} are not sufficient to characterise a constrained correlated equilibrium.

This example also shows that the correlated strategy profile $\bm{\alpha}^* = (\bm{s}_{1}^4, \bm{s}_{2}^4)$ induces a probability distribution $\bm{p}_{\bm{\alpha}^*}$ such that $\bm{p}_{\bm{\alpha}^*}(P,P) = 0$, $\bm{p}_{\bm{\alpha}^*}(P,A) = \sfrac{1}{3}$, $\bm{p}_{\bm{\alpha}^*}(P,P) = \sfrac{1}{3}$, $\bm{p}_{\bm{\alpha}^*}(A,A) = \sfrac{1}{3}$ which is not in the polytope \eqref{eq:polytopeofCEChicken} of correlated equilibrium distributions. This implies that a constrained correlated equilibrium is not necessarily a correlated equilibrium.
\begin{figure}[H]
    \centering
    \scalebox{0.9}{
    \begin{tabular}{c|c|c|c|c}
                & $\bm{s}_{2}^1$ : \makecell{${L\textcolor{white}{^c} \mapsto P}$\\${L^c\mapsto P}$}
                & $\bm{s}_{2}^2$ : \makecell{${L\textcolor{white}{^c} \mapsto A}$\\${L^c \mapsto A}$}
                & $\bm{s}_{2}^3$ : \makecell{${L\textcolor{white}{^c} \mapsto A}$\\${L^c \mapsto P}$}
                & $\bm{s}_{2}^4$ : \makecell{${L\textcolor{white}{^c} \mapsto P}$\\${L^c \mapsto A}$}\\ \hline
    $\bm{s}_{1}^1 $ : \makecell{${H\textcolor{white}{^c} \mapsto P}$\\${H^c \mapsto P}$} &       $8, 8$            &           \cellcolor{green!15}$3, 10$            &    \cellcolor{gray!20}\textcolor{gray}{\xcancel{$6.33, 8.67$}}        &      \cellcolor{gray!20}\textcolor{gray}{\xcancel{$4.67, 9.33$}}       \\ \hline
    $\bm{s}_{1}^2 $ : \makecell{${H\textcolor{white}{^c} \mapsto A}$\\${H^c \mapsto A}$} &   \cellcolor{gray!20}\textcolor{gray}{\xcancel{$10, 3$}}           &           $0, 0$           &       $6.67, 2$        &     $3.33, 1$ \\ \hline
    $\bm{s}_{1}^3 $ : \makecell{${H\textcolor{white}{^c} \mapsto A}$\\${H^c \mapsto P}$} &   \cellcolor{gray!20}\textcolor{gray}{\xcancel{$8.67, 6.33$}}        &       \cellcolor{gray!20}\textcolor{gray}{\xcancel{$2, 6.67$}}                &       \cellcolor{green!15}$7,7$             &       $3.67, 6$        \\ \hline
    $\bm{s}_{1}^4 $ :\makecell{${H\textcolor{white}{^c} \mapsto P}$\\${H^c \mapsto A}$}  &   \cellcolor{gray!20}\textcolor{gray}{\xcancel{$9.33, 4.67$}}        &       $1, 3.33$               &     \cellcolor{gray!20}\textcolor{gray}{\xcancel{$6, 3.67$}}          &     \cellcolor{green!15}$4.33, 4.33$        \\
    \end{tabular}}
    \caption{Constrained extension of the game of Chicken.}
  \label{fig:ConstrainedExtendedPayoff}
\end{figure}

To conclude this section, remark that conditions \eqref{eq:partialStabilityConditions} are related to the concept of regret \cite{hart2000simple} used by well known learning dynamics \cite{foster_calibrated_1997, hart_regret-based_2003, hart_adaptive_2005} converging to the set of correlated equilibria\footnote{A history with no regrets being equivalent to a corresponding empirical distribution over action profiles in the polytope of correlated equilibrium distributions.}.
The latter discussion shows that learning constrained correlated equilibria may need an alternative concept of regret based on the conditions given in Proposition \ref{prop:sufficientConditions} to approach the set of constrained correlated equilibrium distributions.
\textcolor{black}{In spite of} its interest, this problem is beyond the scope of this paper.



\section{Constraints on probability distributions over action profiles}\label{sec:constrainedProbabilities}
In this section, we study the particular case of coupled constraint sets induced by a set of probability distributions over action profiles. 
In addition to their theoretical interest, such constraints are typically relevant in applications such as economics or engineering where performance criteria depend on probability distribution over action profiles but are independent from the devices. 
As an example, one may search for a constrained correlated equilibrium with constraints on the social welfare \cite{branzei2017nash} (see Section \ref{sec:illustration} for numerical experiments with a constraint on the social welfare) or the Nash product \cite{moulin2004fair} \cite{nash1950bargaining} or such that the probability of some event in $\mathcal{A}$ is below a given threshold.

Let $\mathcal{C} \subseteq \Delta(\mathcal{A})$ be a set of probability distributions, called {feasible set of probability distributions} and for each correlation device $d$, define the coupled constraint set $\mathcal{R}_d$ generated by $\mathcal{C}$ such that, 
\begin{equation}\label{eq:RdInducedByFeasibleDistributions}
    \mathcal{R}_d = \{  \bm{\alpha}  \in \mathcal{S}_d \mid \bm{p}_{\bm{\alpha}} \in \mathcal{C} \}
\end{equation}

\noindent
In the general case (studied in Section \ref{sec:constrainedCorrelatedEquilibrium}), coupled constraint sets (one per device) may be unrelated to each other, except that they are each a subset of strategies in an extension of the game $G$. This no longer holds in this section where the constraints induce, by definition, a new relation among the coupled constraint sets.
For each pair of correlation devices, the corresponding coupled constraint sets induce feasible probability distributions in $\mathcal{C}$.
The case without constraints (for any device $d$, $\mathcal{R}_d=\mathcal{S}_d$) is obtained by taking $\mathcal{C} = \Delta(\mathcal{A})$.
\subsection{Sufficiency of the canonical devices}

Let $\beta_i : {\mathcal{A}_i} \rightarrow {\mathcal{A}_i}$, $\bm{p}$ be a probability distribution in $\Delta(\mathcal{A})$ and
${\bm{z}_{\beta_i, \bm{p}}}$ be the probability distribution such that for any $\bm{a} \in \mathcal{A}$, 
\begin{equation}\label{eq:probability_distribution_for_deviation_with_canonical_device}
    \bm{z}_{\beta_i, \bm{p}}(\bm{a}) 
    = 
    \sum_{b_i \in \mathcal{A}_i} \bm{p}(b_i, \bm{a}_{-i}) \mathds{1}_{\beta_i(b_i) = a_i}
\end{equation}
interpreted as the probability distribution over action profiles such that any player $j$ plays her action component $b_j$ of the action profile $\bm{b}$ drawn from $\bm{p}$ except player $i$ unilaterally deviating and playing action $\beta_i(b_i)=a_i$ instead of $b_i$.

To simplify the notation, we have included only the action profile $\bm{a}$ as a variable in the  function $\bm{z}$,  but in general, the strategy $\beta_i$  and the probability distribution $\bm{p}$ are also variables of $\bm{z}$. Furthermore, in this section we denote $\beta_i \circ \alpha_i$ the usual composition of the functions $\beta_i:\mathcal{A}_i\rightarrow \mathcal{A}_i$ and $\alpha_i:\Omega\rightarrow \mathcal{A}_i$.

The following preliminary result shows the relation between the probability distribution $\bm{p}_{\bm{\alpha}}$ induced by a correlated strategy profile $\bm{\alpha} \in \mathcal{S}_{d}$, the deviation function $\beta_i : \mathcal{A}_i \rightarrow \mathcal{A}_i$ of player $i$ and $\bm{z}_{\beta_i, \bm{p}_{\bm{\alpha}}}$.
\begin{lemma}\label{lemma:distribution_induced_by_composition}
        Let $G$ be a finite non-cooperative game and $d$ a correlation device. 
	Furthermore, let $\bm{\alpha}\in \mathcal{S}_{d}$ and $\beta_i : {\mathcal{A}_i} \rightarrow {\mathcal{A}_i}$. 
        Then, for any $\bm{a}\in\mathcal{A}$,
        $\bm{p}_{(\beta_i\circ \alpha_i, \bm{\alpha}_{-i})}(\bm{a}) = {\bm{z}_{\beta_i, \bm{p}_{\bm{\alpha}}}}(\bm{a})$.
    \end{lemma}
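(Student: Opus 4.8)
The plan is to compute $\bm{p}_{(\beta_i\circ\alpha_i,\bm{\alpha}_{-i})}(\bm{a})$ directly from the definition of the probability distribution induced by a correlated strategy profile (Equation \eqref{eq:probabilityDistribution}) and then rearrange the resulting sum over $\Omega$ into the form \eqref{eq:probability_distribution_for_deviation_with_canonical_device} by grouping outcomes according to the action profile that $\bm{\alpha}$ assigns to them.

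\medskip

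First I would fix $\bm{a}\in\mathcal{A}$ and write, using \eqref{eq:probabilityDistribution} applied to the strategy profile $(\beta_i\circ\alpha_i,\bm{\alpha}_{-i})$,
\begin{equation}
    \bm{p}_{(\beta_i\circ\alpha_i,\bm{\alpha}_{-i})}(\bm{a})
    =
    \sum_{\omega\in\Omega}\bm{q}(\omega)\,\mathds{1}_{(\beta_i(\alpha_i(\omega)),\bm{\alpha}_{-i}(\omega))=\bm{a}}.
\end{equation}
The indicator factors as $\mathds{1}_{\beta_i(\alpha_i(\omega))=a_i}\cdot\mathds{1}_{\bm{\alpha}_{-i}(\omega)=\bm{a}_{-i}}$. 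The key manipulation is to insert a sum over the action $b_i\in\mathcal{A}_i$ that player $i$ actually receives at $\omega$: since $\alpha_i(\omega)$ takes exactly one value in $\mathcal{A}_i$, we have $\mathds{1}_{\beta_i(\alpha_i(\omega))=a_i}=\sum_{b_i\in\mathcal{A}_i}\mathds{1}_{\alpha_i(\omega)=b_i}\,\mathds{1}_{\beta_i(b_i)=a_i}$. Substituting and exchanging the (finite) sums over $\omega$ and $b_i$ gives
\begin{equation}
    \bm{p}_{(\beta_i\circ\alpha_i,\bm{\alpha}_{-i})}(\bm{a})
    =
    \sum_{b_i\in\mathcal{A}_i}\mathds{1}_{\beta_i(b_i)=a_i}\sum_{\omega\in\Omega}\bm{q}(\omega)\,\mathds{1}_{\alpha_i(\omega)=b_i}\,\mathds{1}_{\bm{\alpha}_{-i}(\omega)=\bm{a}_{-i}}.
\end{equation}
The inner double-indicator is precisely $\mathds{1}_{\bm{\alpha}(\omega)=(b_i,\bm{a}_{-i})}$, so by \eqref{eq:probabilityDistribution} again the inner sum equals $\bm{p}_{\bm{\alpha}}(b_i,\bm{a}_{-i})$. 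Thus
\begin{equation}
    \bm{p}_{(\beta_i\circ\alpha_i,\bm{\alpha}_{-i})}(\bm{a})
    =
    \sum_{b_i\in\mathcal{A}_i}\bm{p}_{\bm{\alpha}}(b_i,\bm{a}_{-i})\,\mathds{1}_{\beta_i(b_i)=a_i}
    =
    \bm{z}_{\beta_i,\bm{p}_{\bm{\alpha}}}(\bm{a}),
\end{equation}
which is the claim.

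\medskip

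This is essentially a bookkeeping argument, so there is no serious obstacle; the one point requiring a little care is the factorization of the indicator and the justification that inserting $\sum_{b_i}\mathds{1}_{\alpha_i(\omega)=b_i}=1$ is legitimate (it relies only on $\alpha_i(\omega)$ being a single well-defined element of $\mathcal{A}_i$), together with the swap of the two finite summations. I would also note in passing that $\beta_i\circ\alpha_i$ is itself $\mathcal{P}_i$-measurable, so $(\beta_i\circ\alpha_i,\bm{\alpha}_{-i})$ is a legitimate element of $\mathcal{S}_d$ and the quantity $\bm{p}_{(\beta_i\circ\alpha_i,\bm{\alpha}_{-i})}$ is well-defined, although this is not strictly needed for the identity itself.
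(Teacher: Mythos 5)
Your proof is correct and follows essentially the same route as the paper's: the paper groups outcomes into the sets $P_k=\{\omega:\alpha_i(\omega)=k\}$ and sums over $k\in\mathcal{A}_i$, which is exactly your insertion of $\sum_{b_i}\mathds{1}_{\alpha_i(\omega)=b_i}$ followed by the swap of finite sums. The bookkeeping, the factorization of the indicator, and the identification of the inner sum with $\bm{p}_{\bm{\alpha}}(b_i,\bm{a}_{-i})$ all match.
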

    \begin{proof}[\textbf{Proof of Lemma \ref{lemma:distribution_induced_by_composition}}]
        Let $ \mathcal{P}_i = \{ P_{k}\}_{k \in \mathcal{A}_i}$ be a partition of $\Omega$ such that $P_{k} = \{ \omega \in \Omega : \alpha_i(\omega) = k \}$.
        For any $\bm{a}\in\mathcal{A}$, we have,
        \begin{align}
            \bm{p}_{(\beta_i\circ\alpha_i,\bm{\alpha}_{-i})}(\bm{a}) 
            &= 
            \sum_{\omega \in \Omega} \bm{q}(\omega) 
            \mathds{1}_{\{(\beta_i \circ \alpha_i, \bm{\alpha}_{-i})(\omega) = \bm{a}\}}
            \\
            &=
             \sum_{k \in \mathcal{A}_i} 
             \sum_{\omega \in {P}_{k}} \bm{q}(\omega) 
             \mathds{1}_{\{(\beta_i \circ \alpha_i, \bm{\alpha}_{-i})(\omega) = \bm{a}\}}
             \\
             &=
              \sum_{k \in \mathcal{A}_i} 
              \sum_{\omega \in {P}_{k}} \bm{q}(\omega) 
              \mathds{1}_{\{\beta_i \circ \alpha_i(\omega) = a_i\}}
              \mathds{1}_{\{\bm{\alpha}_{-i}(\omega) = \bm{a}_{-i}\}}
              \\
              &=
              \sum_{k \in \mathcal{A}_i} 
              \sum_{\omega \in {P}_{k}} \bm{q}(\omega) 
              \mathds{1}_{\{(\beta_i(\alpha_i(\omega)) = a_i\}}
              \mathds{1}_{\{\bm{\alpha}_{-i}(\omega) = \bm{a}_{-i}\}}
        \end{align}
        Since for any $\omega \in P_{k}$, $\alpha_i(\omega) = k$, we have
        \begin{align}
             \bm{p}_{(\beta_i\circ\alpha_i,\bm{\alpha}_{-i})}(\bm{a})
            &=
            \sum_{k \in \mathcal{A}_i} 
            \sum_{\omega \in {P}_{k}} \bm{q}(\omega) 
            \mathds{1}_{\{\beta_i(k) = a_i\}}
            \mathds{1}_{\{\bm{\alpha}_{-i}(\omega) = \bm{a}_{-i}\}}
            \\
            &=
            \sum_{k \in \mathcal{A}_i} 
            \mathds{1}_{\{\beta_i(k) = a_i\}}
            \sum_{\omega \in {P}_{k}} \bm{q}(\omega) 
            \mathds{1}_{\{\bm{\alpha}_{-i}(\omega) = \bm{a}_{-i}\}}
            \\
            &=
            \sum_{k \in \mathcal{A}_i} 
            \mathds{1}_{\{\beta_i(k) = a_i\}} \bm{p}_{\bm{\alpha}}(k, \bm{a}_{-i})
            \\
            &= 
            \bm{z}_{\beta_i, \bm{p}_{\bm{\alpha}}}
        \end{align}
    \end{proof}
\noindent The following theorem characterizes the set of constrained correlated equilibrium distributions.


\begin{theorem}\label{thm:canonicalEquivalence}
    Let $G$ be a finite non-cooperative game and $\mathcal{C}$ a set of feasible probability distributions. 
    The distribution $\bm{p}\in\Delta(\mathcal{A})$ is a constrained correlated equilibrium distribution 
    if and only if 
    for any player $i\in\mathcal{N}$,
    for any strategy $\beta_i : {\mathcal{A}_i} \rightarrow {\mathcal{A}_i}$, 
    if 
    $\bm{z}_{\beta_i,\bm{p}} \in \mathcal{C}$, 
    then
    \begin{align}
        \sum\limits_{\bm{a} \in \mathcal{A}} \bm{p}(\bm{a})
        \left[
            u_{i}(a_i, \bm{a}_{-i}) - u_{i}(\beta_i(a_i), \bm{a}_{-i}) 
        \right] 
        \geq 0 
        \label{eq:defnCanonical}
    \end{align}
\end{theorem}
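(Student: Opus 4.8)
The plan is to prove both directions by relating the canonical-device formulation to the general definition of constrained correlated equilibrium (Definition \ref{defn:constrainedCE}) applied to a suitable device. The key observation is that for the feasible set $\mathcal{C}$, the induced coupled constraint sets $\mathcal{R}_d$ (one per device $d$) are all generated by $\mathcal{C}$ via \eqref{eq:RdInducedByFeasibleDistributions}, so a constrained correlated equilibrium distribution is, by definition, $\bm{p}_{\bm{\alpha}^*}$ for some constrained correlated equilibrium $(d,\mathcal{R}_d,\bm{\alpha}^*)$ with $\mathcal{R}_d$ generated by $\mathcal{C}$.

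\textbf{($\Leftarrow$)} Suppose $\bm{p}\in\Delta(\mathcal{A})$ satisfies the condition in the statement. I would exhibit a canonical device $d_c=(\mathcal{A},(\mathcal{P}^c_i)_{i\in\mathcal{N}},\bm{q}_c)$ with $\bm{q}_c=\bm{p}$, take $\bm{\alpha}^*=\bm{id}$, and let $\mathcal{R}_{d_c}$ be the coupled constraint set generated by $\mathcal{C}$. Then $\bm{p}_{\bm{id}}=\bm{p}\in\mathcal{C}$ (assuming $\bm{p}\in\mathcal{C}$, which is implicit — I should check this is built into the definition of a constrained correlated equilibrium distribution, since feasibility $\bm{\alpha}^*\in\mathcal{R}_d$ forces $\bm{p}_{\bm{\alpha}^*}\in\mathcal{C}$), so $\bm{\alpha}^*\in\mathcal{R}_{d_c}$. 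A feasible unilateral deviation of player $i$ in $G_{d_c}$ is a map $\beta_i:\mathcal{A}_i\to\mathcal{A}_i$ (since $\mathcal{P}^c_i$-measurable strategies are exactly functions of $a_i$), composed with the identity, i.e.\ $\beta_i\circ \alpha_i^*=\beta_i$; by Lemma \ref{lemma:distribution_induced_by_composition} the deviated profile induces the distribution $\bm{z}_{\beta_i,\bm{p}}$, and feasibility of the deviation means precisely $\bm{z}_{\beta_i,\bm{p}}\in\mathcal{C}$. Under that hypothesis, \eqref{eq:defnCanonical} is exactly the stability inequality \eqref{eq:constrainedCorrelatedEquilibriumConditions} expanded over $\Omega=\mathcal{A}$ with $\bm{q}_c=\bm{p}$. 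Hence $(d_c,\mathcal{R}_{d_c},\bm{id})$ is a constrained correlated equilibrium with distribution $\bm{p}$.

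\textbf{($\Rightarrow$)} Conversely, let $\bm{p}=\bm{p}_{\bm{\alpha}^*}$ for a constrained correlated equilibrium $(d,\mathcal{R}_d,\bm{\alpha}^*)$ with $\mathcal{R}_d$ generated by $\mathcal{C}$. Fix $i\in\mathcal{N}$ and $\beta_i:\mathcal{A}_i\to\mathcal{A}_i$ with $\bm{z}_{\beta_i,\bm{p}}\in\mathcal{C}$. Consider the deviation $\alpha_i':=\beta_i\circ\alpha_i^*\in\mathcal{S}_{i,d}$; by Lemma \ref{lemma:distribution_induced_by_composition}, $\bm{p}_{(\alpha_i',\bm{\alpha}^*_{-i})}=\bm{z}_{\beta_i,\bm{p}}\in\mathcal{C}$, so $(\alpha_i',\bm{\alpha}^*_{-i})\in\mathcal{R}_d$, i.e.\ this deviation is feasible. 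The equilibrium condition \eqref{eq:constrainedCorrelatedEquilibriumConditions} then gives $\sum_{\omega}\bm{q}(\omega)[u_i(\alpha_i^*(\omega),\bm{\alpha}^*_{-i}(\omega))-u_i(\beta_i(\alpha_i^*(\omega)),\bm{\alpha}^*_{-i}(\omega))]\ge 0$; grouping $\omega$ by the value $\bm{a}=\bm{\alpha}^*(\omega)$ and using \eqref{eq:probabilityDistribution} turns the left-hand side into $\sum_{\bm{a}\in\mathcal{A}}\bm{p}(\bm{a})[u_i(a_i,\bm{a}_{-i})-u_i(\beta_i(a_i),\bm{a}_{-i})]$, which is \eqref{eq:defnCanonical}.

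\textbf{Main obstacle.} The substantive content is entirely packaged in Lemma \ref{lemma:distribution_induced_by_composition}; given it, both directions are bookkeeping. The one delicate point I expect to need care is the forward direction's claim that restricting attention to deviations of the form $\beta_i\circ\alpha_i^*$ loses nothing: an arbitrary feasible $\alpha_i'\in\mathcal{S}_{i,d}$ need not be of this form, so \eqref{eq:defnCanonical} captures only a subset of the original stability constraints. This is fine for the equivalence of \emph{distributions} because every constrained correlated equilibrium distribution already satisfies the full set, hence in particular the $\beta_i$-deviation subset; but it means the converse (reconstructing an equilibrium via $d_c$) must be argued on $d_c$ itself, where $\beta_i$-deviations \emph{are} all the feasible deviations — which is exactly why the canonical construction in ($\Leftarrow$) works. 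I would make sure the write-up is explicit that the two directions use the deviation set in these two complementary ways, mirroring the remark already made after Proposition \ref{prop:alternativeDefinition}.
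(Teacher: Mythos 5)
Your proposal is correct and follows essentially the same route as the paper's proof: both directions hinge on Lemma \ref{lemma:distribution_induced_by_composition}, the forward direction deviates via $\beta_i\circ\alpha_i^*$ (you argue it directly where the paper argues by contraposition), and the backward direction builds the canonical device with the identity profile and shows every feasible deviation there is of the form $\beta_i\circ\alpha_i^*$. Your side remark about needing $\bm{p}\in\mathcal{C}$ in the ($\Leftarrow$) direction is well taken — the paper's own proof silently adds this hypothesis, which is not literally implied by the stated condition — so flagging it explicitly is if anything an improvement.
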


\begin{proof}[\textbf{Proof of Theorem \ref{thm:canonicalEquivalence}}]
    First, we show by contradiction the contrapositive proposition of $(\Rightarrow\nobreak\hspace{0pt})$\footnote{\textbf{Proposition.} \emph{Let $\bm{p}\in\Delta(\mathcal{A})$.
            If 
            it exists a player $i \in \mathcal{N}$ 
            and a strategy $\beta_i : {\mathcal{A}_i} \rightarrow {\mathcal{A}_i}$ 
            such that,
            $\bm{z}_{\beta_i, \bm{p}} \in \mathcal{C}$ 
            and
            $
                \sum_{\bm{a} \in \mathcal{A}} \bm{p}(\bm{a})
                \left[
                    u_{i}(a_i, \bm{a}_{-i}) - u_{i}(\beta_i(a_i), \bm{a}_{-i}) 
                \right] 
                < 0 
            $
            then $\bm{p}$ is not a constrained correlated equilibrium distribution.}}.
    Let $\bm{p}$ be a constrained correlated equilibrium distribution. 
    Then, it exists a constrained correlated equilibrium 
    $(d,\mathcal{R}_d,\bm{\alpha}^*)$ such that $\bm{p} = \bm{p}_{\bm{\alpha}^*}$.
    Define
    \begin{equation}
        \mathcal{K}_{i,d}(\bm{\alpha}_{-i}^*) 
        = 
        \{ 
        	\alpha_i\in\mathcal{S}_{i,d} 
        	\mid  
        	\bm{p}_{(\alpha_i,\bm{\alpha}_{-i}^*)}\in\mathcal{C} 
        \}
    \end{equation} 
    
    \noindent Furthermore, let 
    $\mathcal{H}_{i}(\bm{p}) = \{\beta^\prime_i : {\mathcal{A}_i} \rightarrow {\mathcal{A}_i} \mid \bm{z}_{\beta^\prime_i, \bm{p}}\in\mathcal{C}\}$ 
    and assume $i\in\mathcal{N}$, $\beta_i\in \mathcal{H}_{i}(\bm{p})$ such that
	$\bm{z}_{\beta_i,\bm{p}} \in \mathcal{C}$ and
    \begin{equation}
        \sum\limits_{\bm{a} \in \mathcal{A}} \bm{p}(\bm{a})
        \left[
            u_{i}(a_i, \bm{a}_{-i}) - u_{i}(\beta_i(a_i), \bm{a}_{-i}) 
        \right] 
        < 0 
    \end{equation} 

    \noindent From Lemma \ref{lemma:distribution_induced_by_composition}, we have
    $
        \bm{p}_{(\beta_i\circ \alpha_i^*,\bm{\alpha^*}_{-i})} = {\bm{z}_{\beta_i, \bm{p}}}
    $. 
    Since $\bm{z}_{\beta_i, \bm{p}} \in \mathcal{C}$, by definition of $\mathcal{K}_{i,d}$, we have $\beta_i\circ \alpha_i^*\in \mathcal{K}_{i,d}(\bm{\alpha}^*_{-i})$.
    \noindent Furthermore, 
    \begin{align}
        \sum_{\omega\in\Omega} \bm{q}(\omega) 
        \left[
            u_i(\alpha_i^*(\omega),\bm{\alpha}_{-i}^*(\omega)) 
            -
            u_i(\beta_i\circ \alpha_i^*(\omega),\bm{\alpha}_{-i}^*(\omega))
        \right]
        &= 
        \sum_{\bm{a}\in\mathcal{A}}\bm{p}_{\bm{\alpha}^*}(\bm{a})
        \left[
            u_i(a_i,\bm{a}_{-i}) 
            -
            u_i(\beta_i(a_i),\bm{a}_{-i})
        \right]\\
        &= 
        \sum_{\bm{a}\in\mathcal{A}}\bm{p}(\bm{a})
        \left[
            u_i(a_i,\bm{a}_{-i}) 
            -
            u_i(\beta_i(a_i),\bm{a}_{-i})
        \right]
        <0 \label{eq:contradiction}
    \end{align}
    where the last equality follows from the definition of $\bm{p}$ and the inequality follows from the assumption on $\beta_i$.
    Thus, $\bm{\alpha}^*$ is not a constrained correlated equilibrium implying that $\bm{p}$ is not a constrained correlated equilibrium. A contradiction.\\
    
    \noindent ($\Leftarrow$) Let $\bm{p}\in\Delta(\mathcal{A})$  such that $\bm{p}\in \mathcal{C}$ and 
    for any $i \in \mathcal{N}$, 
    for any $\beta_i : {\mathcal{A}_i} \rightarrow {\mathcal{A}_i}$, 
    $\bm{z}_{\beta_i, \bm{p}} \in \mathcal{C}$ implies
    \begin{align}
        \sum\limits_{\bm{a} \in \mathcal{A}} \bm{p}(\bm{a}) 
        \left[
            u_{i}(a_i, \bm{a}_{-i}) - u_{i}(\beta_i(a_i), \bm{a}_{-i}) 
        \right] \geq 0 
        \label{DP}
    \end{align}

To prove that $\bm{p}$ is a constrained correlated equilibrium distribution, we must show that there exists a constrained correlated equilibrium $(d,\mathcal{R}_d,\bm{\alpha}^*)$ such that $\bm{p} = \bm{p}_{\bm{\alpha}^*}$. 

Let $d=(\mathcal{A},(\mathcal{P}_i)_{i\in\mathcal{N}},\bm{p})$ 
    be the canonical correlation device such that
    $\mathcal{P}_i = \{  P_{k} \}_{k \in \mathcal{A}_i}$ is a partition of $\Omega$ where 
    $P_{k} = \{ \omega \in \Omega : \alpha_i(\omega) = k \}$
    and 
    $\alpha_i^*:\mathcal{A} \rightarrow  \mathcal{A}_i$ 
    such that for any $\bm{a\in\mathcal{A}}$,
    $\alpha_i^*(\bm{a})=a_i$. 

We show that for any  $\alpha_i^\prime\in\mathcal{K}_{i,d}(\bm{\alpha}^*_{-i})$, 
    there exists 
    $\beta_i : {\mathcal{A}_i} \rightarrow {\mathcal{A}_i}$ such that $\alpha_i^\prime = \beta_i\circ\alpha_i^*$ and $\bm{z}_{\beta_i, \bm{p}} \in \mathcal{C}$. First, we show the existence of such $\beta_i$.

\noindent Let us consider (without loss of generality) the strategy $\alpha_i^\prime$ defined as 
$$
    \alpha_i^\prime(a_i, \bm{a}_{-i}) = b_{a_i},\; a_i\in \mathcal{A}_i,\;  \bm{a}_{-i} \in \mathcal{A}_{-i},
$$ 
where $b_{a_i}$ is an action in $\mathcal{A}_{i}$. 
This  means   that player $i$ plays action $b_{a_i}$ in $\mathcal{A}_i$ if action $a_i$ has been recommended by the canonical device $d_c$.  Let $\beta_i : {\mathcal{A}_i} \rightarrow {\mathcal{A}_i}$ such that $\beta_i(a_i) = b_{a_i}$, $a_i\in\mathcal{A}_i$. Thus we have for any $\bm{a} \in P_{a_i}$: 
    \begin{equation}
        \beta_i\circ\alpha_i^*(\bm{a}) 
        = \beta_i(\alpha_i^*(\bm{a})) 
        = \beta_i(a_i)
        = b_{a_i}
        = \alpha^\prime_i(\bm{a})
 \end{equation}
Thus, $\beta_i\circ\alpha_i^* = \alpha_i^\prime$. 
  \noindent 
  Second, we show that $\bm{z}_{\beta_i, \bm{p}} \in \mathcal{C}$.
    From lemma \ref{lemma:distribution_induced_by_composition}, we have, 
    $
    \bm{p}_{(\beta_i\circ\alpha_i^*,\bm{\alpha}_{-i}^*)} = \bm{z}_{\beta_i, \bm{p}}
    $
    Furthermore 
    $\bm{p}_{(\beta_i\circ\alpha_i^*,\bm{\alpha}_{-i}^*)} = \bm{p}_{(\alpha_i^\prime,\bm{\alpha}_{-i}^*)}$ 
    and $\bm{p}_{(\alpha_i^\prime,\bm{\alpha}_{-i}^*)}\in \mathcal{C}$ 
    (since $\alpha_i^\prime\in\mathcal{K}_{i,d}(\bm{\alpha}^*_{-i})$ implies $\bm{p}_{(\alpha_i^\prime,\bm{\alpha}_{-i}^*)}\in\mathcal{C}$). 
    Then, $\bm{z}_{\beta_i, \bm{p}}\in\mathcal{C}$. 

    \noindent 
    Third, we show that $\bm{\alpha}^*$ is a constrained correlated equilibrium.
    We have, 
    \begin{align}  
        \sum\limits_{\omega \in \Omega} \bm{q}(\omega) 
        \left[
            u_{i}(\bm{\alpha}^*(\omega)) - u_{i}(\alpha_i^{\prime}(\omega), \bm{\alpha}^*_{-i}(\omega)) 
        \right]
        & =     
        \sum\limits_{k \in \mathcal{A}_i} \sum\limits_{\omega \in P_{k}} \bm{p}(\omega) 
        \left[
            u_{i}(\bm{\alpha}^*(\omega)) - u_{i}(\alpha_i^\prime(\omega), \bm{\alpha}^*_{-i}(\omega)) 
        \right]
        \\
        & =
        \sum\limits_{k \in \mathcal{A}_i} \sum\limits_{\omega \in P_{k}} \bm{p}(\omega) 
        \left[
            u_{i}(\bm{\alpha}^*(\omega)) - u_{i}(\beta_i\circ \alpha^*_i(\omega), \bm{\alpha}^*_{-i}(\omega)) 
        \right]
        \\ 
        & =
        \sum\limits_{k \in \mathcal{A}_i} \sum\limits_{\omega \in P_{k}} \bm{p}(\omega) 
        \left[
            u_{i}(\bm{\alpha}^*(\omega)) - u_{i}(\beta_i(k), \bm{\alpha}^*_{-i}(\omega)) 
        \right]
        \\         
        & = \sum\limits_{\bm{k} \in \mathcal{A}} 
        \bm{p}(\bm{k}) 
        \left[
            u_{i}(k_i, \bm{k}_{-i}) - u_{i}(\beta_i(k_i), \bm{k}_{-i}) 
        \right]
        \geq 0 \label{eqDP1}
     \end{align}    
    where the last inequality follows by assumption. 
    Thus $(d,\mathcal{R}_d,\bm{\alpha}^*)$ is a constrained correlated equilibrium, implying that $\bm{p} = \bm{p}_{\bm{\alpha}^*}$ is a constrained correlated equilibrium distribution. 
\end{proof}

The latter theorem implies that, if the coupled constraint sets are induced by a feasible set of probability distributions, it is sufficient to consider canonical devices to characterize the set of constrained correlated equilibrium distributions. 
In the case without constraints, obtained by taking $\mathcal{C} = \Delta(\mathcal{A})$, the equilibrium conditions from Theorem \ref{thm:canonicalEquivalence} reduce to \eqref{eq:correlatedEquilibriumDistributions}.

The following result is a corollary result of Proposition \ref{prop:robustnessToRestrictions} showing sufficient conditions for a constrained correlated equilibrium when the constraints are generated by feasible sets of probability distributions.
\begin{corollary}\label{cor:SmallerConstraints}
    Let $\mathcal{C}$ and $\mathcal{C}^\prime$ be feasible sets of probability distributions such that $\mathcal{C} \subseteq \mathcal{C}^\prime$. Let $d$ be a correlation device, $\mathcal{R}_d$ and $\mathcal{R}^\prime_d$ be the feasible sets of probability distributions generated by $\mathcal{C}$ and $\mathcal{C}^\prime$ respectively.
    If $(d, \mathcal{R}^\prime_d, \bm{\alpha}^*)$ is a constrained correlated equilibrium  
    and $\bm{p}_{\bm{\alpha}^*} \in \mathcal{C}$, then $(d, \mathcal{R}_d, \bm{\alpha}^*)$ is a constrained correlated equilibrium.
\end{corollary}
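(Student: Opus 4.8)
The plan is to reduce the statement to Proposition \ref{prop:robustnessToRestrictions}, which already establishes the analogous robustness property for arbitrary nested coupled constraint sets in a fixed extended game. The only additional ingredient is that the construction $\mathcal{C}\mapsto\mathcal{R}_d$ given by \eqref{eq:RdInducedByFeasibleDistributions} is monotone, i.e.\ that $\mathcal{C}\subseteq\mathcal{C}^\prime$ forces $\mathcal{R}_d\subseteq\mathcal{R}^\prime_d$.

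First I would check this monotonicity. Let $\bm{\alpha}\in\mathcal{R}_d$. By the definition \eqref{eq:RdInducedByFeasibleDistributions} of the coupled constraint set generated by $\mathcal{C}$, this means $\bm{p}_{\bm{\alpha}}\in\mathcal{C}$; since $\mathcal{C}\subseteq\mathcal{C}^\prime$ we obtain $\bm{p}_{\bm{\alpha}}\in\mathcal{C}^\prime$, hence $\bm{\alpha}\in\mathcal{R}^\prime_d$. Thus $\mathcal{R}_d\subseteq\mathcal{R}^\prime_d$. Second, I would observe that the hypothesis $\bm{p}_{\bm{\alpha}^*}\in\mathcal{C}$ is, again by \eqref{eq:RdInducedByFeasibleDistributions}, precisely the statement $\bm{\alpha}^*\in\mathcal{R}_d$.

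Finally, having $\mathcal{R}_d\subseteq\mathcal{R}^\prime_d$, the assumption that $(d,\mathcal{R}^\prime_d,\bm{\alpha}^*)$ is a constrained correlated equilibrium, and $\bm{\alpha}^*\in\mathcal{R}_d$, I would invoke Proposition \ref{prop:robustnessToRestrictions} directly to conclude that $(d,\mathcal{R}_d,\bm{\alpha}^*)$ is a constrained correlated equilibrium. There is no genuine obstacle: the corollary is simply the specialization of Proposition \ref{prop:robustnessToRestrictions} to constraint sets of the particular form \eqref{eq:RdInducedByFeasibleDistributions}, and the only point requiring a little care is to use the definition of $\mathcal{R}_d$ in both directions --- once to deduce the inclusion $\mathcal{R}_d\subseteq\mathcal{R}^\prime_d$ and once to deduce $\bm{\alpha}^*\in\mathcal{R}_d$.
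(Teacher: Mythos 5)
Your proposal is correct and follows exactly the paper's own route: establish $\mathcal{R}_d\subseteq\mathcal{R}^\prime_d$ from $\mathcal{C}\subseteq\mathcal{C}^\prime$ via the definition \eqref{eq:RdInducedByFeasibleDistributions}, note that $\bm{p}_{\bm{\alpha}^*}\in\mathcal{C}$ means $\bm{\alpha}^*\in\mathcal{R}_d$, and invoke Proposition \ref{prop:robustnessToRestrictions}. You simply spell out the monotonicity check and the translation of the hypothesis a bit more explicitly than the paper does.
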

\begin{proof}[\textbf{Proof of Corollary~\ref{cor:SmallerConstraints}}]
    From the definition of $\mathcal{R}_{d}$ and $\mathcal{R}^\prime_{d}$ and with the assumption $\mathcal{C} \subseteq \mathcal{C}^\prime$, we have $\mathcal{R}_{d} \subseteq \mathcal{R}^\prime_{d}$.
    According to Proposition \ref{prop:robustnessToRestrictions}, the proof is immediate.
\end{proof}
\subsection{Existence of Constrained Correlated Equilibrium }


The existence of correlated equilibria has been shown in \cite{aumann1987correlated} and \cite{hart1989existence}. 
However, in case of constraints, a general existence result does not hold even if restricting the scope to the case of coupled constraint sets induced by a feasible set of probability distributions as considered in this section.
The following example has no constrained correlated equilibrium.\\


\noindent \textbf{Example.} 
Let $G$ be the game shown in Figure \ref{tab:2x2game} with one mixed Nash equilibrium $((\sfrac{5}{6} \cdot U, \sfrac{1}{6} \cdot D), (\sfrac{1}{2} \cdot L, \sfrac{1}{2} \cdot R))$ and 
$
    \mathcal{C} = \{ \bm{p} \in \Delta(\mathcal{A}) \mid \bm{p}(U,L) = 1 \text{ or } \bm{p}(U,R) = 1 \text{ or } \bm{p}(D,L) = 1 \text{ or } \bm{p}(D,R) = 1\}
$ 
be a feasible set of probability distributions such that the players must play a correlated strategy profile inducing a pure action profile in $G$. For any device $d$, the coupled constraints set $\mathcal{R}_d$ generated by $\mathcal{C}$ implies that, the feasible strategies of the game $G_d$  are the set of strategies where each player uses the same action  regardless of the outcome $\omega \in \Omega$. We then recover the game $G$ and since it does not admit a pure Nash equilibrium, the same applies to the game $G_d$.  We conclude  that there is no constrained correlated equilibrium.

\begin{figure}[H]
    \normalsize
    \centering
        \scalebox{1.2}{
        \begin{tabular}{c|c|c}
                  & \multicolumn{1}{c|}{{$L$}} & {$R$}  \\ \hline
            {$\,\,\, U \,\,\,$} & \multicolumn{1}{c|}{$(2, 2)$} & $(1, 1)$  \\ \hline
            {$\,\,\, D \,\,\,$} & \multicolumn{1}{c|}{$(3, 0)$} & $(0, 5)$
        \end{tabular}}
    \caption{Two-player game in matrix form.}
    \label{tab:2x2game}
\end{figure}
The non-existence of a constrained correlated equilibrium can also be observed in any game $G$ without a pure Nash equilibrium and such that the coupled constraints require that players play one of the  action profiles with probability one, \ie $
    \mathcal{C} = \{ \bm{p} \in \Delta(\mathcal{A}) \mid \forall \bm{a}\in \mathcal{A}, \; \bm{p}({\bm{a}})\in \{0,1\} \}
$.


This previous example shows that further assumptions are required for the existence of constrained correlated equilibria.
Before giving sufficient conditions of existence (Theorem~\ref{thm:existence} and Proposition \ref{prop:4}), we show the following technical  result (used in the proof of Theorem \ref{thm:existence}).

\begin{lemma}\label{lem:expected_utility_after_deviation}
        If $\bm{p} \in \Delta(\mathcal{A})$ and $\beta_i : {\mathcal{A}_i} \rightarrow {\mathcal{A}_i}$ then, $$u_i(\bm{z}_{\beta_i, \bm{p}}) = \sum\limits_{\bm{a} \in \mathcal{A}} \bm{p}(\bm{a}) u_i(\beta_i(a_i), \bm{a}_{-i})$$
    \end{lemma}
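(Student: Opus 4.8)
The plan is to unfold the definition of $u_i(\bm{z}_{\beta_i,\bm{p}})$ as an expectation over action profiles drawn from the distribution $\bm{z}_{\beta_i,\bm{p}}$, substitute the explicit formula \eqref{eq:probability_distribution_for_deviation_with_canonical_device} for $\bm{z}_{\beta_i,\bm{p}}(\bm{a})$, and then rearrange the resulting double sum by reindexing. Concretely, I would start from
\begin{equation*}
    u_i(\bm{z}_{\beta_i,\bm{p}}) = \sum_{\bm{a}\in\mathcal{A}} \bm{z}_{\beta_i,\bm{p}}(\bm{a})\, u_i(\bm{a}) = \sum_{\bm{a}\in\mathcal{A}} \Bigl( \sum_{b_i\in\mathcal{A}_i} \bm{p}(b_i,\bm{a}_{-i}) \mathds{1}_{\beta_i(b_i) = a_i} \Bigr) u_i(a_i,\bm{a}_{-i}).
\end{equation*}

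Next I would split the outer sum over $\bm{a}\in\mathcal{A}$ into a sum over $a_i\in\mathcal{A}_i$ and $\bm{a}_{-i}\in\mathcal{A}_{-i}$, interchange the (finite) sums so that the sum over $b_i$ comes outside, and collapse the indicator $\mathds{1}_{\beta_i(b_i)=a_i}$ by performing the sum over $a_i$: for each fixed $b_i$, the only surviving term is $a_i = \beta_i(b_i)$, which replaces $u_i(a_i,\bm{a}_{-i})$ by $u_i(\beta_i(b_i),\bm{a}_{-i})$. This leaves $\sum_{b_i\in\mathcal{A}_i}\sum_{\bm{a}_{-i}\in\mathcal{A}_{-i}} \bm{p}(b_i,\bm{a}_{-i})\, u_i(\beta_i(b_i),\bm{a}_{-i})$, which is exactly $\sum_{\bm{a}\in\mathcal{A}} \bm{p}(\bm{a})\, u_i(\beta_i(a_i),\bm{a}_{-i})$ after renaming the summation variable $b_i$ back to $a_i$ and recombining the pair $(a_i,\bm{a}_{-i})$ into $\bm{a}$.

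There is no real obstacle here; this is a routine finite-sum manipulation, and the only thing to be careful about is bookkeeping with the indices — making sure the indicator is summed against $a_i$ (not $b_i$) and that the pairing of $\bm{a}$ into $(a_i,\bm{a}_{-i})$ is done consistently. An alternative, essentially equivalent route would be to invoke Lemma~\ref{lemma:distribution_induced_by_composition}: pick any correlation device $d$ and correlated strategy profile $\bm{\alpha}$ with $\bm{p}_{\bm{\alpha}} = \bm{p}$ (e.g. the canonical device), so that $\bm{z}_{\beta_i,\bm{p}} = \bm{p}_{(\beta_i\circ\alpha_i,\bm{\alpha}_{-i})}$, and then use $u_i(\bm{p}_{\bm{\gamma}}) = \sum_{\bm{a}}\bm{p}_{\bm{\gamma}}(\bm{a})u_i(\bm{a})$ together with the definition of utilities in $G_d$ to identify $u_i(\bm{z}_{\beta_i,\bm{p}})$ with $\sum_{\omega}\bm{q}(\omega)u_i(\beta_i(\alpha_i(\omega)),\bm{\alpha}_{-i}(\omega))$ and regroup $\omega$ by the value of $\alpha_i(\omega)$; but the direct computation above is shorter and self-contained, so that is the one I would write out.
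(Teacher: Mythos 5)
Your direct computation is correct: unfolding $u_i(\bm{z}_{\beta_i,\bm{p}})=\sum_{\bm{a}}\bm{z}_{\beta_i,\bm{p}}(\bm{a})u_i(\bm{a})$, substituting \eqref{eq:probability_distribution_for_deviation_with_canonical_device}, exchanging the finite sums and collapsing the indicator over $a_i$ yields exactly $\sum_{\bm{a}}\bm{p}(\bm{a})u_i(\beta_i(a_i),\bm{a}_{-i})$, with no gaps. However, the paper takes the route you mention only as an aside and then discard: it introduces a correlated strategy profile $\bm{\alpha}^*$, writes the expected utility of the deviated profile $(\beta_i\circ\alpha_i^*,\bm{\alpha}_{-i}^*)$ in two ways --- once as $\sum_{\bm{a}}\bm{p}_{(\beta_i\circ\alpha_i^*,\bm{\alpha}_{-i}^*)}(\bm{a})u_i(\bm{a})$ and once as $\sum_{\bm{a}}\bm{p}_{\bm{\alpha}^*}(\bm{a})u_i(\beta_i(a_i),\bm{a}_{-i})$ (citing a computation from the proof of Theorem~\ref{thm:canonicalEquivalence}) --- and then invokes Lemma~\ref{lemma:distribution_induced_by_composition} to identify $\bm{p}_{(\beta_i\circ\alpha_i^*,\bm{\alpha}_{-i}^*)}$ with $\bm{z}_{\beta_i,\bm{p}_{\bm{\alpha}^*}}$. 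That argument buys a game-theoretic interpretation (the lemma is literally the statement that the deviator's expected utility can be computed from the undeviated distribution) but at the cost of importing a device, a strategy profile with $\bm{p}_{\bm{\alpha}^*}=\bm{p}$, and a cross-reference to an equation inside another proof. Your version is purely a statement about the linear functional $u_i$ on $\Delta(\mathcal{A})$ and the definition of $\bm{z}_{\beta_i,\bm{p}}$, so it is shorter, self-contained, and arguably preferable; the only bookkeeping point, which you already flag, is that the indicator must be summed against $a_i$ so that each $b_i$ contributes the single term $a_i=\beta_i(b_i)$.
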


\begin{proof}[\textbf{Proof of Lemma \ref{lem:expected_utility_after_deviation}}]
    Let $\bm{\alpha}^*$ be a correlated strategy profile. By definition, for any player $i$, the expected utility is such that,
    \begin{align}
        \sum_{\omega \in \Omega} 
            \bm{q}(\omega)
            u_i(\alpha^*_i(\omega), \bm{\alpha}_{-i}^*(\omega))
        = & \sum_{\bm{a} \in \mathcal{A}} \bm{p}_{(\alpha^*_i, \bm{\alpha}_{-i}^*)}(\bm{a}) u_i(\bm{a})
    \end{align}
    Let $\beta_i : {\mathcal{A}_i} \rightarrow {\mathcal{A}_i}$. For the correlated strategy profile $(\beta_i \circ \alpha_i^*, \bm{\alpha}_{-i}^*)$, we similarly have,
    \begin{align}
        \sum_{\omega \in \Omega} \bm{q}(\omega) u_i(\beta_i \circ \alpha^*_i(\omega), \bm{\alpha}_{-i}^*(\omega))
        = & \sum_{\bm{a} \in \mathcal{A}} \bm{p}_{(\beta_i \circ \alpha^*_i, \bm{\alpha}_{-i}^*)}(\bm{a}) u_i(\bm{a})
    \end{align}
    On the other hand, from \eqref{eq:contradiction}, we also have
    \begin{align}
        \sum_{\omega \in \Omega} \bm{q}(\omega) u_i(\beta_i \circ \alpha^*_i(\omega), \bm{\alpha}_{-i}^*(\omega))
        = & \sum_{\bm{a} \in \mathcal{A}} \bm{p}_{(\alpha^*_i, \bm{\alpha}_{-i}^*)}(\bm{a}) u_i(\beta_i(a_i), \bm{a}_{-i})
    \end{align}
    Thus, according to Lemma \ref{lemma:distribution_induced_by_composition},
    \begin{align}
        \sum_{\bm{a} \in \mathcal{A}} \bm{p}_{(\beta_i \circ \alpha^*_i, \bm{\alpha}_{-i}^*)}(\bm{a}) u_i(\bm{a}) = \sum_{\bm{a} \in \mathcal{A}} \bm{z}_{\beta_i, \bm{p}_{\bm{\alpha}^*}}(\bm{a}) u_i(\bm{a})
    \end{align}
    Therefore,
    \begin{align}
        u_i(\bm{z}_{\beta_i,\bm{p}_{\bm{\alpha}^*}}) =
        \sum_{\bm{a} \in \mathcal{A}} \bm{p}_{\bm{\alpha}^*}(\bm{a}) u_i(\beta_i(a_i), \bm{a}_{-i})
    \end{align}
\end{proof}
The following theorem shows that the convexity and compactness of the feasible set of probability distributions $\mathcal{C}$ implies the existence of constrained correlated equilibria.

\begin{theorem}\label{thm:existence}
    Let $G$ be a finite non-cooperative game and $\mathcal{C}$ a feasible set of probability distributions. 
    If $\mathcal{C}$ is non-empty, compact and convex, then a constrained correlated equilibrium of $G$ exists. 
\end{theorem}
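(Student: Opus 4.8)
The plan is to reduce the statement, via Theorem~\ref{thm:canonicalEquivalence}, to finding a single distribution in $\mathcal{C}$ and to obtain it from Kakutani's fixed-point theorem. By Theorem~\ref{thm:canonicalEquivalence}, it suffices to exhibit $\bm p^{*}\in\mathcal{C}$ such that for every player $i$ and every $\beta_{i}:\mathcal{A}_{i}\to\mathcal{A}_{i}$ with $\bm z_{\beta_{i},\bm p^{*}}\in\mathcal{C}$, inequality~\eqref{eq:defnCanonical} holds. I would first pass from deterministic deviations to randomized ones, that is, Markov kernels $\pi_{i}:\mathcal{A}_{i}\to\Delta(\mathcal{A}_{i})$, using Lemmas~\ref{lemma:distribution_induced_by_composition} and~\ref{lem:expected_utility_after_deviation}: for a fixed $\bm p$ the map $\pi_{i}\mapsto\bm z_{\pi_{i},\bm p}$ is linear, so the feasible set $F_{i}(\bm p)=\{\pi_{i}:\bm z_{\pi_{i},\bm p}\in\mathcal{C}\}$ is convex (here convexity of $\mathcal{C}$ is used), compact, and always contains the identity kernel, hence is non-empty; moreover the whole segment joining the identity kernel to any feasible kernel remains feasible. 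It is enough to find $\bm p^{*}$ satisfying the kernel version of~\eqref{eq:defnCanonical}, as deterministic deviations are a special case.

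I would then work on the non-empty compact convex set $\mathcal{C}$. For $\bm p\in\mathcal{C}$ and each $i$, set $v_{i}(\bm p)=\max_{\pi_{i}\in F_{i}(\bm p)}u_{i}(\bm z_{\pi_{i},\bm p})$, which by Lemma~\ref{lem:expected_utility_after_deviation} is the maximum of a linear functional over the convex compact set $F_{i}(\bm p)$; one always has $v_{i}(\bm p)\ge u_{i}(\bm p)$, with equality precisely when player $i$ has no profitable feasible deviation at $\bm p$, so $\bm p^{*}$ is a constrained correlated equilibrium distribution if and only if $v_{i}(\bm p^{*})=u_{i}(\bm p^{*})$ for all $i$. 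One checks that $\bm p\mapsto F_{i}(\bm p)$ has closed graph and compact values (hence is upper hemicontinuous), that it is lower hemicontinuous (using convexity of $\mathcal{C}$ and feasibility of the identity kernel, possibly after first treating the case where $\mathcal{C}$ has non-empty relative interior and then approximating a general $\mathcal{C}$ from inside by such sets), and hence that each $v_{i}$ is continuous by Berge's maximum theorem. I would then define a best-response-type correspondence $\Phi$ on $\mathcal{C}$ with non-empty convex values and closed graph --- built so that every $\bm p^{*}\in\Phi(\bm p^{*})$ satisfies $v_{i}(\bm p^{*})=u_{i}(\bm p^{*})$ for all $i$ simultaneously --- and apply Kakutani's fixed-point theorem. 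A fixed point $\bm p^{*}$ then has no profitable feasible kernel deviation, a fortiori no profitable feasible deterministic deviation, so by Theorem~\ref{thm:canonicalEquivalence} it is a constrained correlated equilibrium distribution; the identity profile on the canonical device with distribution $\bm p^{*}$ is then a constrained correlated equilibrium.

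The main obstacle is the construction and verification of the correspondence $\Phi$. Unlike in Nash's theorem, the equilibrium requirements for the different players all constrain the single object $\bm p^{*}$, so they cannot be decoupled into a product of individual best-response maps; one must aggregate the $n$ per-player conditions into one correspondence on $\mathcal{C}$ whose fixed points still satisfy all of them, while keeping the values non-empty and convex. This is delicate because the total incentive to deviate $\sum_{i}\bigl(v_{i}(\bm p)-u_{i}(\bm p)\bigr)$ at a non-equilibrium $\bm p$ can exceed any utility improvement attainable within $\mathcal{C}$, so a naive ``maximize the sum of utilities'' target may be empty-valued and must be replaced by a more careful aggregation. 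A secondary, more technical, difficulty is the continuity of $\bm p\mapsto F_{i}(\bm p)$: lower hemicontinuity can fail without further structure, which is why a Slater-type reduction together with a limiting argument (using compactness of $\mathcal{C}$ to extract a convergent subsequence of equilibria of approximating problems) may be needed. Compactness of $\mathcal{C}$ is also what ensures the maxima defining $v_{i}$ are attained and that Kakutani applies on a compact domain, while convexity is what yields the convex-valuedness; the non-convex counterexamples discussed above show that neither hypothesis can simply be dropped.
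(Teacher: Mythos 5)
Your overall strategy (reduce via Theorem~\ref{thm:canonicalEquivalence} to finding a suitable $\bm{p}^*\in\mathcal{C}$ and produce it as a fixed point) is the right one, but the proposal has a genuine gap at its center: the correspondence $\Phi$ whose fixed points would be equilibria is never constructed. You correctly identify that the $n$ per-player conditions all constrain the single object $\bm{p}^*$ and cannot be decoupled into a product of best-response maps, and you correctly observe that a naive ``maximize aggregate utility'' target may be empty-valued --- but you then stop, leaving the key object of the proof unspecified. Without an explicit $\Phi$ with non-empty convex values and closed graph whose fixed points are exactly the constrained equilibrium distributions, there is no proof. The secondary difficulty you flag, lower hemicontinuity of $\bm{p}\mapsto F_i(\bm{p})$ (needed for Berge's theorem), is likewise left unresolved; the Slater-type approximation you sketch would itself require a nontrivial limiting argument showing that limits of equilibria of the approximating problems are equilibria of the original one.

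The paper resolves exactly the obstacle you name, and does so with neither Kakutani nor Berge: it uses a Nash-style single-valued improvement map and Brouwer's theorem. For each player $i$ and each feasible \emph{deterministic} deviation $\beta_i\in\mathcal{H}_i(\bm{p})$ it sets $\varphi_{i,\beta_i}(\bm{p})=\max\{0,\,u_i(\bm{z}_{\beta_i,\bm{p}})-u_i(\bm{p})\}$ and defines $g(\bm{p})$ as the convex combination of $\bm{p}$ and the distributions $\bm{z}_{\beta_i,\bm{p}}$ with weights $\varphi_{i,\beta_i}(\bm{p})\big/\bigl(1+\sum_{j}\sum_{\alpha_j}\varphi_{j,\alpha_j}(\bm{p})\bigr)$, the residual mass staying on $\bm{p}$ itself. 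Convexity of $\mathcal{C}$ gives $g(\bm{p})\in\mathcal{C}$ --- and this is precisely where your worry that the total incentive to deviate might ``exceed any attainable improvement'' evaporates, since the normalization keeps the weights summing to one; compactness and continuity then give a Brouwer fixed point, and at a fixed point with some $\varphi_{i,\beta_i}(\bm{p}^*)>0$ one derives the contradiction $u_i(\bm{p}^*)>u_i(\bm{p}^*)$. If you want to salvage your route, the cleanest fix is to take this map as your $\Phi$ (it is single-valued, so Kakutani degenerates to Brouwer); your machinery of randomized kernel deviations is then unnecessary, since Theorem~\ref{thm:canonicalEquivalence} only requires the deterministic inequalities~\eqref{eq:defnCanonical}.
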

\begin{proof}[\textbf{Proof of Theorem \ref{thm:existence}}]
    The proof is similar to the proof of Nash's theorem \cite{nash_non-cooperative_1951} using Brouwer's fixed-point theorem.

    \noindent Let $\mathcal{C} \subseteq \Delta(\mathcal{A})$ be a non-empty compact and convex set and $\bm{p} \in \mathcal{C}$.
    For any $i \in \mathcal{N}$ 
    and 
    any 
    $\beta_i \in \mathcal{H}_i(\bm{p}) = \{ f_i : {\mathcal{A}_i} \rightarrow {\mathcal{A}_i} \mid  \bm{z}_{f_i,\bm{p}} \in \mathcal{C}\}$, define $\varphi_{i, \beta_i}:\mathcal{C}\rightarrow \mathbb{R}$ such that,
        \begin{align}
            \varphi_{i, \beta_i}(\bm{p})
            =
            \max \{ 0, u_i({\bm{z}_{\beta_i, \bm{p}}})-u_i(\bm{p}) \} 
            \label{eq:varPhi}
        \end{align}
    Furthermore, define $g: \mathcal{C} \rightarrow \mathcal{C}$ such that for any $\bm{p}\in\mathcal{C}$, 
    \begin{multline}
        g(\bm{p}) =
            \left(
                1 - 
                \frac{
                        \sum\limits_{i\in\mathcal{N}}
                        \sum\limits_{\beta_i \in \mathcal{H}_i(\bm{p})} \varphi_{i,\beta_i}(\bm{p})
                    }
                    {
                        1+\sum\limits_{i\in\mathcal{N}}\sum\limits_{\alpha_i \in \mathcal{H}_i(\bm{p})} \varphi_{i,\alpha_i}(\bm{p})
                    }
            \right)
            \times
            \bm{p}
            +
            \sum_{i \in \mathcal{N}}
            \sum_{\beta_i\in \mathcal{H}_i(\bm{p})}
            \frac{\varphi_{i, \beta_i}(\bm{p})}{1+ \sum\limits_{i\in\mathcal{N}}\sum\limits_{\alpha_i\in \mathcal{H}_i(\bm{p})}\varphi_{i,\alpha_i}(\bm{p})}
            \times
            {\bm{z}_{\beta_i, \bm{p}}}
            \label{eq:functionG}
    \end{multline}
    This function maps any probability distribution $\bm{p}\in\mathcal{C}$ to $g(\bm{p})$ defined as a convex combination of $\bm{p}$ and the probability distributions in the set $\{z_{\beta_i,\bm{p}}\}_{\beta_i\in\mathcal{H}_i(\bm{p})}$, each belonging to $\mathcal{C}$ by definition of $\mathcal{H}_i(\bm{p})$.
    Then, by convexity of $\mathcal{C}$, we have $g(\bm{p})\in\mathcal{C}$. 
    Furthermore, $g$ is continuous 
    as a composition of continuous functions (the expected utility function $u_i$ is continuous in $\bm{p}$ and Lemma \ref{lem:expected_utility_after_deviation} in the Appendix shows the continuity of the function $\bm{z}_{\beta_i,\bm{p}}$). 
    Since $\mathcal{C}$ is compact and convex, by Brouwer's fixed point theorem\footnote{
    \textbf{Brouwer's fixed point theorem \cite{florenzano2003general}.}
        \emph{Let $X$ be a nonempty compact convex subset of $\mathbb{R}^{\ell}$ and $f: X \rightarrow X$ a continuous (single-valued) mapping. Then there exists an $\bar{x} \in X$ such that $f(\bar{x})=\bar{x}$.}
    },
    $g$ has a fixed point.

    \noindent
    The following shows that a probability distribution is a constrained correlated equilibrium distribution if and only if it is a fixed point of $g$.

    \noindent
    ($\Rightarrow$)
    Assume that $\bm{p}^*$ is a constrained correlated equilibrium distribution. 
  
    \noindent 
    Theorem \ref{thm:canonicalEquivalence} implies 
    that for any $i \in \mathcal{N}$ and any $\beta_i : {\mathcal{A}_i} \rightarrow {\mathcal{A}_i}$ such that $\bm{z}_{\beta_i, \bm{p}^*} \in \mathcal{C}$, we have
    \begin{align}
        \sum\limits_{\bm{a} \in \mathcal{A}} \bm{p}^*(\bm{a}) 
        \left[
            u_{i}(a_i, \bm{a}_{-i}) - u_{i}(\beta_i(a_i), \bm{a}_{-i}) 
        \right] \geq 0
    \end{align}
    Furthermore, from Lemma \ref{lem:expected_utility_after_deviation},
    \begin{equation}
        u_i(\bm{z}_{\beta_i, \bm{p}^*}) 
        = 
        \sum_{\bm{a} \in \mathcal{A}}\bm{p}^*(\bm{a})u_i(\beta_i(a_i), \bm{a}_{-i})
    \end{equation}
    
\noindent Then,
    \begin{equation}
        \sum\limits_{\bm{a} \in \mathcal{A}}
            \bm{p}^*(\bm{a}) u_{i}(\bm{a}) 
        -
        \sum\limits_{\bm{a} \in \mathcal{A}}
            \bm{z}_{\beta_i, \bm{p}^*}(\bm{a}) u_{i}(\bm{a}) 
        \geq 0 
    \end{equation} 
    Thus, for any $\beta_i\in\mathcal{H}_i(\bm{p}^*)$, $u_i(\bm{p}^*)\geq u_i(\bm{z}_{\beta_i,\bm{p}^*})$, implying $\varphi_{i,\beta_i}(\bm{p}^*)=0$.
    Then $g(\bm{p}^*) = \bm{p}^*$, \ie $\bm{p}^*$ is a fixed point of $g$.
    \\ 

    \noindent
    ($\Leftarrow$) Conversely, assume that $\bm{p}^*$ is a fixed point of $g$. 
    \\
    
    \noindent
    \textbf{Case 1.} For any $i \in \mathcal{N}$ and any $ \beta_i \in \mathcal{H}_i(\bm{p}^*)$, $\varphi_{i,\beta_i}(\bm{p}^*) = 0$. 
    
    \noindent Then, 
    by definition of $\varphi_{i,\beta_i}$ and $\mathcal{H}_i(\bm{p}^*)$, for any $i \in \mathcal{N}$ and any $ \beta_i : {\mathcal{A}_i} \rightarrow {\mathcal{A}_i}$ s.t. $\bm{z}_{\beta_i,\bm{p}^*}\in\mathcal{C}$, we have $u_i(\bm{z}_{\beta_i, \bm{p}^*}) - u_i(\bm{p}^*) \leq 0$, implying
    \begin{equation}
        \sum_{\bm{a} \in \mathcal{A}}
        \bm{p}^*(\bm{a})[u_i(\bm{a}) - u_i(\beta_i(a_i), \bm{a}_{-i})] \geq 0    
    \end{equation}
    Thus, from Theorem \ref{thm:canonicalEquivalence}, $\bm{p}^*$ is a constrained correlated equilibrium distribution, implying that it exists a constrained correlated equilibrium. 
    \\

    \noindent
    \textbf{Case 2.} There exists $\beta_i \in \mathcal{H}_i(\bm{p}^*)$ such that $\varphi_{i, \beta_i}(\bm{p}^*) >0$. 

    \noindent
    Let
    $
    \mathcal{H}_i^+(\bm{p}^*) 
    = 
    \{
        f_i\in\mathcal{H}_i(\bm{p}^*) \mid u_i(\bm{z}_{f_i,\bm{p}^*})> u_i(\bm{p}^*)
    \}
    $ be the set of profitable feasible deviations for player $i$.  
    For each $\beta_i\in\mathcal{H}_i^+(\bm{p}^*)$, we have 
    $\varphi_{i, \beta_i}(\bm{p}^*) = u_i(\bm{z}_{\beta_i,\bm{p}^*})-u_i(\bm{p}^*) > 0$ 
    and for each 
    $\beta_i\in \mathcal{H}_i(\bm{p}^*)\backslash \mathcal{H}_i^+(\bm{p}^*)$, 
    we have 
    $\varphi_{i, \beta_i}(\bm{p}^*) = 0$.
    Then,
    \begin{multline}
        \bm{p}^*  =
            \left(
                1-
                \frac{
                    \sum\limits_{i\in\mathcal{N}}
                    \sum\limits_{\beta_i\in\mathcal{H}_i^+(\bm{p}^*)} 
                    \varphi_{i,\beta_i}(\bm{p}^*)
                }
                {
                    1+
                    \sum\limits_{i\in\mathcal{N}}
                    \sum\limits_{\alpha_i\in\mathcal{H}_i^+(\bm{p}^*)} 
                    \varphi_{i,\alpha_i}(\bm{p}^*)
                    }
            \right) 
            \times \bm{p}^* +
            \sum_{i \in \mathcal{N}}\sum_{\beta_i\in\mathcal{H}_i^+(\bm{p}^*)} \frac{\varphi_{i, \beta_i}(\bm{p})}{1+ \sum\limits_{i\in\mathcal{N}}\sum\limits_{\alpha_i\in\mathcal{H}_i^+(\bm{p}^*)}\varphi_{i,\alpha_i}(\bm{p}^*)} \times {\bm{z}_{\beta_i, \bm{p}^*}}
    \end{multline}
    Implying,
    \begin{equation}
        \sum\limits_{i\in\mathcal{N}}
        \sum\limits_{\alpha_i\in\mathcal{H}_i^+(\bm{p}^*)}
        \varphi_{i,\alpha_i}(\bm{p}^*) \times \bm{p}^* 
        =
        \sum_{i \in \mathcal{N}}\sum_{\beta_i\in\mathcal{H}_i^+(\bm{p}^*)}
        \varphi_{i, \beta_i}(\bm{p}^*)
        \times
        {\bm{z}_{\beta_i, \bm{p}^*}}
        \label{eq:fixedPoint}
    \end{equation}
    Dividing by 
    $       
        \sum\limits_{i\in\mathcal{N}}\sum\limits_{\alpha_i\in\mathcal{H}_i^+(\bm{p}^*)}\varphi_{i,\alpha_i}(\bm{p}^*)
        >0
    $
    on both sides, we obtain
    \begin{align}
        \bm{p}^* & =
        \sum_{i \in \mathcal{N}}
        \sum_{\beta_i\in\mathcal{H}_i^+(\bm{p}^*)}
        \frac{\varphi_{i, \beta_i}(\bm{p}^*)}{\sum\limits_{i\in\mathcal{N}}\sum\limits_{\alpha_i\in\mathcal{H}_i^+(\bm{p}^*)} \varphi_{i,\alpha_i}(\bm{p}^*)} \times {\bm{z}_{\beta_i, \bm{p}^*}}
    \end{align}
 
    \noindent
    Then, by linearity of $u_i$, we have
    \begin{equation}
        u_i(\bm{p}^*) 
        =
        \sum_{i \in \mathcal{N}}
        \sum_{\beta_i\in\mathcal{H}_i^+(\bm{p}^*)}
        \frac{\varphi_{i, \beta_i}(\bm{p}^*)}{\sum\limits_{i\in\mathcal{N}}\sum\limits_{\alpha_i\in\mathcal{H}_i^+(\bm{p}^*)} \varphi_{i,\alpha_i}(\bm{p}^*)}
        u_i({\bm{z}_{\beta_i, \bm{p}^*}})
        \label{eq:payoffFunction}
    \end{equation}
    Thus, 
    \begin{align}
        u_i(\bm{p}^*) - u_i(\bm{p}^*) 
        &= 
        \left(
        \sum_{i \in \mathcal{N}}
        \sum_{\beta_i\in\mathcal{H}_i^+(\bm{p}^*)}
        \frac{\varphi_{i, \beta_i}(\bm{p}^*)}{\sum\limits_{i\in\mathcal{N}}\sum\limits_{\alpha_i\in\mathcal{H}_i^+(\bm{p}^*)} \varphi_{i,\alpha_i}(\bm{p}^*)}
        u_i({\bm{z}_{\beta_i, \bm{p}^*}}) 
        \right)
        - 
        u_i(\bm{p}^*) \nonumber \\
        &= 
        \sum_{i \in \mathcal{N}}
        \sum_{\beta_i\in\mathcal{H}_i^+(\bm{p}^*)}
        \frac{\varphi_{i, \beta_i}(\bm{p}^*)}{\sum\limits_{i\in\mathcal{N}}
        \sum\limits_{\alpha_i\in\mathcal{H}_i^+(\bm{p}^*)}
        \varphi_{i,\alpha_i}(\bm{p}^*)}
        (u_i({\bm{z}_{\beta_i, \bm{p}^*}})-u_i(\bm{p}^*))  \nonumber    
    \end{align}
    Since, for each $\beta_i\in \mathcal{H}_i^+(\bm{p}^*)$, $\varphi_{i,\alpha_i}(\bm{p}^*)>0$ and $u_i({\bm{z}_{\beta_i, \bm{p}^*}})-u_i(\bm{p}^*)>0$, we have $u_i(\bm{p}^*) - u_i(\bm{p}^*)>0$, a contradiction. 
    \\ 
    

    \noindent
    Thus, if $\bm{p}^*$ is a fixed point of $g$, then for any $i \in \mathcal{N}$ and any $ \alpha_i \in \mathcal{H}_i(\bm{p}^*)$, $\varphi_{i,\alpha_i}(\bm{p}^*) = 0$, implying that $\bm{p}^*$ is a constrained correlated equilibrium distribution and the existence of a constrained correlated equilibrium.
\end{proof}

As already observed, the case without constraints is obtained by taking $\mathcal{C} = \Delta(\mathcal{A})$ which satisfies the conditions of Theorem \ref{thm:existence}, showing the existence of correlated equilibria.\\ 


\noindent 
The following result shows that any correlated equilibrium distribution in $\mathcal{C}$ is a constrained correlated equilibrium distribution.



\begin{proposition}\label{prop:4}
    Let $G$ be a finite non-cooperative game and $\mathcal{C}$ a feasible set of probability distributions. 
    If $\bm{p}^*\in\mathcal{C} \cap \mathcal{D}$ then $\bm{p}^*$ is a constrained correlated equilibrium distribution.
\end{proposition}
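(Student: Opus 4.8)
The plan is to invoke the characterization of constrained correlated equilibrium distributions given by Theorem~\ref{thm:canonicalEquivalence} and to observe that membership in the polytope $\mathcal{D}$ is a strictly stronger requirement than the one appearing there. Recall that, by \eqref{eq:correlatedEquilibriumDistributions}, a distribution $\bm{p}^*$ lies in $\mathcal{D}$ if and only if for every player $i\in\mathcal{N}$ and every map $\beta_i:\mathcal{A}_i\rightarrow\mathcal{A}_i$,
\[
    \sum_{\bm{a}\in\mathcal{A}}\bm{p}^*(\bm{a})\left[u_i(a_i,\bm{a}_{-i})-u_i(\beta_i(a_i),\bm{a}_{-i})\right]\geq 0 .
\]
The point is that this inequality holds for \emph{all} deviation maps $\beta_i$, with no feasibility restriction attached.

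First I would fix $\bm{p}^*\in\mathcal{C}\cap\mathcal{D}$, a player $i\in\mathcal{N}$, and a map $\beta_i:\mathcal{A}_i\rightarrow\mathcal{A}_i$ such that $\bm{z}_{\beta_i,\bm{p}^*}\in\mathcal{C}$. Since $\bm{p}^*\in\mathcal{D}$, the displayed inequality holds for this particular $\beta_i$ (indeed for every $\beta_i$), which is precisely the inequality \eqref{eq:defnCanonical} whose validity Theorem~\ref{thm:canonicalEquivalence} requires under the hypothesis $\bm{z}_{\beta_i,\bm{p}^*}\in\mathcal{C}$. As $i$ and $\beta_i$ were arbitrary, the implication ``$\bm{z}_{\beta_i,\bm{p}^*}\in\mathcal{C}\Rightarrow\eqref{eq:defnCanonical}$'' holds for all $i$ and $\beta_i$; combined with the hypothesis $\bm{p}^*\in\mathcal{C}$, the sufficiency ($\Leftarrow$) direction of Theorem~\ref{thm:canonicalEquivalence} applies and yields that $\bm{p}^*$ is a constrained correlated equilibrium distribution.

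There is essentially no obstacle: the whole content is the set inclusion $\{\beta_i : \bm{z}_{\beta_i,\bm{p}^*}\in\mathcal{C}\}\subseteq\{\text{all }\beta_i\}$, so that the constrained equilibrium conditions are a subset of the unconstrained correlated equilibrium conditions defining $\mathcal{D}$. The only point worth stating carefully is that Theorem~\ref{thm:canonicalEquivalence} also needs $\bm{p}^*\in\mathcal{C}$, which is exactly the other half of the hypothesis $\bm{p}^*\in\mathcal{C}\cap\mathcal{D}$. One could alternatively phrase the argument device-theoretically: take the canonical device $d_c$ with $\bm{q}_c=\bm{p}^*$ together with the identity strategy $\bm{id}$; since $\bm{p}^*\in\mathcal{D}$, the pair $(d_c,\bm{id})$ is a correlated equilibrium, and since $\bm{p}^*\in\mathcal{C}$ one has $\bm{id}\in\mathcal{R}_{d_c}$, so Corollary~\ref{prop:CEinCCE} furnishes a constrained correlated equilibrium whose distribution is $\bm{p}^*$. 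I would present the Theorem~\ref{thm:canonicalEquivalence}-based argument as the main proof and possibly mention the second as a remark.
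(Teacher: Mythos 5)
Your main argument is correct, but it takes a different route from the paper's. The paper proves this proposition at the level of strategy profiles: since $\bm{p}^*\in\mathcal{D}$, there is a correlated equilibrium $(d,\bm{\alpha}^*)$ with $\bm{p}_{\bm{\alpha}^*}=\bm{p}^*$; since $\bm{p}^*\in\mathcal{C}$, the profile $\bm{\alpha}^*$ lies in $\mathcal{R}_d$ as defined in \eqref{eq:RdInducedByFeasibleDistributions}; and Corollary~\ref{prop:CEinCCE} (itself a one-line consequence of the monotonicity result, Proposition~\ref{prop:robustnessToRestrictions}) then yields the conclusion. That is exactly the ``device-theoretic'' variant you relegate to a remark at the end. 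Your primary argument instead goes through the canonical characterization of Theorem~\ref{thm:canonicalEquivalence}: membership in $\mathcal{D}$ gives the inequality \eqref{eq:defnCanonical} for \emph{every} deviation map $\beta_i$, hence in particular for the feasible ones with $\bm{z}_{\beta_i,\bm{p}^*}\in\mathcal{C}$, and the sufficiency direction of the theorem (together with $\bm{p}^*\in\mathcal{C}$) closes the argument. Both routes are valid; yours is more natural if one thinks purely in terms of distributions, and you are right to note the quiet dependence of the $(\Leftarrow)$ direction of Theorem~\ref{thm:canonicalEquivalence} on the hypothesis $\bm{p}^*\in\mathcal{C}$ (the theorem's statement omits it but its proof uses it). The paper's route is the more elementary one, resting only on the observation that shrinking the feasible set cannot destroy an equilibrium that remains feasible, and it avoids invoking the comparatively heavy machinery behind Theorem~\ref{thm:canonicalEquivalence}. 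Since you also supply the paper's argument verbatim as your fallback, there is no gap either way.
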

\begin{proof}[\textbf{Proof of Proposition \ref{prop:4}}]
The proof is immediate from Proposition \ref{prop:CEinCCE}. Indeed, since $\bm{p}^*\in\mathcal{D}$, there exists a device $d$ and a correlated equilibrium $\bm{\alpha}^*$ such that $\bm{p}_{\bm{\alpha}^*}=\bm{p}^*$. In addition, $\bm{\alpha}^*$ is a feasible strategy in $\mathcal{R}_d =\{\bm{\alpha} \in \mathcal{S}_d \mid \bm{p}_{\bm{\alpha}} \in \mathcal{C} \}$ and $\mathcal{R}_d \subseteq \mathcal{S}_d$, then according to Proposition \ref{prop:CEinCCE}, $(d,\mathcal{R}_d, \bm{\bm{\alpha}}^*)$  is also a constrained  correlated equilibrium and thus $\bm{p}^*$ is a constrained correlated equilibrium distribution.
\end{proof}
This result also shows that a non-empty intersection between $\mathcal{C}$ and $\mathcal{D}$ implies the existence of a constrained correlated equilibrium.


\subsection{Correlation in the mixed extension}

In this section, we consider the constrained correlated equilibrium distributions of the mixed extension of $G$, denoted $\Delta G$ (see Section \ref{sec:background}), and study whether or not new equilibrium distributions are obtained.
This problem has already been considered in \cite{aumann1987correlated} and \cite{fudenberg1991game} in the case without constraints, showing that the correlated equilibrium concept ``does not require explicit randomization on the part of the players" \cite{aumann1987correlated}.
In fact, any correlated equilibrium distribution of the mixed extension of $G$ can be obtained as a correlated equilibrium distribution of $G$ by a relevant choice of correlation device and correlated strategy profile. 
We show that a similar results holds when assuming a feasible set of probability distributions $\mathcal{C}$.


Let $d$ be a correlation device and $\Delta G_d$ be the extension of $\Delta G$ by $d$. 
In $\Delta G_d$, a strategy $\gamma_i$ for player is a $\mathcal{P}_i$-measurable function\footnote{Assuming $\Delta(\mathcal{A}_i)$ equipped with the Borel $\sigma$-algebra induced by the standard subspace topology on $\Delta(\mathcal{A}_i)$.}
mapping each outcome $\omega\in\Omega$ to a mixed strategy in $\Delta(\mathcal{A}_i)$. 
For any $\omega\in\Omega$, $\gamma_{i}(\omega)$ is a probability distribution in $\Delta(\mathcal{A}_i)$ and for any $a_i\in\mathcal{A}_i$, $\gamma_{i}(\omega)(a_i)$ is the probability of player $i$ choosing action $a_i\in \mathcal{A}_i$.
The set of strategies of player $i$ in $\Delta G_d$ is
\begin{equation}
    \tilde{\mathcal{S}}_{i, d} = \{ \gamma_i : \Omega \rightarrow \Delta(\mathcal{A}_i)
    \mid 
    \gamma_i
    \text{  is }
    \mathcal{P}_i\text{-measurable} \} \label{eq:mixedCorrelatedStrategy}
\end{equation}
and for any correlated strategy profile 
$\bm{\gamma} \in \tilde{\mathcal{S}}_d = \times_{i\in\mathcal{N}}\tilde{\mathcal{S}}_{i, d}$, the utility function of player 
$i$ is 
$
    \tilde{u}_i:\tilde{\mathcal{S}}_d \rightarrow \mathbb{R}
$ 
such that,
\begin{align}
     \tilde{u}_i(\gamma_i, \bm{\gamma}_{-i})
     &=
     \sum_{\omega \in \Omega}
     \bm{q}(\omega)
     u_i(\gamma_i(\omega),\bm{\alpha}_{-i}(\omega))
     \\
     &=
     \sum_{\omega \in \Omega}
     \bm{q}(\omega)
     \sum_{\bm{a} \in \mathcal{A}}
     \Big(
        \prod_{j \in \mathcal{N}}
            \gamma_{j}(\omega)(a_j)
    \Big)
    u_i(a_i,\bm{a}_{-i})
    \\
    &= 
    \sum_{\bm{a} \in \mathcal{A}}
    \sum_{\omega \in \Omega}     
     \bm{q}(\omega)
     \Big(
        \prod_{j \in \mathcal{N}}
            \gamma_{j}(\omega)(a_j)
    \Big)
    u_i(a_i,\bm{a}_{-i})    
     \label{mix-u}
\end{align}
where the second equality is obtained by definition of the utility function in $\Delta G$.

\noindent
 Let $\bm{p}_{\bm{\gamma}}\in\Delta(\mathcal{A})$ be the probability distribution induced by $\bm{\gamma}$ such that, for any $\bm{a}\in\mathcal{A}$, 
\begin{equation}
    \label{eq:probaDistributionInduced}
    \bm{p}_{\bm{\gamma}}(\bm{a})=\sum_{\omega \in \Omega} \bm{q}(\omega) 
    \left(
        \prod_{j \in \mathcal{N}} \gamma_j(\omega)(a_j)
    \right)
\end{equation}
Remark that the probability distribution $\bm{p}_{\bm{\gamma}}$ is not the probability distribution of $\bm{\gamma}$ (by definition) and any correlated strategy profile in $\Delta G_d$ induces a distribution as defined in \eqref{eq:probaDistributionInduced}.

\noindent
For any coupled constraint set $\tilde{\mathcal{R}}_d \subseteq \tilde{\mathcal{S}}_d$, 
the triplet $(d,\tilde{\mathcal{R}}_d,\bm{\gamma}^*)$
is a constrained correlated equilibrium of $\Delta G$ if $\bm{\gamma}^*\in \tilde{\mathcal{R}}_d$ 
and
for any $i \in \mathcal{N}$, 
for any $\gamma^\prime \in \tilde{\mathcal{K}}_{i,d}(\bm{\gamma}^*_{-i})$,
\begin{equation}
  \tilde{u}_i(\gamma^*_{i}, \bm{\gamma}^*_{-i}) \geq \tilde{u}_i(\gamma_i^\prime, \bm{\gamma}^*_{-i})
\end{equation}
where $\tilde{\mathcal{K}}_{i,d} : \tilde{\mathcal{S}}_{-i, d} \rightarrow 2^{\tilde{\mathcal{S}}_{i,d}}$ is the constraint correspondence as defined in \eqref{eq:constraintCorrespondance}.

\noindent
Furthermore, for any feasible set of probability distributions $\mathcal{C}\subseteq \Delta(\mathcal{A})$, we define
\begin{equation}\label{eq:RdTildeInducedByFeasibleDistributions}
    \tilde{\mathcal{R}}_d
    = 
    \{
        \bm{\gamma}\in\tilde{\mathcal{S}}_d 
        \mid 
        \bm{p}_{\bm{\gamma}}\in\mathcal{C}
    \}
\end{equation}
and 
$\tilde{\mathcal{K}}_{i,d}(\bm{\gamma}_{-i})
= 
\{
    \gamma_i\in\tilde{\mathcal{S}}_{i,d} 
    \mid 
    (\gamma_i,\bm{\gamma}_{-i})\in\tilde{\mathcal{R}}_d
\}
$ (as in \eqref{eq:definitionKi}), 
implying, 
for any $\gamma_{-i}\in\tilde{\mathcal{S}}_{-i,d}$
\begin{equation} 
    \tilde{\mathcal{K}}_{i,d}(\bm{\gamma}_{-i}) 
    = 
    \{
        \gamma_i\in\tilde{\mathcal{S}}_{i,d} 
        \mid 
        \bm{p}_{(\gamma_i,\bm{\gamma}_{-i}))}\in\mathcal{C}
    \}
\end{equation}
Remark that in Section \ref{sec:constrainedProbabilities}, the constraint set $\mathcal{R}_d$ depends on the probability distribution of the correlated strategy profile whereas in \eqref{eq:RdTildeInducedByFeasibleDistributions} the constraint set $\tilde{\mathcal{R}}_d$ depends on the probability distribution induced by the correlated strategy profile as defined in \eqref{eq:probaDistributionInduced}.

In the next theorem, we show that if the constraints are generated by a feasible set of probability distributions, an additional independent randomization by the players is not necessary in terms of correlated equilibrium distributions, implying that the extensions of $G$ are sufficient. Before showing this result, we give the following preliminary lemma.
\begin{lemma}\label{lem:2}
    Let $d = (\Omega, (\mathcal{P}_i)_{i\in\mathcal{N}}, \bm{q})$ be a correlation device. For every  correlated strategy profile $\bm{\gamma} \in \tilde{\mathcal{S}}_{d}$, there exists a canonical correlation device $d_c$ and a correlated strategy profile ${\bm{\alpha}} \in {\mathcal{S}}_{d_c}$ such that $\bm{p}_{\bm{\alpha}}$ = $\bm{p}_{{\bm{\gamma}}}$. Furthermore, for every 
    $\alpha_i^\prime : \mathcal{A} \rightarrow \mathcal{A}_i$, there exists
    $\gamma_i^\prime : \Omega \rightarrow \Delta(\mathcal{A}_i)$
    such that
    $\bm{p}_{(\alpha_i^\prime, \bm{\alpha}_{-i})} = \bm{p}_{(\gamma_i^\prime, \bm{\gamma}_{-i})}$.
\end{lemma}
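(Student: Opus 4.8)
The plan is to construct the canonical device $d_c$ explicitly from the mixed correlated strategy profile $\bm{\gamma}$ by taking $\Omega_c = \mathcal{A}$, the canonical partitions $\mathcal{P}^c_i$, and the probability distribution $\bm{q}_c = \bm{p}_{\bm{\gamma}}$ as defined in \eqref{eq:probaDistributionInduced}. With this device, the natural candidate correlated strategy profile is the identity-type profile $\bm{\alpha} = \bm{id}$, i.e.\ $\alpha_i(\bm{a}) = a_i$ for all $\bm{a}\in\mathcal{A}$. By \eqref{eq:probabilityDistribution}, the induced distribution is $\bm{p}_{\bm{\alpha}}(\bm{a}) = \sum_{\bm{b}\in\mathcal{A}} \bm{q}_c(\bm{b})\mathds{1}_{\bm{b}=\bm{a}} = \bm{q}_c(\bm{a}) = \bm{p}_{\bm{\gamma}}(\bm{a})$, which immediately gives the first claim $\bm{p}_{\bm{\alpha}} = \bm{p}_{\bm{\gamma}}$.

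For the second claim, I would fix a deviation $\alpha_i^\prime : \mathcal{A}\rightarrow\mathcal{A}_i$ for player $i$ in the canonical extension. Since $\alpha_i^\prime$ must be $\mathcal{P}^c_i$-measurable, it factors through the $i$-th coordinate: there is a function $\beta_i:\mathcal{A}_i\rightarrow\mathcal{A}_i$ with $\alpha_i^\prime(\bm{a}) = \beta_i(a_i)$. The idea is then to define $\gamma_i^\prime:\Omega\rightarrow\Delta(\mathcal{A}_i)$ as the ``pushforward'' of $\gamma_i$ through $\beta_i$: for each $\omega\in\Omega$ and each $c_i\in\mathcal{A}_i$, set $\gamma_i^\prime(\omega)(c_i) = \sum_{a_i\in\mathcal{A}_i}\gamma_i(\omega)(a_i)\mathds{1}_{\beta_i(a_i)=c_i}$. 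This is $\mathcal{P}_i$-measurable because $\gamma_i$ is, and it is a genuine probability distribution on $\mathcal{A}_i$. It remains to check that $\bm{p}_{(\alpha_i^\prime,\bm{\alpha}_{-i})} = \bm{p}_{(\gamma_i^\prime,\bm{\gamma}_{-i})}$. The left side equals $\bm{z}_{\beta_i,\bm{p}_{\bm{\alpha}}}$ by Lemma \ref{lemma:distribution_induced_by_composition} (applied with $\alpha_i = \bm{id}$, noting $\beta_i\circ\bm{id} = \alpha_i^\prime$), hence equals $\sum_{b_i}\bm{p}_{\bm{\gamma}}(b_i,\bm{a}_{-i})\mathds{1}_{\beta_i(b_i)=a_i}$. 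The right side, expanded via \eqref{eq:probaDistributionInduced}, is $\sum_{\omega}\bm{q}(\omega)\gamma_i^\prime(\omega)(a_i)\prod_{j\neq i}\gamma_j(\omega)(a_j)$; substituting the definition of $\gamma_i^\prime$ and interchanging the finite sums yields $\sum_{b_i}\mathds{1}_{\beta_i(b_i)=a_i}\sum_{\omega}\bm{q}(\omega)\gamma_i(\omega)(b_i)\prod_{j\neq i}\gamma_j(\omega)(a_j) = \sum_{b_i}\mathds{1}_{\beta_i(b_i)=a_i}\bm{p}_{\bm{\gamma}}(b_i,\bm{a}_{-i})$, matching the left side.

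I expect the main subtlety — rather than a genuine obstacle — to be the measurability bookkeeping: one must verify that $\alpha_i^\prime$ being $\mathcal{P}^c_i$-measurable in the canonical device is exactly equivalent to factoring as $\beta_i\circ\bm{id}$, and that the constructed $\gamma_i^\prime$ inherits $\mathcal{P}_i$-measurability from $\gamma_i$ (constant on each partition cell where $\gamma_i$ is). Both are straightforward since the partition-measurability constraints are preserved by composition with a fixed function $\mathcal{A}_i\rightarrow\mathcal{A}_i$ acting on the range. The rest is a routine interchange of finite sums, so no analytic or topological difficulty arises. One small point worth stating carefully is that Lemma \ref{lemma:distribution_induced_by_composition} is stated for an arbitrary device and arbitrary $\bm{\alpha}\in\mathcal{S}_d$, so it applies verbatim to the canonical device with $\bm{\alpha}=\bm{id}$; invoking it is what lets us avoid re-deriving the identity $\bm{p}_{(\alpha_i^\prime,\bm{\alpha}_{-i})} = \bm{z}_{\beta_i,\bm{p}_{\bm{\alpha}}}$ by hand.
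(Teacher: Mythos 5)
Your proposal is correct and follows essentially the same route as the paper's proof: the same canonical device with $\bm{q}_c = \bm{p}_{\bm{\gamma}}$ and $\bm{\alpha}=\bm{id}$ for the first claim, and the same pushforward construction of $\gamma_i^\prime$ from $\gamma_i$ for the second, differing only in that you route the computation of $\bm{p}_{(\alpha_i^\prime,\bm{\alpha}_{-i})}$ through Lemma \ref{lemma:distribution_induced_by_composition} while the paper expands it directly from \eqref{eq:probabilityDistribution}. Your explicit factoring of $\alpha_i^\prime$ through $\beta_i:\mathcal{A}_i\rightarrow\mathcal{A}_i$ via $\mathcal{P}^c_i$-measurability is in fact slightly cleaner than the paper's definition of $\gamma_i^\prime$, which leaves a stray dependence on $\bm{a}_{-i}$ in the indicator.
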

\begin{proof}[\textbf{Proof of Lemma \ref{lem:2}}]
Let $d = (\Omega, (\mathcal{P}_i)_{i\in\mathcal{N}}, \bm{q})$ be a correlation device, 
$\bm{\gamma} \in \tilde{\mathcal{S}}_{d}$ a correlated strategy 
and 
$d_c = (\mathcal{A}, (\mathcal{P}'_i)_{i\in\mathcal{N}}, \bm{q}_c)$ the canonical correlation device
such that for any $\bm{a} \in \mathcal{A}$,
\begin{equation}\label{eq:distribution_canonical_device}
    \bm{q}_c(\bm{a}) = \sum_{\omega \in \Omega} \bm{q}(\omega)
\prod_{j \in \mathcal{N}} {\gamma}_j(\omega)(a_j)
\end{equation}
Furthermore, let $\bm{\alpha} : \mathcal{A} \rightarrow \mathcal{A}$ be the identity correlated strategy, \ie $\bm{\alpha} = \bm{id}$.
Then, 
for any $\bm{a} \in \mathcal{A}$,
\begin{equation}\label{eq:DistributionEquality}
    \bm{p}_{\bm{\alpha}}(\bm{a})
    = \bm{q}_c(\bm{a})     
    = \bm{p}_{\bm{\gamma}}(\bm{a}) 
\end{equation}
Therefore, there exists a canonical correlation device $d_c$ and a correlated strategy $\bm{\alpha}$ such that $\bm{p}_{\bm{\alpha}} = \bm{p}_{\bm{\gamma}}$ which concludes the first part of the proof. 
We now show the second part of the lemma. 
For any ${\alpha}^\prime_i : \mathcal{A} \rightarrow \mathcal{A}_i$, we have the probability distribution of $(\alpha^\prime_i, \bm{\alpha}_{-i}) = (\alpha^\prime_i, \bm{id}_{-i})$ such that for any $\bm{a} \in \mathcal{A}$
    \begin{align}
        \bm{p}_{(\alpha^\prime_i, \bm{\alpha}_{-i})}(\bm{a})
        & =\bm{p}_{(\alpha^\prime_i, \bm{id}_{-i})}(\bm{a})
        \\
        & = 
        \sum_{\bm{b} \in \mathcal{A}} \bm{p}_{\bm{id}}(\bm{b}) \mathds{1}_{\{(\alpha^\prime_i, \bm{id}_{-i})(\bm{b}) = \bm{a}\}} 
        \\
        & = 
        \sum_{{b}_{i} \in \mathcal{A}_i} \bm{p}_{\bm{id}}(b_i, \bm{a}_{-i}) \mathds{1}_{\{\alpha^\prime_i({b}_i, \bm{a}_{-i}) = {a}_i\}} 
        \label{eq:distribution_deviation_from_alpha_i}
    \end{align}
Let $\gamma_i^\prime:\Omega\rightarrow \Delta(\mathcal{A}_i)$ be the strategy such that for any $\omega \in \Omega$ and any $a_i \in \mathcal{A}_i$,
\begin{equation}\label{eq:gamma_i_prime}
    \gamma_i^\prime(\omega)(a_i) = \sum_{b_i \in \mathcal{A}_i} \gamma_i(\omega)(b_i) \mathds{1}_{\{ \alpha_i^\prime(b_i, \bm{a}_{-i}) = a_i \}} 
\end{equation}
The strategy $\gamma_i^\prime$ is well-defined since for any $\omega \in \Omega$ and any $a_i \in \mathcal{A}_i$, $\gamma_i^\prime(\omega)(a_i) \geq 0$ and, for any $\omega \in \Omega$,
\begin{align}
    \sum_{a_i \in \mathcal{A}_i} \gamma_i^\prime(\omega)(a_i)
    & = \sum_{a_i \in \mathcal{A}_i} \sum_{b_i \in \mathcal{A}_i} \gamma_i^\prime(\omega)(b_i) \mathds{1}_{\{ \alpha_i^\prime(b_i, \bm{a}_{-i}) = a_i \}} \\
    & = \sum_{b_i \in \mathcal{A}_i} \gamma_i^\prime(\omega)(b_i) \sum_{a_i \in \mathcal{A}_i} \mathds{1}_{\{ \alpha_i^\prime(b_i, \bm{a}_{-i}) = a_i \}} \\
    & = 1
\end{align}
where the last equality is obtained from $\sum_{a_i \in \mathcal{A}_i} \mathds{1}_{\alpha_i^\prime(b_i, \bm{a}_{-i}) = a_i} = 1$ which holds since for every $b_i \in \mathcal{A}_i$, there exists a unique $a_i \in \mathcal{A}_i$ such that $\alpha_i^\prime(b_i, \bm{a}_{-i}) = a_i$.

\noindent 
Then, we have the probability distribution $\bm{p}_{(\gamma^\prime_i, \bm{\gamma}_{-i})}$ induced by $(\gamma_i^\prime, \bm{\gamma}_{-i})$ such that, for any $\bm{a} \in \mathcal{A}$,
\begin{align}
    \bm{p}_{(\gamma^\prime_i, \bm{\gamma}_{-i})}(\bm{a})
     & = 
    \sum_{\omega \in \Omega} \bm{q}(\omega) \gamma^\prime_i(\omega)(a_i) \left( \prod_{j \neq i} \gamma_j(\omega)(a_j) \right) \\
     & = \sum_{\omega \in \Omega} \bm{q}(\omega) \sum_{b_i \in \mathcal{A}_i} \gamma_i(\omega)(b_i) \mathds{1}_{\{ \alpha_i^\prime(b_i, \bm{a}_{-i}) = a_i \}} \left( \prod_{j \neq i} \gamma_j(\omega)(a_j) \right) \\
     & = \sum_{b_i \in \mathcal{A}_i} \sum_{\omega \in \Omega} \bm{q}(\omega) \gamma_i(\omega)(b_i) \left( \prod_{j \neq i} \gamma_j(\omega)(a_j) \right) \mathds{1}_{\{ \alpha_i^\prime(b_i, \bm{a}_{-i}) = a_i \}} \label{eq:}
\end{align}
where the second line follows from using \eqref{eq:gamma_i_prime}. 

\noindent 
Furthermore, since
\begin{equation}
    \sum_{\omega \in \Omega} \bm{q}(\omega) \gamma_i(\omega)(b_i) \left( \prod_{j \neq i} \gamma_{j}(\omega)(a_j) \right) = \bm{p}_{\bm{\gamma}}(b_i, \bm{a}_{-i})
\end{equation}
Then for any $\bm{a} \in \mathcal{A}$, we have
\begin{equation}         
     \bm{p}_{(\gamma^\prime_i, \bm{\gamma}_{-i})}(\bm{a})
     = \sum_{b_i \in \mathcal{A}_i} \bm{p}_{\bm{\gamma}}(b_i, \bm{a}_{-i}) \mathds{1}_{\{ \alpha_i^\prime(b_i, \bm{a}_{-i}) = a_i \}}
\end{equation} 
Thus, from equations \eqref{eq:DistributionEquality} and  \eqref{eq:distribution_deviation_from_alpha_i}, we have
\begin{equation}
     \bm{p}_{(\gamma^\prime_i, \bm{\gamma}_{-i})}(\bm{a}) 
     = \sum_{b_i \in \mathcal{A}_i} \bm{p}_{\bm{id}}(b_i, \bm{a}_{-i}) \mathds{1}_{\{ \alpha_i^\prime(b_i, \bm{a}_{-i}) = a_i \}} 
     = \bm{p}_{(\alpha_i^\prime, \bm{\alpha}_{-i})}(\bm{a})
\end{equation} 
This result concludes the proof.
\end{proof}


\begin{theorem}\label{thm:mixedStrategies}
    Let $G$ be a finite non-cooperative game, $\mathcal{C}$ a feasible set of probability distributions, $d$ a correlation device and $\bm{\gamma}^*\in\tilde{\mathcal{S}}_d$ a correlated strategy profile.  
    If $(d,\tilde{\mathcal{R}}_d,\bm{\gamma}^*)$ is a constrained correlated equilibrium of $\Delta G$, then it exists a constrained correlated equilibrium $(d^\prime,\mathcal{R}_{d^\prime},\bm{\alpha}^*)$ of $G$ such that $\bm{p}_{\bm{\gamma}^*} = \bm{p}_{\bm{\alpha}^*}$.
\end{theorem}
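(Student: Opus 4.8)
The plan is to reduce the statement to Lemma \ref{lem:2}, which already packages the correspondence between correlated strategies of $\Delta G_d$ and of a canonical extension of $G$. First I would apply the first part of Lemma \ref{lem:2} to the correlated strategy profile $\bm{\gamma}^*$: it produces a canonical correlation device $d_c$ together with the identity correlated strategy profile $\bm{\alpha}^*=\bm{id}$ on $\mathcal{A}$ such that $\bm{p}_{\bm{\alpha}^*}=\bm{p}_{\bm{\gamma}^*}$. I would then take $d^\prime=d_c$, let $\mathcal{R}_{d^\prime}=\{\bm{\alpha}\in\mathcal{S}_{d^\prime}\mid \bm{p}_{\bm{\alpha}}\in\mathcal{C}\}$ be the coupled constraint set generated by $\mathcal{C}$ as in \eqref{eq:RdInducedByFeasibleDistributions}, and claim that $(d^\prime,\mathcal{R}_{d^\prime},\bm{\alpha}^*)$ is the desired constrained correlated equilibrium of $G$.

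Feasibility is immediate: since $(d,\tilde{\mathcal{R}}_d,\bm{\gamma}^*)$ is a constrained correlated equilibrium of $\Delta G$, we have $\bm{\gamma}^*\in\tilde{\mathcal{R}}_d$, hence $\bm{p}_{\bm{\gamma}^*}\in\mathcal{C}$, hence $\bm{p}_{\bm{\alpha}^*}\in\mathcal{C}$ and $\bm{\alpha}^*\in\mathcal{R}_{d^\prime}$. For the stability conditions of Definition \ref{defn:constrainedCE}, I would fix a player $i\in\mathcal{N}$ and an arbitrary feasible deviation $\alpha_i^\prime\in\mathcal{K}_{i,d^\prime}(\bm{\alpha}^*_{-i})$, so that $\bm{p}_{(\alpha_i^\prime,\bm{\alpha}^*_{-i})}\in\mathcal{C}$. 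The second part of Lemma \ref{lem:2} (whose hypothesis is precisely a deviation $\alpha_i^\prime$ from the identity profile on the canonical device) then yields $\gamma_i^\prime\in\tilde{\mathcal{S}}_{i,d}$ with $\bm{p}_{(\gamma_i^\prime,\bm{\gamma}^*_{-i})}=\bm{p}_{(\alpha_i^\prime,\bm{\alpha}^*_{-i})}\in\mathcal{C}$, i.e. $\gamma_i^\prime\in\tilde{\mathcal{K}}_{i,d}(\bm{\gamma}^*_{-i})$. The equilibrium condition for $\bm{\gamma}^*$ therefore gives $\tilde{u}_i(\gamma_i^*,\bm{\gamma}^*_{-i})\geq\tilde{u}_i(\gamma_i^\prime,\bm{\gamma}^*_{-i})$.

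It remains only to transport this inequality. By \eqref{mix-u}–\eqref{eq:probaDistributionInduced}, each mixed-extension utility is linear in the induced distribution: $\tilde{u}_i(\gamma_i^*,\bm{\gamma}^*_{-i})=\sum_{\bm{a}\in\mathcal{A}}\bm{p}_{\bm{\gamma}^*}(\bm{a})u_i(\bm{a})$ and $\tilde{u}_i(\gamma_i^\prime,\bm{\gamma}^*_{-i})=\sum_{\bm{a}\in\mathcal{A}}\bm{p}_{(\gamma_i^\prime,\bm{\gamma}^*_{-i})}(\bm{a})u_i(\bm{a})$; likewise the utilities in $G_{d^\prime}$ satisfy $u_i(\alpha_i^*,\bm{\alpha}^*_{-i})=\sum_{\bm{a}\in\mathcal{A}}\bm{p}_{\bm{\alpha}^*}(\bm{a})u_i(\bm{a})$ and $u_i(\alpha_i^\prime,\bm{\alpha}^*_{-i})=\sum_{\bm{a}\in\mathcal{A}}\bm{p}_{(\alpha_i^\prime,\bm{\alpha}^*_{-i})}(\bm{a})u_i(\bm{a})$. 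Combining these with $\bm{p}_{\bm{\alpha}^*}=\bm{p}_{\bm{\gamma}^*}$ and $\bm{p}_{(\alpha_i^\prime,\bm{\alpha}^*_{-i})}=\bm{p}_{(\gamma_i^\prime,\bm{\gamma}^*_{-i})}$ gives $u_i(\alpha_i^*,\bm{\alpha}^*_{-i})\geq u_i(\alpha_i^\prime,\bm{\alpha}^*_{-i})$. Since $i$ and $\alpha_i^\prime$ were arbitrary, $(d^\prime,\mathcal{R}_{d^\prime},\bm{\alpha}^*)$ is a constrained correlated equilibrium of $G$ with $\bm{p}_{\bm{\alpha}^*}=\bm{p}_{\bm{\gamma}^*}$.

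I do not expect a serious obstacle here; the only point requiring attention is aligning the quantifier structure of the second part of Lemma \ref{lem:2} (deviations from the identity correlated strategy on the canonical device) with the deviations ranging over $\mathcal{K}_{i,d^\prime}(\bm{\alpha}^*_{-i})$, and checking that the lemma indeed supplies the distributional identity $\bm{p}_{(\alpha_i^\prime,\bm{\alpha}^*_{-i})}=\bm{p}_{(\gamma_i^\prime,\bm{\gamma}^*_{-i})}$ for exactly those deviations. An alternative route would bypass Lemma \ref{lem:2} by showing directly, via Lemma \ref{lemma:distribution_induced_by_composition}, that $\bm{p}_{\bm{\gamma}^*}$ satisfies the canonical inequalities \eqref{eq:defnCanonical} and then invoking Theorem \ref{thm:canonicalEquivalence}, but the argument above is more direct.
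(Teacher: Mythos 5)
Your proposal is correct and follows essentially the same route as the paper's proof: both invoke the first part of Lemma \ref{lem:2} to produce the canonical device with the identity profile satisfying $\bm{p}_{\bm{\alpha}^*}=\bm{p}_{\bm{\gamma}^*}$, then use the second part to match each feasible deviation $\alpha_i^\prime$ with a feasible deviation $\gamma_i^\prime$ in $\Delta G_d$ having the same induced distribution, and transport the equilibrium inequality through the distributional identities. The only (cosmetic) differences are that you verify feasibility of $\bm{\alpha}^*$ up front and spell out the linearity of the utilities in the induced distribution, which the paper leaves implicit.
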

\begin{proof}[\textbf{Proof of Theorem \ref{thm:mixedStrategies}}]
    Let $\mathcal{C}$ be a feasible set of probability distributions and $(d,\mathcal{R}_d,\bm{\gamma}^*)$ be a constrained correlated equilibrium of $\Delta G$. For any $i \in \mathcal{N}$ 
    and
    any $\gamma_i^\prime \in \tilde{\mathcal{S}}_{i, d}$
    such that
    $\bm{p}_{(\gamma_i^\prime,\bm{\gamma}^*_{-i})} \in \mathcal{C}$, 
    we have, 
    \begin{equation}\label{eq:equilibriumCondition}
        \sum_{\bm{a} \in \mathcal{A}} \bm{p}_{\bm{\gamma}^*}(\bm{a}) u_i(\bm{a}) \geq \sum_{\bm{a} \in \mathcal{A}} \bm{p}_{(\gamma_i^\prime,\bm{\gamma}^*_{-i})}(\bm{a}) u_i(\bm{a})
    \end{equation}
    From Lemma \ref{lem:2} and its proof, there exists  a canonical correlation device $d_c$  and a correlated strategy profile $\bm{\alpha}^* \in \mathcal{S}_{d_c}$ such that $\bm{\alpha}^*= \bm{id}$ and $\bm{p}_{\bm{\alpha}^*} = \bm{p}_{\bm{\gamma}^*}$. 
    Then, 
    for any $\gamma_i^\prime \in \tilde{\mathcal{S}}_{i, d}$ such that $\bm{p}_{(\gamma_i^\prime,\bm{\gamma}^*_{-i})} \in \mathcal{C}$,
    \begin{equation}
        \sum_{\bm{a} \in \mathcal{A}}
        \bm{p}_{\bm{\alpha}^*}(\bm{a}) u_i(\bm{a}) \geq \sum_{\bm{a} \in \mathcal{A}} \bm{p}_{(\gamma_i^\prime,\bm{\gamma}^*_{-i})}(\bm{a}) u_i(\bm{a}) \label{eq:ineq}
    \end{equation}
    Applying again Lemma \ref{lem:2}, 
    for any $\alpha^\prime_i \in {\mathcal{S}}_{i, d_c}$ such that $\bm{p}_{(\alpha_i^\prime,\bm{\alpha}^*_{-i})} \in \mathcal{C}$, 
    it exists $\bm{\gamma}_i^\prime\in\tilde{\mathcal{S}}_{i,d}$ such that 
    $\bm{p}_{(\gamma_i^\prime,\bm{\gamma}_{-i}^*)} = \bm{p}_{(\alpha_i^\prime,\bm{\alpha}_{-i}^*)} \in\mathcal{C}$, implying,
    \begin{equation}
        \sum_{\bm{a} \in \mathcal{A}} \bm{p}_{(\alpha_i^\prime,\bm{\alpha}^*_{-i})}(\bm{a}) u_i(\bm{a})
        = 
        \sum_{\bm{a} \in \mathcal{A}} \bm{p}_{(\gamma_i^\prime,\bm{\gamma}^*_{-i})}(\bm{a}) u_i(\bm{a})
    \end{equation}
    Thus, using inequality \eqref{eq:ineq}, we obtain that
    for any $\alpha^\prime_i \in {\mathcal{S}}_{i, d_c}$ such that $\bm{p}_{(\alpha_i^\prime,\bm{\alpha}^*_{-i})} \in \mathcal{C}$, 
    \begin{equation}
        \sum_{\bm{a} \in \mathcal{A}}
        \bm{p}_{\bm{\alpha}^*}(\bm{a}) u_i(\bm{a}) 
        \geq 
        \sum_{\bm{a} \in \mathcal{A}} \bm{p}_{(\alpha_i^\prime,\bm{\alpha}^*_{-i})}(\bm{a}) u_i(\bm{a}).
    \end{equation}
It follows that $(d_c,\mathcal{R}_{d_c},\bm{\alpha^*})$ is a constrained correlated equilibrium since  $\bm{p}_{\bm{\alpha}^*}= \bm{p}_{\bm{\gamma}^*}\in \mathcal{C}$.
 \end{proof}

The latter theorem shows that the set of constrained correlated equilibrium distributions of $\Delta G$ is a subset of the set of constrained correlated equilibrium distributions of $G$. 
The following example shows that this inclusion can be strict.
\medskip

\noindent \textbf{Example.} 
Consider the game in Figure \ref{tab:tab2} and let $\mathcal{C} = \{ \bm{q} \in \Delta(\mathcal{A}) \mid \bm{q}(A,C) >0 \}$ be the feasible set of probability distributions and $d = (\Omega, (\mathcal{P}_i)_{i\in\mathcal{N}}, \bm{q})$ a correlation device such that $\Omega = \{\omega\}$.
\begin{figure}[H]
    \normalsize 
    \centering
    \scalebox{1.2}{
        \begin{tabular}{c|c|c}
                & \multicolumn{1}{c|}{$C$}  & $D$  \\ \hline
            $\,\,\,A\,\,\,$ & \multicolumn{1}{c|}{$(0, 0)$} & $(1, 2)$  \\ \hline
            $\,\,\,B\,\,\,$ & \multicolumn{1}{c|}{$(2, 1)$} & $(3, 0)$
        \end{tabular}}
    \caption{Two-player game.}
    \label{tab:tab2}
\end{figure}
\noindent Let $\bm{\alpha}^* = (\alpha^*_1, \alpha^*_2)$ be a correlated strategy profile such that $\alpha_1^*(\omega) = A$ and $\alpha_2^*(\omega)= C$. We have $\bm{\alpha}^* \in \mathcal{R}_{d} = \{ \bm{\alpha} \in \mathcal{S}_d \mid \bm{p}_{\bm{\alpha}} \in \mathcal{C} \}$ and the triplet $(d, \mathcal{R}_d, \bm{\alpha}^*)$ is a constrained correlated equilibrium of $G$. We show that no triplet $(d^\prime, \tilde{\mathcal{R}}_{d^\prime}, \bm{\gamma}^\prime)$ where $\tilde{\mathcal{R}}_{d^\prime} = \{ {\bm{\gamma}} \in \tilde{\mathcal{S}}_d \mid \bm{p}_{{\bm{\gamma}}} \in \mathcal{C} \}$ such that $\bm{p}_{\bm{\gamma}^\prime} = \bm{p}_{\bm{\alpha}^*}$, is not a constrained correlated equilibrium  of $\Delta G$.\\
Let $d^\prime = (\Omega^\prime, (\mathcal{P}^\prime_i)_{i\in\mathcal{N}}, \bm{q}^\prime)$ be a correlation device and $\bm{\gamma}^\prime \in\tilde{\mathcal{S}}_{d^\prime}$ a correlated strategy profile such that $\bm{p}_{\bm{\gamma}^\prime} = \bm{p}_{\bm{\alpha}^*}$. Thus, $\bm{\gamma}^\prime$ satisfies for every $\omega^\prime \in \Omega^\prime$,
\begin{equation}
    \gamma_1^\prime(\omega^\prime) = (1 \cdot A, 0 \cdot B) \text{ and } \gamma_2^\prime(\omega^\prime) = ({1} \cdot C, 0 \cdot D).
\end{equation}
Players' utilities using $\bm{\gamma}^\prime$ are $(u_1(\bm{\gamma}^\prime),u_2(\bm{\gamma}^\prime)) = (0,0)$.
Consider now an alternate strategy for player 1 $\bar{\gamma}_1 \in \tilde{\mathcal{S}}_{1,d}$ such that for any $\omega^\prime \in \Omega^\prime$,
\begin{equation}
    \bar{\gamma}_1(\omega^\prime) = (\sfrac{1}{2} \cdot A, \sfrac{1}{2} \cdot B)
\end{equation} 
Then the utility of player 1 will be $u_1(\bar{\gamma}_1,\gamma_2^\prime) = 1 > 0$.  Notice  that    the probability distribution $\bm{p}_{(\bar{\gamma}_1, \gamma^\prime_2)}$ induced by $(\bar{\gamma}_1,\gamma_2^\prime)$ is in $\mathcal{C}$, implying that $(\bar{\gamma}_1,\gamma_2^\prime)\in\tilde{\mathcal{R}}_{d^\prime}$ Then the triplet $(d^\prime, \tilde{\mathcal{R}}_{d^\prime}, \bm{\gamma}^\prime)$ is not a constrained correlated equilibrium since $\bar{\gamma}_1$  is a profitable feasible deviation for player $1$. 

In this example, although the probability distribution $\bm{p}_{\bm{\gamma}^\prime}$ is a constrained correlated equilibrium distribution of $G$, it is not a constrained correlated equilibrium distribution of $\Delta G$.

\section{Numerical experiments}\label{sec:illustration}
In this section, we provide numerical results supporting the relevance of the concept of constrained correlated equilibria and show some of its properties in the case of constraints on probability distributions studied in Section~\ref{sec:constrainedProbabilities}. 

Let $G$ be the game of Chicken shown in Figure~\ref{fig:chickenGame}. 
The set of correlated equilibrium distributions of $G$ and corresponding utilities are shown in Figure~\ref{fig:polytope}.
Let $\mathcal{C}$ be the feasible set of probability distributions guaranteeing some level of social welfare \cite{nash1950bargaining} such that,
\begin{align}\label{eq:FeasibleSet}
    \mathcal{C} = \{ \bm{q} \in \Delta(\mathcal{A}) \mid \sum_{i\in\mathcal{N}} u_i(\bm{q}) \geq SW_{min} \}
\end{align}
where $SW_{min}\in\mathbb{R}$ is interpreted as a minimum level of social welfare.
We show in Figure~\ref{fig:SW_12} (a) the set of correlated equilibrium distributions (the polytope of correlated equilibrium distributions in yellow is not visible due to the superposition of the sets but it can be seen in Figure~\ref{fig:polytope} and Figure~\ref{fig:SW_14}), the set of feasible distributions (green) and the set of constrained correlated equilibrium distributions (red) for the constraints $\mathcal{C}$ with $SW_{min} = 12$. Figure~\ref{fig:SW_12} (b) displays the corresponding set of pairs of utilities.

\begin{figure}
    \begin{subfigure}[t]{0.5\textwidth}
        \centering
        \hspace*{-0.5cm}\includegraphics[scale=0.11]{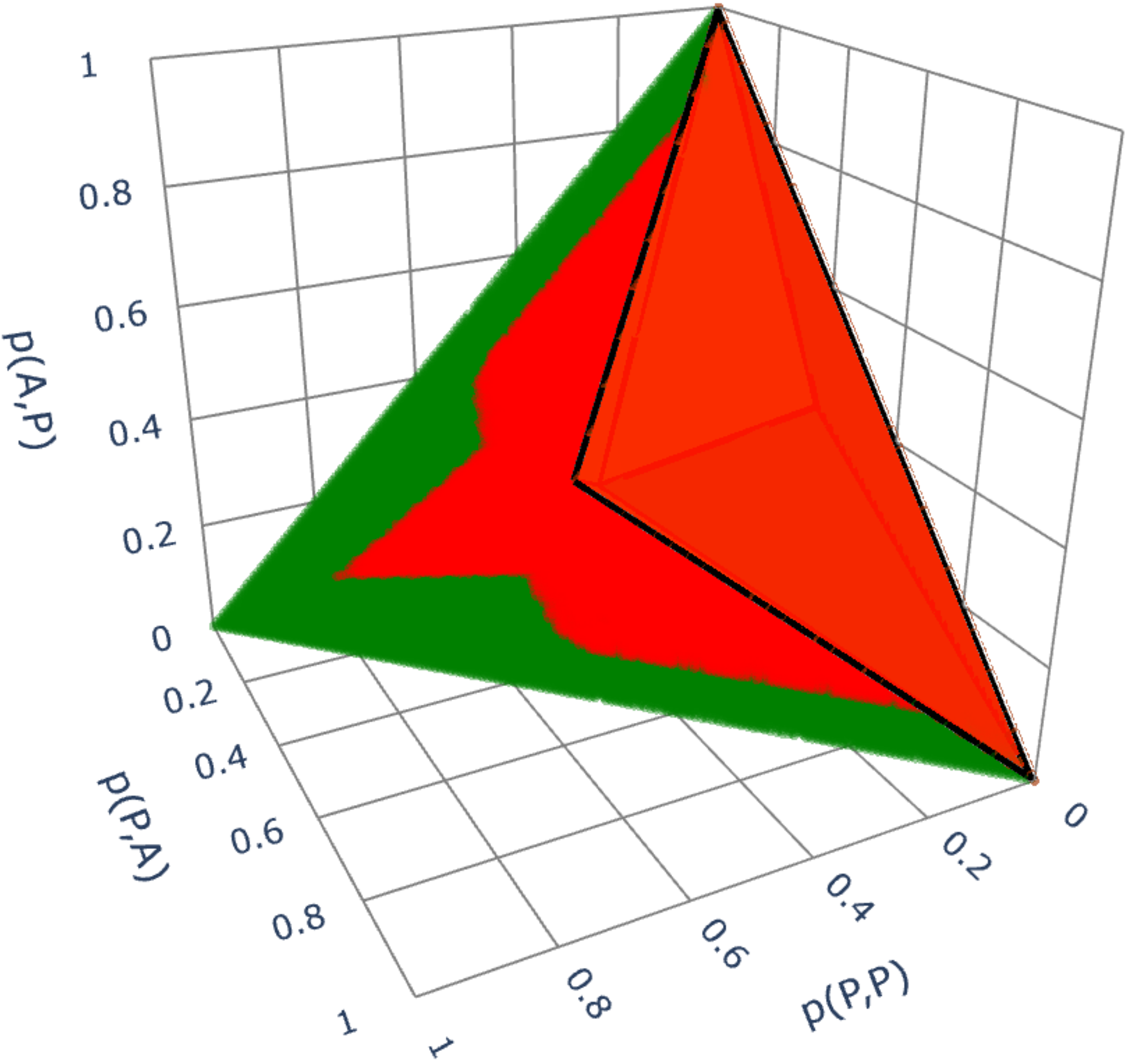}
        \caption{Probability distributions}
  \end{subfigure}
  \begin{subfigure}[t]{0.5\textwidth}
      \centering
      \hspace*{0.1cm}\includegraphics[scale=0.26]{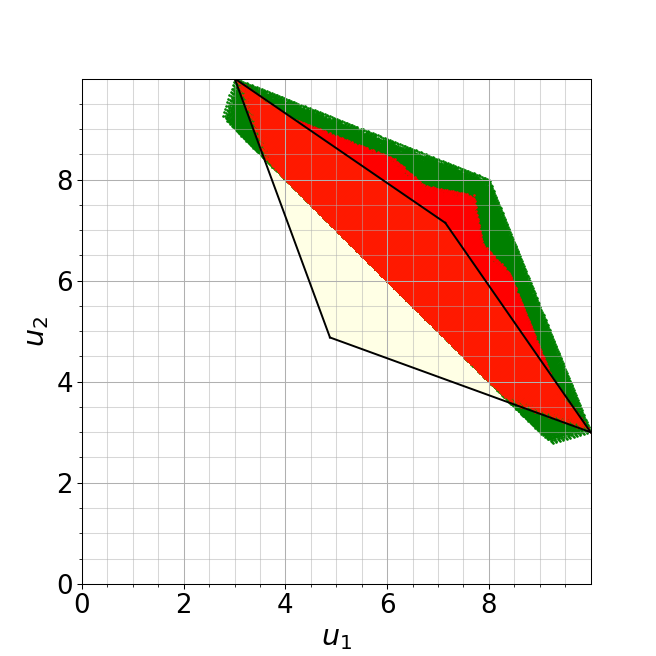}
      \caption{Utilities} 
  \end{subfigure}
  \caption{(a) Sets of correlated equilibria $\mathcal{D}$ (yellow), feasible distributions $\mathcal{C}$ with $SW_{min} = 12$ (green) and constrained correlated equilibrium distributions (red). (b) Utilities induced by correlated equilibria (yellow), feasible distributions (green) and constrained correlated equilibria (red).}
  \label{fig:SW_12}
\end{figure}

First, Figure~\ref{fig:SW_12} (a) shows that unlike the set of correlated equilibrium distributions, the set of constrained correlated equilibrium distributions is not necessarily convex.
Second, there are correlated equilibrium distributions in the polytope and the feasible set. These distributions are constrained equilibrium distributions as shown by Proposition~\ref{prop:4}.
Hence, the set of constrained correlated equilibria contains the intersection of the polytope and the feasible region.
Third, there are constrained correlated equilibrium distributions outside the set of correlated equilibrium distributions, showing that constraints (in this case) stabilize some probability distributions with at least one player having a profitable deviation.
Taking the constraint into account, it appears that for such constrained correlated equilibrium distributions, although a unilateral deviation is profitable for a player, the resulting probability distribution decreases the social welfare below the threshold $SW_{min}$ implying that 
such deviation is not feasible.
Indeed, in spite of increasing the utility of the player, this deviation induces a distribution decreasing the utility of other players such that the social welfare is below $SW_{min}$.

Figure~\ref{fig:SW_12} (b) shows that some constrained correlated equilibrium distributions induce a payoff $(7.70, 7.70)$ which is strictly greater than the maximum achievable without constraints $(7.14, 7.14)$ for each player.


\begin{figure}
    \begin{subfigure}[t]{0.5\textwidth}
   \centering
    \hspace*{-0.5cm}\includegraphics[scale=0.11]{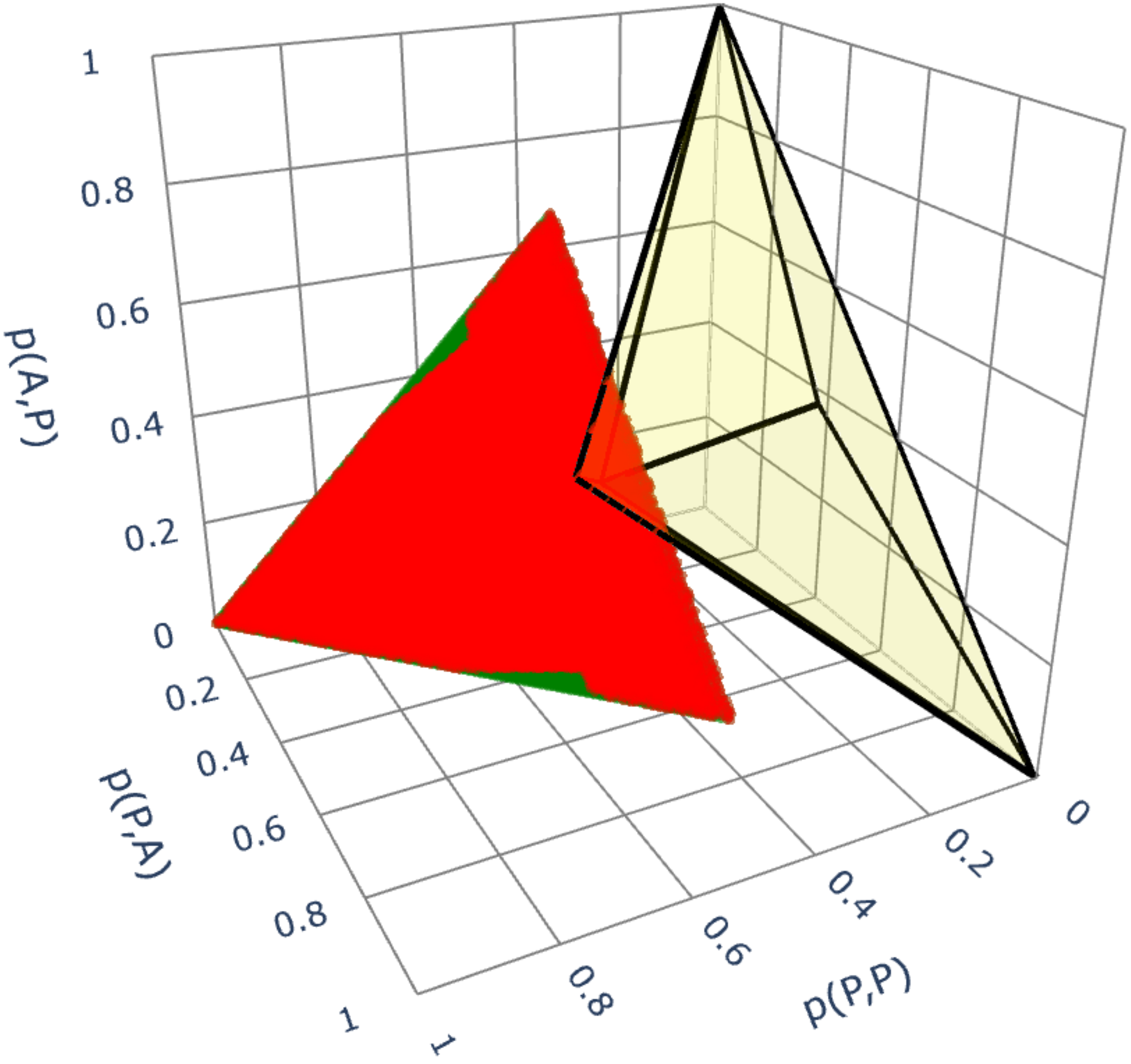}
    \caption{Probability distributions}
  \end{subfigure}
  \begin{subfigure}[t]{0.5\textwidth}
  \centering
    \hspace*{0.1cm}\includegraphics[scale=0.26]{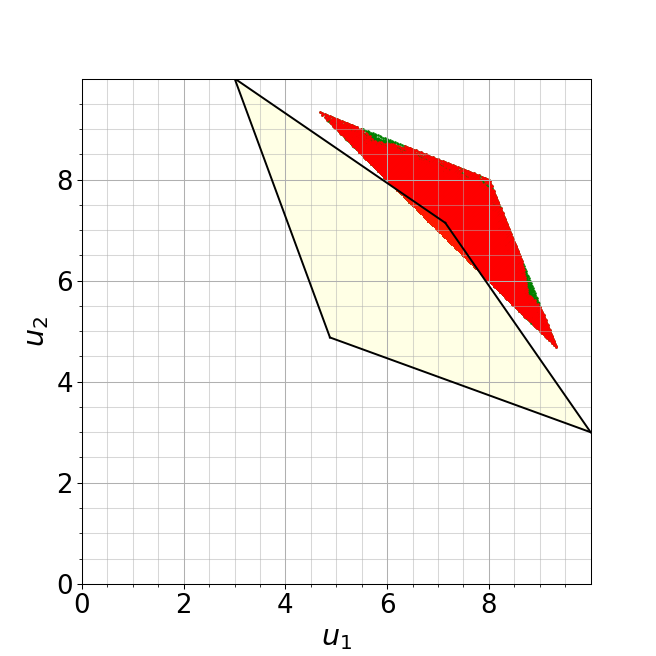}
    \caption{Set of induced payoffs} 
  \end{subfigure}
  \caption{(a) Set of correlated equilibria $\mathcal{D}$ (yellow), feasible distributions $\mathcal{C}$ with $SW_{min} = 14$ (green) and constrained correlated equilibria (red). (b) Set of payoffs induced by correlated equilibria (yellow), feasible distributions (green) and constrained correlated equilibria (red).}
  \label{fig:SW_14}
\end{figure}
Figure~\ref{fig:SW_14} shows the probability distributions and corresponding utilities for a higher minimum social welfare $SW_{min}=14$ inducing a feasible set of probability distributions included in the feasible set of probability distributions obtained with $SW_{min}=12$. According to Corollary~\ref{cor:SmallerConstraints},
this inclusion implies that any probability distribution which is a constrained correlated equilibrium distribution for $SW_{min} = 12$ and feasible for $SW_{min} = 14$ is also a constrained correlated equilibrium distribution for $SW_{min} = 14$. Furthermore, we note that the action profile $(P,P)$ is a constrained correlated equilibrium which was not the case in the previous example. 
This constrained correlated equilibrium yielding a utility $(8, 8)$ is also a generalized Nash equilibrium.
\begin{figure}
     \begin{subfigure}[t]{0.5\textwidth}
   \centering
    \hspace*{-0.5cm}\includegraphics[scale=0.11]{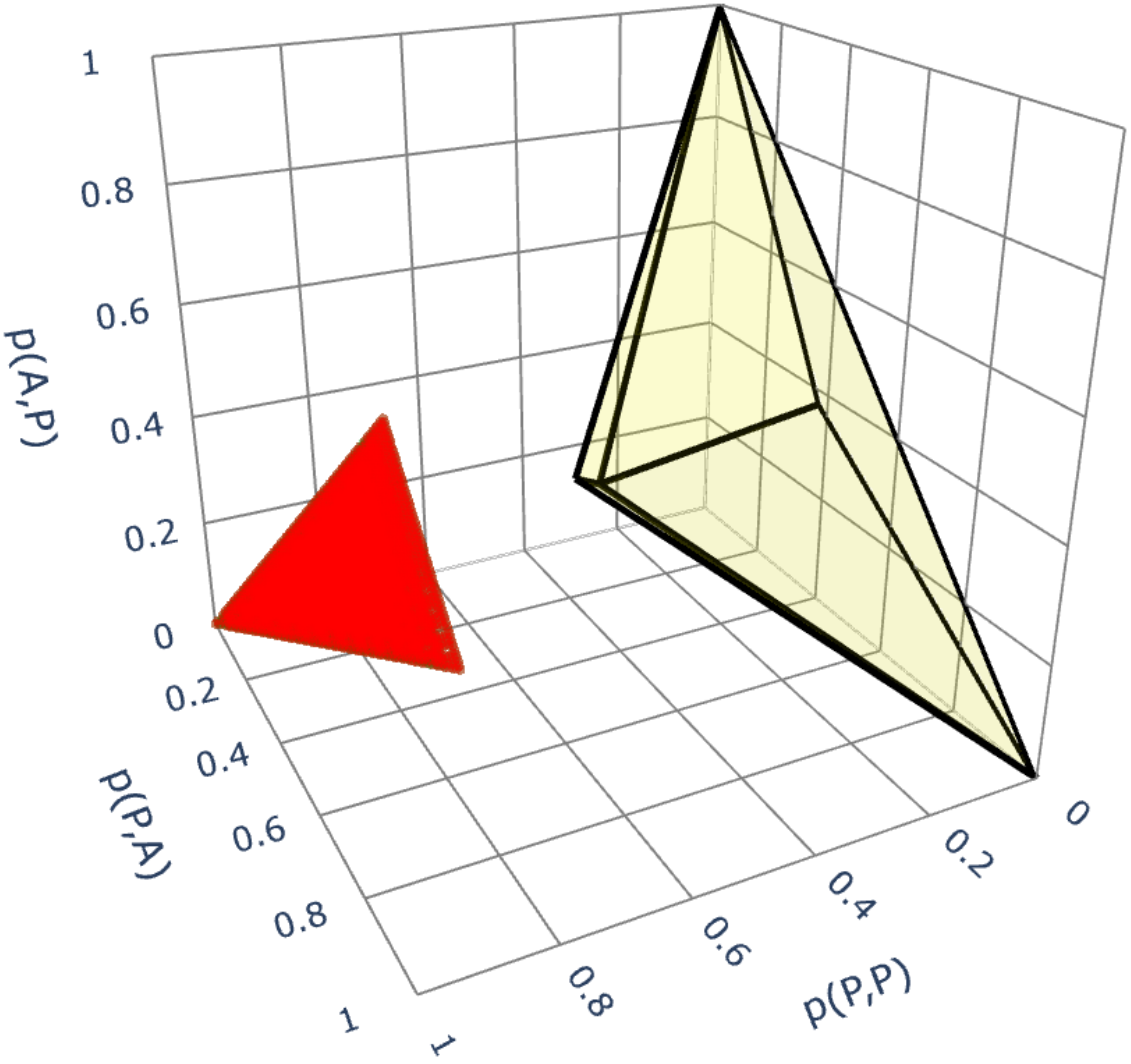}
    \caption{Probability distributions}
  \end{subfigure}
  \begin{subfigure}[t]{0.5\textwidth}
  \centering
    \hspace*{0.1cm}\includegraphics[scale=0.26]{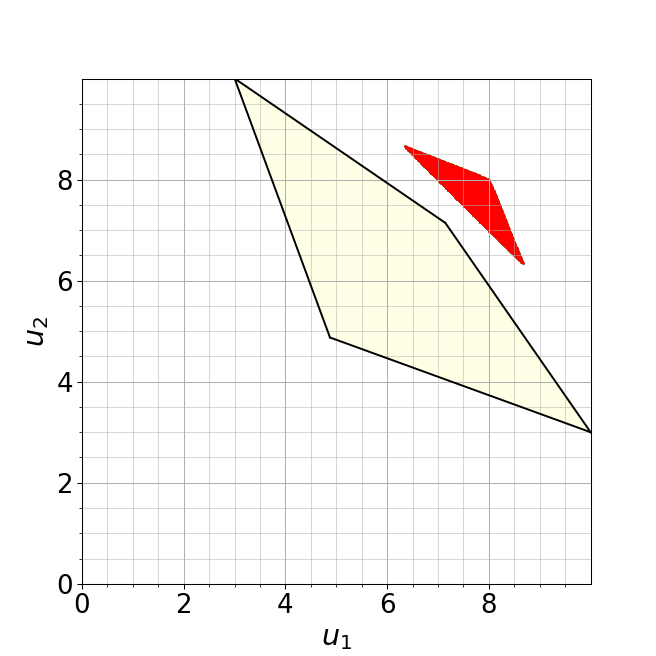}
    \caption{Utilities} 
  \end{subfigure}
  \caption{(a) Set of correlated equilibria $\mathcal{D}$ (yellow), feasible distributions $\mathcal{C}$ with $SW_{min} = 15$ (green) and constrained correlated equilibria (red). (b) Set of payoffs induced by correlated equilibria (yellow), feasible distributions (green) and constrained correlated equilibria (red).}
  \label{fig:SW_15}
\end{figure}

Finally, setting the threshold to $SW_{min} = 15$ gives a feasible set of probability distributions such that the intersection with the polytope of correlated equilibrium distributions is empty. 
In this case, the set of constrained correlated equilibrium probability distributions only consists of feasible distributions such that at least one player can improve her utility by playing an infeasible deviation as shown in Proposition~\ref{prop:alternativeDefinition}.

\section{Conclusion}\label{sec:conclusions}
This paper defines the concept of constrained correlated equilibrium for finite non-cooperative games generalizing correlated equilibria. In the general case of arbitrary constraints, sufficient equilibrium conditions are proved.  Among other results, it is shown that any correlated equilibrium satisfying the constraints,  is a constrained correlated equilibrium.
In the particular case of constraints generated by a feasible set of probability distributions over action profiles, we have shown that canonical devices are sufficient to characterize the set of constrained correlated equilibrium distributions (as in the case without constraints). 
In terms of existence, it is shown that convexity and compactness of the feasible set of probability distributions imply the existence of constrained correlated equilibria. 
Furthermore, it is shown using examples that the constrained correlated equilibrium distributions may not belong to the polytope of correlated equilibrium distributions.
Finally, we have shown that the set of constrained correlated equilibrium distributions of the mixed extension is a subset of the set of equilibrium distributions of the game in pure strategies.\\ 

Future directions of research include extensions to infinite games and more general correlation devices as well as a detailed analysis of the connection to Bayesian rationality. 
In fact, as a solution concept, correlated equilibrium can be justified as "an Expression of Bayesian Rationality" \cite{aumann1987correlated} but
it is still unclear if this result also holds for constrained correlated equilibria. 
Furthermore, the existence problem should be studied for weaker or alternative assumptions.

From the learning perspective, the approachability of the set of constrained correlated equilibria should be studied as well as the relevance of the concept for learning with constraints, particularly with respect to learning dynamics as correlated equilibria for regret-based learning \cite{hart_adaptive_2005}.  Indeed, the learning developed in the literature is based on the concept of regret that uses  only part of the   deviation thanks to the equivalent definition of correlated equilibrium given by  (\ref{eq:expostconditon}). 
 This is no longer valid in the presence of coupled constraints, which calls for a study to exploit existing algorithms to approach the set of constrained correlated equilibria.  


\bibliographystyle{abbrv}
\bibliography{bibliography}



\section*{Appendix}


\begin{lemma}\label{lem:continuity}
   Let $G = (\mathcal{N}, (\mathcal{A}_i)_{i \in \mathcal{N}}, (u_i)_{i \in \mathcal{N}})$ be a finite non-cooperative game. The function $\bm{z}_{\beta_i, \bm{p}}$ is continuous in $\bm{p}$. 
\end{lemma}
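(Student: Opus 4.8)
The plan is to observe that, for a fixed deviation function $\beta_i : \mathcal{A}_i \to \mathcal{A}_i$, the map $\bm{p} \mapsto \bm{z}_{\beta_i, \bm{p}}$ is a linear map between finite-dimensional real vector spaces, hence continuous. First I would fix $\beta_i$ and, for each action profile $\bm{a} \in \mathcal{A}$, read off from \eqref{eq:probability_distribution_for_deviation_with_canonical_device} that
\begin{equation}
    \bm{z}_{\beta_i, \bm{p}}(\bm{a}) = \sum_{b_i \in \mathcal{A}_i} \mathds{1}_{\beta_i(b_i) = a_i}\, \bm{p}(b_i, \bm{a}_{-i}),
\end{equation}
i.e. the $\bm{a}$-component of $\bm{z}_{\beta_i, \bm{p}}$ is a finite sum of components of $\bm{p}$ with constant coefficients $\mathds{1}_{\beta_i(b_i) = a_i} \in \{0,1\}$ that do not depend on $\bm{p}$. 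Hence each coordinate function $\bm{p} \mapsto \bm{z}_{\beta_i, \bm{p}}(\bm{a})$ is linear in $\bm{p} \in \mathbb{R}^{|\mathcal{A}|}$, and a vector-valued function all of whose coordinates are linear is itself linear; every linear map between finite-dimensional normed spaces is continuous, which yields the claim on $\mathbb{R}^{|\mathcal{A}|}$ and in particular on the simplex $\Delta(\mathcal{A})$.

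If one prefers to avoid invoking that general fact, an explicit Lipschitz estimate does the job directly: for $\bm{p}, \bm{p}' \in \Delta(\mathcal{A})$ and any $\bm{a} \in \mathcal{A}$,
\begin{equation}
    |\bm{z}_{\beta_i, \bm{p}}(\bm{a}) - \bm{z}_{\beta_i, \bm{p}'}(\bm{a})| \leq \sum_{b_i \in \mathcal{A}_i} |\bm{p}(b_i, \bm{a}_{-i}) - \bm{p}'(b_i, \bm{a}_{-i})| \leq |\mathcal{A}_i|\, \|\bm{p} - \bm{p}'\|_\infty,
\end{equation}
so $\|\bm{z}_{\beta_i, \bm{p}} - \bm{z}_{\beta_i, \bm{p}'}\|_\infty \leq |\mathcal{A}_i|\,\|\bm{p} - \bm{p}'\|_\infty$, which is continuity.

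There is essentially no obstacle here; the statement is a routine consequence of finiteness and linearity. The only point requiring a little care is the bookkeeping of which quantities are held fixed ($\beta_i$, and within each coordinate the tail profile $\bm{a}_{-i}$) versus varying ($\bm{p}$), and the observation that all index sets involved are finite, so the sums are genuine finite linear combinations and the finite-dimensional linear-algebra facts apply.
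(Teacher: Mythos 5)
Your proof is correct and takes essentially the same route as the paper: the paper also establishes continuity via an explicit Lipschitz bound, namely $\|\bm{z}_{\beta_i,\bm{p}}-\bm{z}_{\beta_i,\bm{p}'}\|\le |\mathcal{A}_i|\sqrt{|\mathcal{A}|}\,\|\bm{p}-\bm{p}'\|$ in the Euclidean norm, which matches your sup-norm estimate up to the usual equivalence of norms on $\mathbb{R}^{|\mathcal{A}|}$. Your opening observation that $\bm{p}\mapsto\bm{z}_{\beta_i,\bm{p}}$ is a linear map with $\{0,1\}$ coefficients is a slightly more conceptual shortcut the paper does not state explicitly, but it leads to the same conclusion.
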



\begin{proof}[\textbf{Proof of Lemma \ref{lem:continuity}}]
To show that the function $\bm{z}_{\beta_i, \bm{p}}$ is continuous in $\bm{p}$, it is sufficient to show that it is $K$-Lipschitz where $K=|\mathcal{A}_i| \sqrt{|\mathcal{A}|} $ with $|\mathcal{B}|$ denoting the cardinality of a set $\mathcal{B}$.  
We have for all $\bm{p}$ and $\bm{p}^\prime$ in $\Delta(\mathcal{A})$
\begin{align}\label{eq:maxUtility}
        ||\bm{p} - \bm{p}^\prime||
        =
        \sqrt{\sum_{\bm{a} \in \mathcal{A}} [\bm{p}(\bm{a}) - \bm{p}^\prime(\bm{a})]^2}
        \geq
         \max_{\bm{a} \in \mathcal{A}}|\bm{p}(\bm{a})-\bm{p}^\prime(\bm{a})|
    \end{align}
    where  $||.||$ is the Euclidean norm. We have
 \begin{align}
        ||\bm{z}_{\beta_i,\bm{p}} - \bm{z}_{\beta_i,\bm{p}^\prime}|| 
                            & = \sqrt{\sum_{\bm{a} \in \mathcal{A}}[\bm{z}_{\beta_i,\bm{p}}(\bm{a}) - \bm{z}_{\beta_i,\bm{p}^\prime}(\bm{a})]^2}\\
                            & = \sqrt{\sum_{\bm{a} \in \mathcal{A}}\left[\sum_{b_i \in \mathcal{A}_i} \left(\bm{p}(b_i, \bm{a}_{-i}) - \bm{p}^\prime(b_i, \bm{a}_{-i})\right)\mathds{1}_{\beta_i(b_i) = a_i}\right]^2}\\
                            &\leq
                            \sqrt{\sum_{\bm{a} \in \mathcal{A}}\left[\sum_{b_i \in \mathcal{A}_i} \max_{\bm{a}^\prime} |\bm{p}(\bm{a}^\prime)-\bm{p}^\prime(\bm{a}^\prime)|\right]^2}\\
                            &\leq
                            \sqrt{\sum_{\bm{a} \in \mathcal{A}}\left[\sum_{b_i \in \mathcal{A}_i} ||\bm{p} - \bm{p}^\prime||\right]^2}\label{norm}\\
                            &= | \mathcal{A}_i| \sqrt{|\mathcal{A}|} \times ||\bm{p} - \bm{p}^\prime||,
    \end{align}
   where the inequality in (\ref{norm}) holds from (\ref{eq:maxUtility}). This concludes the proof of the Lemma.
\end{proof}

\ifCLASSOPTIONcaptionsoff
  \newpage
\fi

\end{document}